\tikzstyle{state}=[draw,circle,minimum size=0.5cm]
\tikzstyle{automaton}=[>=latex',line join=bevel,initial text={},node distance=1.5cm,shorten >=1pt,every path/.style={->}]
\newcommand\mepast[1]{}
\newcommand\hrpast[1]{}
\newcommand\hcpast[1]{}
\newcommand\dfpast[1]{}
\newcommand\commentout[1]{}
\newcommand{\lstar}{L\textsuperscript{*}}
\newcommand{\SFour}{S\textsuperscript{4}\xspace}
\newcommand{\st}{s.\,t.\ }
\newcommand{\newbase}[1]{\ensuremath{\nabla'(#1)}}
\newcommand{\prefixes}[1]{\ensuremath{\text{prefixes}(#1)}}
\newcommand{\suffixes}[1]{\ensuremath{\text{suffixes}(#1)}}
\newcommand{\set}[1]{\ensuremath{\mathbb{#1}}}
\newcommand\X{\operatorname{\mathbf{X}}}
\newcommand\G{\operatorname{\mathbf{G}}}
\def\ialphit{\ensuremath{\mathit{\Sigma}}}
\def\oalphit{\ensuremath{\mathit{\Gamma}}}
\def\ialph{\ensuremath{\mathrm{\Sigma}}}
\def\oalph{\ensuremath{\mathrm{\Gamma}}}
\def\sym{\ensuremath{x}}
\def\isym{\ensuremath{\sigma}}
\def\osym{\ensuremath{\gamma}}
\def\w{\ensuremath{w}}
\def\iw{\ensuremath{v}}
\def\ow{\ensuremath{w}}
\def\qm{\texttt{?}}
\newcommand{\claimref}[1]{\textbf{#1}} 
\newcommand{\project}[2]{\ensuremath{{#1}_{\mathbin{\downharpoonright}{#2}}}}
\newcommand{\given}[2]{\ensuremath{{#1}_{\mathbin{\vert}{#2}}}}
\newcommand{\projectL}[1]{\project{L}{#1}}
\newcommand{\givenL}[1]{\given{L}{#1}}
\newcommand{\tab}[1]{\ensuremath{\mathrm{#1}}}
\newcommand{\emptysymb}{\ensuremath{\_\_}}
\newtheorem{theorem}{Theorem}[section]
\newtheorem{proposition}[theorem]{Proposition}
\newtheorem{corollary}[theorem]{Corrolary} 
\newtheorem{lemma}[theorem]{Lemma} 
\newtheorem{claim}[theorem]{Claim}
\def\imagetop#1{\vtop{\null\hbox{#1}}}
\newacro{DFA}{deterministic finite automaton}
\newacro{MAT}{minimally adequate teacher}
\newacro{SUL}{system under learning}
\newacro{LTL}{Linear Temporal Logic}
\newacro{OGIS}{Oracle Guided Inductive Synthesis}
\newacro{S3}{S\textsuperscript{3}}
\newcommand{\sema}[1]{\ensuremath{{\llbracket}#1{\rrbracket}}}
\newcommand{\semainf}[1]{\ensuremath{{\llbracket}#1{\rrbracket}_\omega}}
\newcommand{\semafin}[1]{\ensuremath{{\llbracket}#1{\rrbracket}_*}}
\newcommand{\biword}[2]{\ensuremath{\left[\begin{smallmatrix} #1 \\ #2 \end{smallmatrix}\right]}}
\newcommand{\tuple}[1]{\ensuremath{\langle #1 \rangle}}
\newcommand{\tree}[1]{\ensuremath{\mathcal{#1}}}
\newcommand{\aut}[1]{\ensuremath{\mathcal{#1}}}
\newcommand{\concTrees}{\ensuremath{\textsl{concTrees}}}
\newcommand{\symbTrees}{\ensuremath{\textsl{symbTrees}}}
\newcommand{\bools}{\ensuremath{\mathcal{B}}}
\newcommand{\specialsymbol}{\star}
\newcommand{\sytb}[1]{\mathcal{#1}}
\newcommand{\true}{\ensuremath{\textsl{true}}}
\newcommand{\false}{\ensuremath{\textsl{false}}}
\newcommand{\smq}{\ensuremath{\textsc{smq}}}
\newcommand{\scq}{\ensuremath{\textsc{scq}}}
\newcommand{\eq}{\ensuremath{\textsc{eq}}}
\newcommand{\mq}{\ensuremath{\textsc{mq}}}
\newcommand{\method}[1]{\ensuremath{\textsl{#1}}}
\newcommand{\isClosed}{\method{isClosed}}
\newcommand{\extractAut}{\method{extractTransducer}}
\newcommand{\fillTb}{\method{fill}}
\newcommand{\Split}{\method{split}}
\newcommand{\rebase}{\method{rebase}}
\newcommand{\findShortestCE}{\method{findShortestCE}}
\newcommand{\incompatible}[1]{\ensuremath{\textsl{incompatibility}}(#1)}
\newcommand{\rank}[1]{\ensuremath{\textsl{rank}}(#1)}
\begin{document}
%\linenumbers  

\title{Safety Synthesis Sans Specification}%~(\SFour)}
%\author{Anonymized}
\author{
    Roderick Bloem\\
    \small TU Graz\\
\and
    Hana Chockler\\
    \small King's College London\\
\and
    Masoud Ebrahimi\\
    \small TU Graz\\
\and
    Dana Fisman\\
    \small Ben-Gurion University\\
\and
    Heinz Riener\\
    \small EPFL
}
\date{}
\maketitle

\begin{abstract}
\begin{quote}
\commentout{
We examine the problem of synthesis of reactive systems, when the specification is unknown. 
We propose to replace synthesis by learning a system from a set of example traces.
The existing body of results regarding learning regular $\omega$-languages essentially provides a way to learn a specification of a reactive system, rather than
a system itself. It thus does not provide an alternative for synthesis. The challenge in learning a system compared to learning a specification is that 
good examples can conform to different implementations. A learning algorithm should be able to cope with examples coming from possibly conflicting implementations and
still synthesize a system that meets the unknown specification.
We present a theoretical framework for synthesizing safety specifications in such a setting and its implementation. 
%We discuss possible applications of our approach.
}

We define the problem of learning a transducer $\aut{S}$ from a target language $U$ containing possibly conflicting transducers, using membership queries and conjecture queries.
The requirement is that the language of $\aut{S}$ be a subset of $U$.
We argue that this is a natural question in many situations in hardware and software verification.
We devise a learning algorithm for this problem and show that its time and query complexity is polynomial with respect to the rank of the target language, its incompatibility measure, and
the maximal length of a given counterexample. We report on experiments conducted with a prototype implementation. 

\end{quote}
\end{abstract}

\section{Introduction}

Constructing reliable systems is a main requirement and a major challenge in
safety critical systems such as autonomous vehicles, medical devices and banking systems.
Formal verification methods can be used to find bugs or increase assurance that the system satisfies
its requirements. 
A leading alternative approach is to automatically synthesize a 
correct-by-construction system from a given formal specification of its requirements.
This line of research, termed \emph{system/program synthesis}~\cite{MannaW80,PnueliR88}, assumes the existence of a
perfect specification that fully characterizes the set of correct behaviors. Such a specification is often
as hard to write as the system itself~\cite{Kupferman16,McM19}. %Synthesizing imperfect specifications result in incorrect systems.
Other criticisms over the setting assumed by system synthesis are that it considers the case where the system is designed from scratch
rather than from a  previous version, or by using library components~\cite{LustigV13}. 

For this reason, the synthesis problem, is now taking relaxed forms, which assume a \emph{specification scale} which on
one end has complete rigorous specifications such as temporal logics, somewhere in the middle it has partial implementations and/or  partial specifications, 
and on the other far end it has merely examples~\cite{Gul11,NarodytskaLBRW14,DSH17,ASFS18,NeiderG18}. The line of research in this end is often termed \emph{example-driven programming}.

There are various ways to define the problem of synthesizing systems from examples. 
Many such examples can be found in the surveys by~\cite{DBLP:journals/cacm/Vaandrager17,Fisman18}.
Most extend the \lstar\ algorithm of~\cite{Angluin87} that learns regular languages using \emph{membership queries} and \emph{equivalence queries}
to learning different types of automata, e.g. multiplicity automata~\cite{BV96},
Mealy machines~\cite{Niese03,DBLP:conf/fm/ShahbazG09}, I/O-Automata~\cite{AartsV10}, weighted automata~\cite{BM15},
symbolic automata~\cite{MM16,DrewsD17} and more. 

We are interested in \emph{reactive systems}, systems that interacts with their environment on an ongoing basis.
In \emph{formal verification}  such systems are modeled using
languages of infinite words that represent the ongoing nature of the system. Learning of languages of infinite words,
is also a well studied subject~\cite{FarzanCCTW08,MalerP95,AngluinF14,AngluinAF20}. 

In \emph{formal synthesis} reactive systems are implemented by transducers. A \emph{transducer} is a finite state machines where each transition is  labeled by an input symbols (from a set $\Sigma$),
and each state is  labeled by an output (from a set  $\Gamma$). If on reading input word $\isym_1 \isym_2 \isym_3,\ldots$ the transducer $\aut{S}$ visits state $q_1,q_2,q_3,\ldots$ and state $q_i$ is labeled by $\osym_i$,
then the prefixes of  $\biword{\isym_1}{\osym_1}\biword{\isym_2}{\osym_2}\biword{\isym_3}{\osym_3}\dots$ 
are in the language of $\aut{S}$, denoted $\sema{\aut{S}}_*$. The language of $\aut{S}$ thus consists of words over $\Sigma\times\Gamma$, but not any subset of $(\Sigma\times\Gamma)^*$ corresponds to a transducer. The language of a transducer
is regular; \emph{exhaustive}, meaning when projected on $\Sigma$ it consists of all words in $\Sigma^*$; and satisfies the property that every input word is matched with a single output word. The literature on learning transducers (or Mealy machines) assumed the target language adheres to these requirements.

The last requirement entails that the assumption is that the target language conforms to a single implementation. We challenge this assumption. Formally, we are interested in the following problem: devise a learning algorithm using membership queries and conjecture queries, that learns an unknown regular exhaustive language $U$ over $\Sigma\times\Gamma$, even if the examples may correspond to different implementations. The requirement is to output a transducer $\aut{S}$ such that $\sema{\aut{S}}_*\subseteq \prefixes{U}$.

We argue that this is a natural question in various settings. Verification of software and hardware systems usually assumes a model of the environment the system under verification interacts with.
In cases where the environment can be any of a number of third-party black box systems, modeling the environment is a challenging task.
For instance, this is the case with the Amazon Prime Video app, which should work on all mobile phones, televisions, laptops, and tablet devices. 
The verification team of Prime Video might collect executions from different
platforms with only little ability to traceback the platform, version etc. and then attempt to build models of these platforms, in order to verify the app. 
Another scenario is a heterogeneous framework, consisting of software interacting with humans and
with third-party robots~\cite{WWADEFP20}. In such cases we would like to obtain 
a transducer that encompasses as many behaviors as possible of the black-box components of the environment.

The main challenge we face, compared to other literature on learning automata, is the fact that 
the target language is only assumed to contain the language of a desired transducer, but it may in fact
consist of many other words, in particular  words corresponding to different implementations.

To understand what we mean by conflicting implementations, consider for instance the specification ``always if $o$ holds, then $o$ does not hold in the next
cycle'' (where $o$ is an output signal). In this case the learner can get answers that correspond to outputting $o$ on every even tick, as well as answers that correspond to outputting $o$ at every odd tick, and trying to build a transducer that adheres to both would lead her astray.

In learning Mealy or Moore machines, the natural generalization of \lstar's membership query is a query that takes an input sequence $\iw=\isym_1\isym_2\ldots \isym_k$ and returns the output $\gamma_k$ the transducer emits on reading $\iw$. Incorporating this style of membership queries to our setting is problematic. Consider for instance, the case where $\Sigma=\{i\}$,  $\Gamma=\{o,\overline{o}\}$ and the specification says all sequences are allowed. On membership queries for input sequences in $i^*$,  the oracle could provide answers that are consistent with the  following input-output trace 
\[\biword{i}{o} \cdot \biword{i}{\overline{o}}^1 \cdot  \biword{i}{o} \cdot \biword{i}{\overline{o}}^2 \cdot  \biword{i}{o} \cdot \biword{i}{\overline{o}}^3 \cdot  \biword{i}{o} \cdot \biword{i}{\overline{o}}^4 \cdot  \biword{i}{o}  \cdots\]
that satisfies the specification, but has no finite state machine realizing it. 

Another obstacle can be illustrated by considering a specification such as ``every request should eventually be granted''. On any sequence with a request followed by $n$ cycles with no grant, the answer to the membership query should be ``yes''. In other words, every finite sequence is allowed, though clearly not every infinite sequence is allowed.
For this reason we focus on \emph{safety languages}, that is, those for which every counterexample has a finite witnessing prefix. Working with safety languages also solves the issue, that in the scenarios where examples come from black-box implementations, there is no way to obtain an infinite behavior.

%Our algorithm, roughly speaking,  assumes an oracle answering two types of queries: \emph{conjecture queries} that answer whether the conjecture implementation is correct (and providing a counterexample when this is not the case), and \emph{membership %queries} that approve/disapprove a possible behavior.
 To cope with the fact that we may get answers for different implementations, we work with symbolic transducers (and in accordance symbolic conjecture queries and symbolic membership queries). A \emph{symbolic transducer} is a state machine in which the transition between states corresponds to the input read, and each state is labeled by a set of outputs (that may be emitted on words leading to that state). As in other Angluin-style algorithms, we use a data structure termed an observation table, where we keep answers to the membership queries, and we try to distinguish states of the desired transducer. Since we may be dealing with several conflicting implementations, it may not be possible to keep the information in one table from which a transducer can be extracted. In such cases our algorithm splits into several tables, keeping track of different implementations (where one table can track several implementations, as long as they are compatible as we formally explain later). 
 We analyze the complexity of the algorithm with respect to  two measures we define on exhaustive languages, the \emph{rank} and \emph{incompatibility} measure. We show that the algorithm is polynomial in these measures (as well as the maximal length of a received counterexample).
 
\vspace{-2mm}
\paragraph{An illustrative example}
Consider the unknown language $U$ consisting of behaviors that grant requests either in the step where the  request was received or in the next step. The input variable is $r$ (request), and the single output
variable is $g$ (grant). There exist an infinite number of concrete transducers realizing $U$,
as, for example, for any given $k$, we can construct a transducer that grants at the same step for the first
$k$ steps and from the step ${k+1}$ onwards it grants in the step after the request. However, the concrete
transducers are not conflicting, that is, they can be represented by a single 
symbolic transducer, e.g. the transducer  $\aut{S}_1$ depicted in Fig.~\ref{example1:hyp:0}. 
\begin{figure}
    \centering
    \begin{tikzpicture}[automaton]
      \small
      \node[state, minimum size=0.65cm] (1) at(0,0) {};
      \node[state, right= of 1] (3) {};
      \node[initial, draw, circle, yshift=1cm] (0) at ($(1)!0.5!(3)$) {};
      \node at (0) {$\star$};
      \node at (1) {$g,\overline{g}$};
      \node at (3) {$g$};
      \path
        (0) edge [above left=10] node [left, near start] {$r,\overline{r}$} (1)
        (0) edge [above right] node [right, near start ] {$\overline{r}$} (3)
        (1) edge [bend left=10] node [auto] {$r,\overline{r}$} (3)
        (3) edge [bend left=10] node [auto] {$r,\overline{r}$} (1);
    \end{tikzpicture} 
%    \begin{tikzpicture}[automaton]
%      \small
%      \node[initial, draw, circle] (0) at (0,0) {};
%      \node[state, right=1 of 0] (3) {};
%      \node at (0) {$\star$};
%      \node at (3) {$g$};
%      \path
%        (0) edge [above right] node [auto] {$r, \overline{r}$} (3)
%        (3) edge [loop right] node [auto] {$r,\overline{r}$} (3);
%    \end{tikzpicture} 
    \caption{A symbolic transducer $\aut{S}_1$.}
    \label{example1:hyp:0}
\end{figure}

Consider now the language that, in addition to the previous description, disallows two subsequent grants. In this case, there exist two
conflicting implementations realizing the language: one that grants in the current step,
and another that grants in
the next step (see Fig.~\ref{example2:hyp:0}). Because of the additional constraint, these
implementations cannot coexist, i.e. cannot be modeled by the same symbolic transducer. 
In such cases, our algorithm outputs a symbolic transducer that
represents some non-conflicting implementations. 
\begin{figure}
    \centering
    %\includegraphics[page=11]{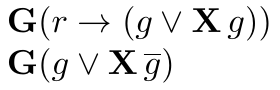}
% \begin{tabular}{ll}
%     \begin{tikzpicture}[automaton]
%   \small
%   \node[state] (1) at(0,0) {};
%   \node[state, right= of 1] (3) {};
%   \node[initial, draw, circle, yshift=1cm] (0) at ($(1)!0.5!(3)$) {};
%   \node at (0) {$\specialsymbol$};
%   \node at (1) {$\overline{g}$};
%   \node at (3) {$g$};
%   \path
%     (0) edge node [above left] {$r$} (1)
%     (0) edge node [above right] {$\overline{r}$} (3)
%     (1) edge [bend left=10] node [auto] {$r,\overline{r}$} (3)
%     (3) edge [bend left=10] node [auto] {$r,\overline{r}$} (1);
%  \end{tikzpicture} 
%  &
%   %  \includegraphics[page=19]{examples/example2.pdf}
% \begin{tikzpicture}[automaton]
%   \small
%   \node[state] (1) at(0,0) {};
%   \node[state, right= of 1] (3) {};
%   \node[initial, draw, circle, yshift=1cm] (0) at ($(1)!0.5!(3)$) {};
%   \node at (0) {$\specialsymbol$};
%   \node at (1) {$g$};
%   \node at (3) {$\overline{g}$};
%   \path
%     (0) edge node [ left, near start] {$r,\overline{r}$} (1)
%     (1) edge [bend left=10] node [auto] {$r,\overline{r}$} (3)
%     (3) edge [bend left=10] node [auto] {$r,\overline{r}$} (1);
% \end{tikzpicture} 
% \end{tabular}
\begin{tabular}{ll}
    \begin{tikzpicture}[automaton]
  \small
  \node[state] (1) at(0,0) {};
  \node[state, right= of 1] (3) {};
  \node[initial, draw, circle, yshift=1cm] (0) at ($(1)!0.5!(3)$) {};
  \node at (0) {$\specialsymbol$};
  \node at (1) {$\overline{g}$};
  \node at (3) {$g$};
  \path
    (0) edge node [left, near start] {$r, \overline{r}$} (1)
    (0) edge node [auto] {$\overline{r}$} (3)
    (1) edge [bend left=10] node [auto] {$r,\overline{r}$} (3)
    (3) edge [bend left=10] node [auto] {$r,\overline{r}$} (1);
 \end{tikzpicture} 
 &
\begin{tikzpicture}[automaton]
  \small
  \node[state] (1) at(0,0) {};
  \node[state, right= of 1] (3) {};
  \node[initial, draw, circle, yshift=1cm] (0) at ($(1)!0.5!(3)$) {};
  \node at (0) {$\specialsymbol$};
  \node at (1) {$g$};
  \node at (3) {$\overline{g}$};
  \path
    (0) edge node [ left, near start] {$r,\overline{r}$} (1)
    (0) edge node [ auto ] {$\overline{r}$} (3)
    (1) edge [bend left=10] node [auto] {$r,\overline{r}$} (3)
    (3) edge [bend left=10] node [auto] {$r,\overline{r}$} (1);
\end{tikzpicture} 
\end{tabular}
    \caption{Two symbolic transducers $\aut{S}_2$ (on the left) and $\aut{S}_3$ (on the right) that cannot coexist.}
    \label{example2:hyp:0}
    \vspace{-2mm}
\end{figure}

\vspace{-3mm}
\paragraph{Outline}
We provide definitions and notations in Sec.~\ref{sec:defs}, the learning algorithm in Sec.~\ref{sec:learn}
and its correctness proof and complexity analysis in Sec.~\ref{sec:correctness}. Experimental results are
given in Sec.~\ref{sec:experminetal} and we conclude with a discussion in Sec.~\ref{sec:discuss}.
The reader is referred to the supplementary material for the full details of the illustrative examples,
as well as a number of more complex examples. 
Due to the lack of space, some proofs are also moved to the supplementary
material.
%All transducers in the paper and in the supplementary material
%were produced using our prototype implementation of the algorithm.

%\subsection*{Notations}
\section{Definitions and Notations}\label{sec:defs}
We make use of the following notations and definitions.
An alphabet $\Sigma$ is a non-empty finite set of symbols. 
The set of all finite words over $\Sigma$ is denoted $\Sigma^*$, 
the set of all $\omega$-words (infinite words) over $\Sigma$ is denoted $\Sigma^\omega$, and the set of finite and infinite words is denoted $\Sigma^\infty$. Words are indexed starting $1$. 
That is, $\w = \sym_1 \sym_2 \dots$. 
The length of a finite word $\w = \sym_1 \sym_2 \dots \sym_m$, denoted $|\w|$, is $m$. 
Given $\w = \sym_1 \sym_2 \dots$ 
the  $i$-th letter of $\w$ is denoted $\w[i]$,
the prefix of $w$ ending in $\w[i]$ is denoted $\w[..i]$,  
the suffix of $\w$ starting at $\w[i]$ is denoted $\w[i..]$,  
and the infix of $\w$ starting at $\w[i]$ and ending in $\w[j]$, is denoted $\w[i..j]$.
A finite word $\w$ is said to be a prefix of a finite/infinite word $\w'$, 
denoted $\w\preceq \w'$, if there exists $j<|\w'|$ such that $\w=\w'[..j]$. 
We use $\prefixes{L}$ for the set of prefixes of words in $L$. Similarly,
we use $\suffixes{L}$ for the set of suffixes of words in $L$.
%We use $\replaceindinword{w}{i}{\sym}$ for the word obtained from $w$ by changing the $i$-th letter to $\sym$.

\subsection*{Regular Trees}
A \ialphit-tree $T$ is a non-empty prefix closed subset of $\ialph^*$. 
We think of \ialphit\ as the directions of the tree.
We view $\epsilon$ as the root of the tree, and for every $\w\isym\in T$ we view the word $\w\isym$ as the child of $\w$ in direction $\isym$. 
A \oalphit-labeled \ialphit-tree $\tree{T}$ is a pair $\tuple{T,\tau}$ such that $T$ is a \ialphit-tree and $\tau:T\rightarrow \oalph$ maps every node of the tree $T$ to a label in $\oalphit$. 
A \oalphit-labeled \ialphit-tree $\tuple{T,\tau }$ is said to be \emph{exhaustive} if $T=\ialph^*$. 
Let $\tree{T}=\tuple{\ialph^*,\tau}$ be an exhaustive  \oalphit-labeled \ialphit-tree. 
A word $\w \in \ialph^*$ induces a sub-tree $\tree{T}_\w=\tuple{\ialph^*,\tau_\w}$, also an exhaustive \oalphit-labeled \ialphit-tree, where $\tau_\w(\iw)=\tau(\w\iw)$ for every $\iw\in\ialph^*$.  
An exhaustive labeled tree $\tree{T}$ is said to be \emph{regular} if it contains a finite number of non-isomorphic sub-trees.

%\subsection*{Realizable Bi-Languages,\\Contained Symbolic/Concrete Tree}
\subsection*{Realizable Bi-Languages, Contained Trees}
Let $\ialph$ and $\oalph$ be two alphabets. A word over $\ialph\times\oalph$ is referred to as a bi-word. A language over $\ialph\times\oalph$ is referred to as a bi-language.
Let $v=\isym_1\isym_2\isym_3\ldots\in\ialph^\infty$, $w=\osym_1\osym_2\osym_3\ldots\in\oalph^\infty$ be two words of equal length.
We use $v \oplus w$ for the bi-word  $\biword{\isym_1}{\osym_1}\biword{\isym_2}{\osym_2}\biword{\isym_3}{\osym_3}\ldots$ over $\ialph \times \oalph$.
Given a bi-language $L$, we use \projectL{\ialph} to denote the projection of $L$ on $\ialph$, namely the set of words $\{ v \in \ialph^\infty ~|~ \exists w \in \oalph^\infty \text{ \st } v \oplus w \in L\}$. 
A bi-language $L\subseteq(\ialph\times\oalph)^\omega$ is said to be \emph{\ialphit-exhaustive} if $\projectL{\ialph}=\ialph^\infty$. 
Given $\iw\in\ialph^\infty$ we use \given{L}{v} for the set of words $\{\ow\in\oalph^\infty ~|~ \iw \oplus \ow \in L \}$.

Henceforth, when we discuss bi-languages we assume they are over $\ialph\times\oalph$. Furthermore, we consider only exhaustive bi-languages. 
This is since we are interested in machines that provide answers to every possible sequence of inputs. Indeed, the language of a transducer (as formally defined in the sequel)
is always exhaustive. 
We refer to  exhaustive $\oalph$-labeled $\ialph$-trees as \emph{concrete trees}, and to  exhaustive
$2^\oalph$-labeled $\ialph$-trees as \emph{symbolic trees}.\footnote{We may represent a symbolic tree as a  $\bools(\oalph)$-labeled $\ialph$-trees, where $\bools(\oalph)$ is the set of Boolean expressions over $\oalph$, and  a  label $b$ is interpreted as the subset of letters in $2^\oalph$ satisfying $b$.}  Since the structure of an exhaustive tree is $\ialph^*$ by definition, we omit it from the description $\tree{T}=\tuple{\ialph^*, \tau}$, and identify $\tree{T}$ with $\tau$.

Let $L$ be a bi-language as above. We say that $L$ \emph{contains a concrete tree} $\tree{T}_C$ if for every $\iw\in\ialph^\omega$ there exists $\ow \in \givenL{v}$ such that $\tree{T}_C(u)  = \ow[|u|]$ for every $u\preceq \iw$. 
We use $\concTrees(L)$ to denote the set of concrete trees contained in $L$. 
We say that $L$ \emph{contains a symbolic tree} $\tree{T}_S$ if for every $v\in\ialph^\omega$ there exists $\alpha\in (2^\oalph)^\omega$ such that $\tree{T}_S(u)  = \alpha[|u|]$ for every $u\preceq v$ and $L_{|v}\supseteq \{w\in\oalph^\omega~|~\forall i \in \mathbb{N}.\ w[i]\in \alpha[i]\}$. We use $\symbTrees(L)$ to denote the set of symbolic trees contained in $L$. We say that $L$ is \emph{realizable} if there exists a regular-tree in $\concTrees(L)$. 
\mepast{the notations used in above paragraph are cryptic to my eyes. Maybe I am too stupid to know them by heart.
Do they have well-established connotations? Would someone please introduce us?}\dfpast{added a paragraph notation at the beginning. let me know if this is still unclear.}

It is not hard to see that being exhaustive is not a sufficient condition for realizability.

\begin{claim}\label{clm:exhaustive-not-realizable}
	$L$ may be \ialphit-exhaustive yet $\concTrees(L)= \emptyset$.
\end{claim}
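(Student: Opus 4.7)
I plan to prove the claim by exhibiting an explicit counterexample: a bi-language whose constraints force the output at some position to depend on a strictly later input symbol. Since a concrete tree's label at a node $u\in\ialph^*$ can only depend on the past (the path $u$ itself), such a language cannot be realized by any concrete tree.

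The concrete construction I would use: let $\ialph=\oalph=\{a,b\}$ and set
\[L \;=\; \bigl\{\,v \oplus w \in (\ialph\times\oalph)^\omega \;\big|\; w[1] = v[2]\,\bigr\}.\]
First I would verify \ialphit-exhaustiveness: for any $v\in\ialph^\omega$, put $w \coloneqq v[2]\,\osym_2\osym_3\cdots$ for arbitrary $\osym_i\in\oalph$; then $w[1]=v[2]$, so $v\oplus w\in L$, establishing $\projectL{\ialph}=\ialph^\omega$.

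Next I would show $\concTrees(L)=\emptyset$ by contradiction. Assume $\tree{T}_C\in\concTrees(L)$ and consider the node $a\in\ialph^*$; let $\gamma \coloneqq \tree{T}_C(a)\in\oalph$. Applying the containment condition with the input $v = aaa\cdots$ produces some $w\in\givenL{v}$ with $w[1]=\tree{T}_C(a)=\gamma$, and membership in $L$ forces $\gamma = w[1] = v[2] = a$. Applying it again with $v' = abbb\cdots$ produces some $w'$ with $w'[1]=\tree{T}_C(a)=\gamma$, and now $\gamma = w'[1] = v'[2] = b$. These two determinations of the same tree label $\gamma$ contradict each other, so no concrete tree can be contained in $L$.

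The construction is designed so that the proof is essentially immediate once the right example is chosen; the only real subtlety is picking a constraint that couples the $n$th output to a \emph{future} input, since any coupling to past inputs can be witnessed by a tree. I expect no substantial obstacle, though I would double-check the corner case $u=\epsilon$ in the definition of $\mathcal{T}_C(u)=w[|u|]$ (where $w[0]$ is undefined under the convention that words are indexed from $1$); the argument above only uses the node $u=a$ of length $1$, so this edge case does not affect the proof. If desired, the example generalises straightforwardly to any $|\ialph|,|\oalph|\ge 2$ and to languages defined by couplings across arbitrary pairs of positions.
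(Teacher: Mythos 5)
Your proof is correct and takes essentially the same approach as the paper: exhibit an explicit exhaustive bi-language in which the label of a single length-one node is forced to two distinct values by two different infinite input continuations (the paper uses $L=\{0^\omega\oplus a^\omega\}\cup\{v\oplus b^\omega \mid v\in\{0,1\}^\omega\setminus 0^\omega\}$ and the node $0$; you use the one-step lookahead constraint $w[1]=v[2]$ and the node $a$). The only difference is that your coupling to the future is finite while the paper's depends on the entire infinite tail, but the structure of the argument is identical.
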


The algorithm that we present in Section~\ref{sec:learn} works with symbolic trees. The following two claims assert that one can indeed search for a contained symbolic-tree
and extract from it a concrete tree, if so desired.

\begin{claim}\label{clm:conc-imp-symb}
	If $\concTrees(L)\neq \emptyset$ then $\symbTrees(L)\neq \emptyset$.
\end{claim}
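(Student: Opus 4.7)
The plan is to construct a symbolic tree from the given concrete tree by the simplest possible embedding: wrap every concrete output label in a singleton set.

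More concretely, suppose $\tree{T}_C \in \concTrees(L)$, so $\tree{T}_C$ is an exhaustive $\oalph$-labeled $\ialph$-tree, which we identify with a map $\tree{T}_C : \ialph^* \to \oalph$. I would define $\tree{T}_S : \ialph^* \to 2^{\oalph}$ by setting $\tree{T}_S(u) = \{\tree{T}_C(u)\}$ for every $u \in \ialph^*$. By construction $\tree{T}_S$ is an exhaustive $2^{\oalph}$-labeled $\ialph$-tree, i.e.\ a symbolic tree.

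The main step is then to verify that $\tree{T}_S \in \symbTrees(L)$. Fix any $\iw \in \ialph^{\omega}$. Since $\tree{T}_C \in \concTrees(L)$, there exists $\ow \in \givenL{\iw}$ such that $\tree{T}_C(u) = \ow[|u|]$ for every $u \preceq \iw$. Define $\alpha \in (2^{\oalph})^{\omega}$ by $\alpha[i] = \{\ow[i]\}$ for all $i$. Then for every $u \preceq \iw$ we have $\tree{T}_S(u) = \{\tree{T}_C(u)\} = \{\ow[|u|]\} = \alpha[|u|]$, as required. Moreover, the set $\{w' \in \oalph^{\omega} \mid \forall i.\ w'[i] \in \alpha[i]\}$ is precisely the singleton $\{\ow\}$, and since $\ow \in \givenL{\iw} = L_{|\iw}$ we obtain $L_{|\iw} \supseteq \{\ow\} = \{w' \in \oalph^{\omega} \mid \forall i.\ w'[i] \in \alpha[i]\}$. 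This establishes the defining inclusion for symbolic containment, so $\tree{T}_S \in \symbTrees(L)$ and in particular $\symbTrees(L) \neq \emptyset$.

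I do not expect any real obstacle here: the claim is essentially observing that a concrete tree is a degenerate symbolic tree in which each ``choice set'' $\alpha[i]$ has been forced down to a single element. The only thing to be careful about is to match the two conditions in the definition of symbolic containment (pointwise label agreement along $\iw$, and the inclusion $L_{|\iw} \supseteq \{\,\ow' : \forall i.\ \ow'[i] \in \alpha[i]\,\}$); the singleton construction makes both immediate.
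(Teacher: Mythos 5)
Your proof is correct and is essentially the paper's argument: the paper disposes of this claim in one line (``a concrete tree is a special type of a symbolic tree''), and your singleton-label construction is exactly the formalization of that observation, with both conditions of symbolic containment checked explicitly.
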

\begin{claim}\label{clm:conc-in-symb}
	Let $\tree{T}_S$ be a symbolic tree in $\symbTrees(L)$. Let $\tree{T}_C$ be a concrete tree such that $\tree{T}_C(\iw)\in \tree{T}_S(\iw)$ for every $\iw\in\ialph^*$.
	Then $\tree{T}_C\in\concTrees(L)$.
\end{claim}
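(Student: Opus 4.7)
The plan is to unpack the two containment definitions and perform a pointwise selection from the symbolic tree's label sets. Fix an arbitrary $\iw\in\ialph^\omega$; to witness $\tree{T}_C\in\concTrees(L)$ we must produce a single $\ow\in\givenL{\iw}$ whose letters agree with the labels $\tree{T}_C(u)$ along every prefix $u\preceq\iw$.

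First I would invoke the hypothesis $\tree{T}_S\in\symbTrees(L)$ at $\iw$ to obtain a sequence $\alpha\in(2^{\oalph})^\omega$ satisfying $\tree{T}_S(u)=\alpha[|u|]$ for every $u\preceq\iw$, together with the crucial inclusion
\[
\givenL{\iw}\;\supseteq\;\{\,w\in\oalph^\omega\mid \forall i\in\mathbb{N},\ w[i]\in\alpha[i]\,\}.
\]
Next, I would define a candidate output word $\ow$ directly from $\tree{T}_C$: set $\ow[i]:=\tree{T}_C(\iw[..i])$. By the assumption that $\tree{T}_C(u)\in\tree{T}_S(u)$ for every $u\in\ialph^*$, specializing to $u=\iw[..i]$ yields $\ow[i]=\tree{T}_C(\iw[..i])\in\tree{T}_S(\iw[..i])=\alpha[i]$. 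Hence $\ow$ belongs to the set on the right-hand side above, and therefore $\ow\in\givenL{\iw}$, i.e.\ $\iw\oplus\ow\in L$.

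Finally, the matching requirement $\tree{T}_C(u)=\ow[|u|]$ for every $u\preceq\iw$ holds by construction, since writing $u=\iw[..i]$ with $|u|=i$ gives $\ow[|u|]=\ow[i]=\tree{T}_C(\iw[..i])=\tree{T}_C(u)$. As $\iw$ was arbitrary, this verifies the definition of $\tree{T}_C\in\concTrees(L)$.

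There is essentially no hard step here: the only real obstacle is keeping the indexing conventions straight (in particular handling the empty prefix $u=\epsilon$ consistently with the convention that words are indexed from $1$), and making sure that the $\alpha$ chosen for $\iw$ is used uniformly across all its prefixes. The proof is essentially a pointwise selection argument: the concrete tree, by lying entrywise inside the symbolic tree, automatically produces an admissible output word along every input branch.
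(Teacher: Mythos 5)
Your proof is correct and is essentially the paper's own argument: the paper phrases it as a proof by contradiction (assume the branch word $\ow$ built from $\tree{T}_C$ is not in $L$, then contradict the inclusion $\givenL{\iw}\supseteq\{w \mid \forall i,\ w[i]\in\alpha[i]\}$), whereas you give the direct, contrapositive-free version of the same pointwise selection. No substantive difference.
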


\subsection*{Symbolic, Concrete, and Consistent Transducers}
A symbolic transducer is a tuple $(\ialph,\oalph,Q,q_\iota,\delta,\eta)$ where $\ialph$ is the input alphabet, $\oalph$ is the output alphabet,
$Q$ is a finite non-empty set of states, $q_\iota$ is the initial state,  $\delta:Q\times\ialph \rightarrow 2^Q\setminus \emptyset$ maps
a state $q$ and an input letter $\isym$ to a non-empty set of possible next states,  
and $\eta:Q\rightarrow 2^\oalph\setminus \emptyset$ associates with each state $q$ a set of outputs that can be emitted when the transducer is in state $q$. Let  $w=\biword{i_1}{o_1}\biword{i_2}{o_2}\biword{i_3}{o_3}\ldots$ be a %n $\omega$-word. 
word. % RB: we want finite words to be generated as well.
We say that $w$ is \emph{generated} by $\aut{A}$ if there exists 
an infinite sequence of states
$q_0,q_1,q_2,\ldots $ such that $q_0=q_\iota$, $q_k \in \delta(q_{k-1},i_{k})$, a sequence of sets of output symbols $\theta_1,\theta_2,\ldots$ such that 
$\eta(q_k)=\theta_{k}$ and $o_k\in \theta_k$ for every $0<k\leq |w|$. We adopt the common extension of $\delta$ to work from $Q\times \ialph^*$ (to $2^Q\setminus\emptyset$).
The set of finite and infinite words \emph{generated} by a symbolic transducer $\aut{A}$, is denoted $\sema{\aut{A}}$.
We use $\semainf{\aut{A}}$ for $\sema{\aut{A}}\cap (\Sigma\times\Gamma)^\omega$ and $\semafin{\aut{A}}$ for $\sema{\aut{A}}\cap (\Sigma\times\Gamma)^*$.

\begin{claim}\label{clm:trans-fin-inf-safety} 
	$\prefixes{\semainf{\aut{A}}}=\semafin{\aut{A}}$ and $\semainf{\aut{A}}$ is a safety language.
\end{claim}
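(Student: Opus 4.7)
The plan is to establish the two assertions in order, with the equality of languages used in the safety argument.

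For the inclusion $\prefixes{\semainf{\aut{A}}} \subseteq \semafin{\aut{A}}$, I would take a word $w \in \semainf{\aut{A}}$ witnessed by some infinite state sequence $q_0, q_1, q_2, \ldots$ and output-set sequence $\theta_1, \theta_2, \ldots$. Any finite prefix $u \preceq w$ is then witnessed by the corresponding truncations of these sequences, so $u \in \semafin{\aut{A}}$. For the reverse inclusion $\semafin{\aut{A}} \subseteq \prefixes{\semainf{\aut{A}}}$, I would use the fact, visible in the signature of $\delta$ and $\eta$, that $\delta(q,\isym) \neq \emptyset$ for every $(q,\isym)$ and $\eta(q) \neq \emptyset$ for every $q$. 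Thus, given a finite run ending in state $q_{|u|}$ witnessing $u \in \semafin{\aut{A}}$, I can inductively extend it: pick an arbitrary input letter $\isym_{k+1}$, choose any $q_{k+1} \in \delta(q_k, \isym_{k+1})$, and any $o_{k+1} \in \eta(q_{k+1})$. Iterating ad infinitum produces an infinite run whose associated bi-word lies in $\semainf{\aut{A}}$ and has $u$ as a prefix.

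For the safety claim, recall that a language $L \subseteq (\ialph\times\oalph)^\omega$ is safety if every $w \notin L$ has a finite prefix with no extension in $L$; equivalently, $w \in L$ whenever every finite prefix of $w$ belongs to $\prefixes{L}$. So I take an infinite bi-word $w$ such that every $u \preceq w$ lies in $\prefixes{\semainf{\aut{A}}}$, which by the equality just established means every $u \preceq w$ lies in $\semafin{\aut{A}}$. For each such $u$, let $R_u$ be the (nonempty, finite) set of accepting finite runs of $\aut{A}$ on $u$; these sets are organized as a tree, since the length-$k$ prefix of any run in $R_u$ belongs to $R_{u[..k]}$. This tree is infinite (since $R_u$ is nonempty for every prefix of $w$) and finitely branching (since $Q$ is finite and the set of compatible output-subsets at each step is finite).

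By König's lemma, the tree contains an infinite branch, which is precisely an infinite state sequence $q_0, q_1, \ldots$ together with an output-set sequence $\theta_1, \theta_2, \ldots$ satisfying the defining conditions for $w$ to be generated by $\aut{A}$. Hence $w \in \semainf{\aut{A}}$, confirming that $\semainf{\aut{A}}$ is a safety language.

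The main obstacle I anticipate is being careful with the König's lemma step: one must organize the finite runs as a finitely branching tree rooted at $q_\iota$ and verify both that the branching is genuinely finite (using finiteness of $Q$ and of $2^{\oalph}$) and that the nonemptiness of $R_u$ for every $u \preceq w$ suffices to keep the tree infinite. The inclusion arguments themselves are routine, provided one remembers to exploit that $\delta$ and $\eta$ never return the empty set.
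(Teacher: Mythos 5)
Your proof is correct and follows essentially the same route as the paper: prefixes of generated $\omega$-words are generated, finite generated words extend to infinite ones by totality of $\delta$ and $\eta$, and safety follows from the characterization of $\semainf{\aut{A}}$ as the limit of $\semafin{\aut{A}}$. The only difference is that you make explicit, via K\"onig's lemma on the finitely branching tree of finite runs, the step the paper merely asserts (``it follows that $\semainf{\aut{A}}$ is the set of all $\omega$-words all of whose prefixes are in $\semafin{\aut{A}}$''), which is a welcome addition given that $\aut{A}$ is nondeterministic.
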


A symbolic transducer is \emph{concrete} if for every $q\in Q$ and $\isym\in\ialph$ we have $|\delta(q,\isym)|=1$ and $|\eta(q)|=1$.
A concrete transducer $\aut{C}=(\ialph,\oalph,q_\iota,Q,\delta,\eta)$ \emph{implements} the concrete tree  $\tree{T}_{\aut{C}}$ where $\tree{T}_{\aut{C}}(\epsilon)=\specialsymbol$, and for every $v\in\ialph^*$ such that $\delta(q_\iota,v)=q_v$ we have that $\tree{T}_{\aut{C}}(v)=\eta(q_v)$.\footnote{We use a special symbol $\specialsymbol$ for the output of the initial state, since we view an input-output sequence as starting with an input.}

\begin{claim}\label{clm:subtransducer}
Let $\aut{S}=(\ialph,\oalph,Q,q_\iota,\delta,\eta)$, and $\aut{S'}=(\ialph,\oalph,Q,q_\iota,\delta',\eta')$ be symbolic transducers s.t.
$\delta'(q,\isym)\subseteq \delta(q,\isym)$ and $\eta'(q)\subseteq \eta(q)$. Then $\sema{\aut{S}}\subseteq\sema{\aut{S}'}$.
\end{claim}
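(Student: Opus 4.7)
My plan is to take an arbitrary word $w\in\sema{\aut{S}}$ and exhibit a generating run for $w$ in $\aut{S}'$, thereby establishing $w\in\sema{\aut{S}'}$ and hence the claimed inclusion.

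First I would unpack the hypothesis $w=\biword{i_1}{o_1}\biword{i_2}{o_2}\cdots\in\sema{\aut{S}}$ using the definition of generation: there exist states $q_0=q_\iota,q_1,q_2,\ldots$ with $q_k\in\delta(q_{k-1},i_k)$, and output sets $\theta_k=\eta(q_k)$ satisfying $o_k\in\theta_k$ for each relevant $k$. Because $\aut{S}$ and $\aut{S}'$ share both alphabets, the state space $Q$, and the initial state $q_\iota$, the most natural candidate for an $\aut{S}'$-run on $w$ is simply the same state sequence $(q_k)_k$. I would then have to verify the two $\aut{S}'$-generation conditions at every index $k$, namely $q_k\in\delta'(q_{k-1},i_k)$ and $o_k\in\eta'(q_k)$.

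Second, I would invoke the hypotheses $\delta'(q,\isym)\subseteq\delta(q,\isym)$ and $\eta'(q)\subseteq\eta(q)$, together with the non-emptiness of $\delta'(q,\isym)$ and $\eta'(q)$ guaranteed by the type $2^Q\setminus\emptyset$ and $2^{\oalph}\setminus\emptyset$ in the definition. The strategy here is twofold: either argue that the run chosen in $\aut{S}$ already uses only transitions in $\delta'$ and outputs in $\eta'$, or, failing that, use the non-emptiness of $\delta'$ and $\eta'$ to substitute alternative transitions and outputs locally without disturbing the letters $i_k$ (which are fixed by $w$) or the letters $o_k$ (which must be reproduced). A subsidiary step would be to handle the finite and infinite cases uniformly so that the result applies both to $\semafin{\aut{S}}$ and to $\semainf{\aut{S}}$ in line with Claim~\ref{clm:trans-fin-inf-safety}.

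The main obstacle is the direction of the subset hypothesis: since $\delta'(q,\isym)\subseteq\delta(q,\isym)$, an arbitrary transition $q_k\in\delta(q_{k-1},i_k)$ used by the $\aut{S}$-run need not lie in $\delta'(q_{k-1},i_k)$, and the non-emptiness of $\delta'(q_{k-1},i_k)$ only guarantees the existence of \emph{some} successor, not one that matches the rest of the run; the analogous issue applies to $\eta'$ when the specific output $o_k$ demanded by $w$ is absent from $\eta'(q_k)$. Bridging this gap, so that a witnessing run for $w$ in $\aut{S}$ can be turned into a witnessing run for the same $w$ in $\aut{S}'$, is where the plan must do real work; any successful execution will hinge on a structural observation beyond the bare subset hypotheses that allows the substitution to preserve the exact input-output sequence $w$.
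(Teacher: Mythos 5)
You have correctly located the decisive obstacle, and it is not a defect of your plan: the statement as printed is false, so the ``structural observation beyond the bare subset hypotheses'' that your third paragraph hopes for does not exist. A concrete counterexample: let $\ialph=\{i\}$, $\oalph=\{a,b\}$, $Q=\{q_0,q_1\}$, $q_\iota=q_0$, $\delta(q,i)=\delta'(q,i)=\{q_1\}$ for both states, $\eta(q_0)=\eta'(q_0)=\{a\}$, $\eta(q_1)=\{a,b\}$ and $\eta'(q_1)=\{a\}$. All hypotheses hold, yet $\biword{i}{b}\in\sema{\aut{S}}\setminus\sema{\aut{S}'}$, so $\sema{\aut{S}}\not\subseteq\sema{\aut{S}'}$. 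The intended claim is the reverse inclusion $\sema{\aut{S}'}\subseteq\sema{\aut{S}}$: this is the direction actually used in Section~\ref{sec:learn} (``any concretization of $\aut{A}$ is subsumed by $\aut{A}$'') and the form in which the claim is cited in the appendix proof of Claim~\ref{clm:conc-realizes}, namely $\sema{\aut{C}}\subseteq\sema{\aut{A}}$ for a concretization $\aut{C}$ of $\aut{A}$.

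Once the direction is corrected, your first idea --- reuse the very same state sequence --- is exactly the paper's argument and closes immediately, uniformly for finite and infinite words: if $q_\iota,q_1,q_2,\ldots$ generates $w=\biword{i_1}{o_1}\biword{i_2}{o_2}\cdots$ in $\aut{S}'$, then $q_k\in\delta'(q_{k-1},i_k)\subseteq\delta(q_{k-1},i_k)$ and $o_k\in\eta'(q_k)\subseteq\eta(q_k)$ for every $k$, so the same sequence witnesses $w\in\sema{\aut{S}}$; the non-emptiness of $\delta'$ and $\eta'$, which you considered exploiting, is never needed. You may take some comfort in the fact that the paper's own appendix proof, read literally, commits precisely the direction error you flagged: it starts from a run of $\aut{S}$ and asserts it is a run of $\aut{S}'$ ``since $\delta'(q,\isym)\subseteq\delta(q,\isym)$'', which is backwards, and becomes correct only after swapping the roles of $\aut{S}$ and $\aut{S}'$ throughout. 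So your blind attempt did not miss an idea; it correctly detected that the statement and its stated hypotheses are inconsistent, which is the right conclusion to reach here.
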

A symbolic transducer $\aut{S}$ is said to be \emph{consistent} if there exists a symbolic $2^\Gamma$-labeled $\Sigma$-tree $\tree{T}_\aut{S}$ such that 
the set of bi-words generated by the transducer $\aut{S}$ is exactly the set of words induced by the tree $\tree{T}_\aut{S}$. Note that a deterministic transducer
is always consistent, but a non-deterministic transducer may or may not be consistent.

\section{The Learning Algorithm}\label{sec:learn}
Before we provide the learning algorithm, we present its setting
and the data structures it uses.%, and their relation to concrete and symbolic transducers.

\subsection{Setting and Data Structures}
\subsubsection{Queries}
We consider two types of  queries: \emph{symbolic membership queries}, denoted \smq, and \emph{symbolic conjecture queries}, denoted \scq.
We may also use standard \emph{membership queries}, denoted \mq, which are derived from \smq s, as we explain in the following.

Let $U\subseteq(\ialph\times\oalph)^\omega$ be an unknown realizable bi-language.  The queries defined below are with respect to $U$.
\begin{itemize}
    \item A \emph{symbolic membership query} $\smq(\cdot)$ takes as input a finite non-empty word 
    $\biword{\isym_1}{\theta_1}
    \biword{\isym_2}{\theta_2}\ldots
    \biword{\isym_{m}}{\theta_{m}}
    \biword{\isym_{m+1}}{\qm}$ where $\isym_i\in\Sigma$ for every $1\leq i\leq m+1$ and $\theta_j\subseteq\Gamma$ for every $1\leq j \leq m$,  and returns the maximal subset $\theta_{m+1}$ of $\oalph$ such that the symbolic word obtained by replacing $\qm$ with $\theta_{m+1}$ is a subset of $\prefixes{U}$.
If the answer is $\emptyset$, it accompanies it with a concrete counterexample, namely a bi-word $\biword{\isym_1}{\osym_1}\biword{\isym_2}{\osym_2}\ldots\biword{\isym_m}{\osym_m}\biword{\isym_{m+1}}{\osym_{m+1}}\notin U$ for which
$\osym_i\in\theta_i$ for every $1\leq i \leq m$.

%Let $L$ be a symbolic language over $\ialph\times\oalph$ and let $L_*=L\cap (\ialph\times2^\oalph)^*$.
%A \emph{symbolic subset query} $\ssq(\cdot)$ takes as input a symbolic language $L$ and returns $\true$ if ${L}_* \subseteq \prefixes{U}$ and a word $u\in {L_*}\setminus \prefixes{U}$ otherwise. 

\item
A \emph{concrete membership query} $\mq$ takes a concrete finite bi-word $w$ over $\ialph\times\oalph$ and returns ``yes'' if $w\in U$ and ``no'' otherwise.
The query $\mq(\biword{\isym_1}{\osym_1}\biword{\isym_2}{\osym_2}\ldots\biword{\isym_{\ell}}{\osym_{\ell}})$ can be implemented using an \smq\ by checking
whether
\[\osym_\ell\in \smq(\biword{\isym_1}{\osym_1}\biword{\isym_2}{\osym_2}\cdots\biword{\isym_{\ell-1}}{\osym_{\ell-1}}  \biword{\isym_\ell}{\qm}).\]
%If the answer is $\theta_\ell$ then the result is ``yes'' iff $\osym_\ell\models\theta_\ell$.

\item
A \emph{symbolic conjecture query} $\scq(\cdot)$ takes as input a symbolic transducer $\aut{A}$ and returns $\true$ if $\semafin{\aut{A}}\subseteq \prefixes{U}$ and a word $u\in ({\semafin{\aut{A}}}\setminus \prefixes{U})$ otherwise.  
Note that it returns only \emph{negative counterexamples}. This is because we are looking for a transducer $\aut{S}$
such that $\sema{\aut{S}}\subseteq \prefixes{U}$. In particular, there may be many words in $\prefixes{U}$ 
that cannot be generated by $\aut{S}$. 
\end{itemize}

\subsubsection{Symbolic Observation Table}
Like $\lstar$, the algorithm makes use of a data structure called an \emph{observation table}. 
Unlike $\lstar$, the algorithm uses a \emph{symbolic} table, as defined next. 
%To be precise, it uses a set of tables.
A {symbolic table} over $\ialph,\oalph$ is a tuple $\sytb{H}=(R,C,M)$ where $R,C \subseteq \ialph^*$, $R$ and $C$ are prefix closed, and $M$ is an $|R|$ by $|C|$ matrix, where $M(r,c)$ is a subset of $\oalph$ 
or $\emptysymb$ (meaning that the entry has not yet been filled in). In addition, $M$ should satisfy that 
 for every $r,r'\in R$, $c,c'\in C$ if $rc=r'c'$ then $M(r,c)=M(r',c')$. The table is called \emph{filled} if it does not contain $\emptysymb$.
%Therefore we can use $M(w)$ for $w\in RC$ without specifying the decomposition of $w$ into two words $r\in R$ and $c\in C$ such that $w=rc$.
We use $M(r)$ to denote the sequence $M(r,c_1),\ldots, M(r,c_n)$, where $C=\{c_1,\ldots,c_n\}$. For two rows $r,r'$ we say that $M(r)$ \emph{implies} (or \emph{covers}) $M(r')$ if 
$M(r,c_i)\subseteq M(r',c_i)$ for every $c_i\in C$.

A filled symbolic observation table $\sytb{H}$ defines a set of bi-words, denoted  $\sema{\sytb{H}}$, defined as follows: $\sema{\sytb{H}}=\{{\iw\oplus \ow}\in(\ialph\times\oalph)^*~|~ \iw\in RC \mbox{ and }\forall {r\in R}, {c\in C}\mbox{ s.t. } r\cdot c\preceq \iw \mbox{ we have that } \ow[|rc|] \in M(r,c)\}$. We use $\sytb{H}(v)$ for the  set of words $\{v\oplus w~|~v\oplus w\in\sema{\sytb{H}}\}$.
The symbolic table $\sytb{H}$ is said to \emph{agree with a bi-language $L$} iff $\sema{\sytb{H}}\subseteq \prefixes{L}$.
%The symbolic table $\sytb{H}$ is said to \emph{agree with a symbolic transducer $\aut{A}$} iff $\sema{\sytb{H}}\subseteq \prefixes{\sema{\aut{A}}}$.
A symbolic transducer $\aut{A}$ is said to \emph{agree with a  symbolic table} $\sytb{H}=(R,C,M)$ iff for every $r\in R$ and $c\in C$ we have that ${\sema{\aut{A}}}(rc)\subseteq M(r,c)$.
A concrete finite tree $\tree{T}=\tuple{W,\tau}$ is said to be \emph{covered} by the table $\sytb{H}$ if $W\subseteq RC$ and for every $r\in R$ and $c\in C$ for which $rc\in W$ we have that $\tau(rc)\subseteq M(r,c)$.

A filled symbolic observation table $(R,C,M)$  is said to be \emph{closed} with respect to a subset $B\subseteq R$ termed a \emph{basis} if (i) $R$ and $C$ are prefix closed,
(ii) for every $b\in B$ and for every $\isym\in\ialph$ the word $b\isym$ is in $R$ and (iii) for every $b\in B$ and for every $\isym\in\ialph$ there exists a row $b'\in B$ such that  $M(b')$ implies $M(b\isym)$.{\dfpast{moved to demand a covered row $r\in R\setminus B$ to be equivalent (rather than implied) by a row $b\in B$.}}
We use $(R,C,M,B)$ for a symbolic table which is closed with respect to basis $B$. 
A closed table $(R,C,M,B)$ is said to be \emph{minimal} if $M(b)$ does not cover $M(b')$ for every $b,b'\in B$ s.t. $b\neq b'$.
Note that given a closed table $(R,C,M)$, the set of minimal elements in the partial order induced by implication of rows forms a minimal basis $B$.

%% RBNOTE: I think this should be moved to the end of Section 3.1
%% +
\subsubsection{Extracting a Transducer }
From a  closed  symbolic table $\sytb{H}=(R,C,M,B)$ we can extract a symbolic transducer 
$\aut{A}_{\sytb{H}}=(\ialph,\oalph,B,\epsilon,\delta,\eta)$ where for every $\isym\in\ialph$ and $b\in B$ we have that 
$\delta(b,\isym)=\{b'\in B ~|~ M(b') \text{ implies } M(b\isym)\}$ and $\eta(b)=M(b,\epsilon)$. Note that the resulting transducer
is \emph{non-deterministic} in general (since there may be a row $b\in B$ for which $r\sigma$ is implied by a set of basis rows $C=\{b_1,\ldots,b_k\}$, in which case there will be transitions from $b$ on $\sigma$ to the set $C$. %\{b_1,\ldots,b_k\}$). 

\begin{claim}\label{clm:extracted-trans}
    Let $\sytb{H}$ be a closed and minimal symbolic table, and $\aut{A}_{\sytb{H}}$ the transducer extracted from it.
    Then  $\aut{A}_{\sytb{H}}$ is a minimal transducers that agrees with $\sytb{H}$ and for any other minimal transducer $\aut{A}$ that agrees with $\sytb{H}$
    it holds that  $\sema{\aut{A}}\subseteq \sema{\aut{A}_{\sytb{H}}}$. 
\end{claim}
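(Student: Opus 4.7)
The plan is to verify three properties of $\aut{A}_{\sytb{H}}$ in sequence: agreement with $\sytb{H}$, minimality in state count among agreers, and being maximal in language among such minimal agreers.

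For agreement, I would fix $r \in R$, $c \in C$ and examine an arbitrary simulation $b_0 = \epsilon, b_1, \dots, b_n$ of $\aut{A}_{\sytb{H}}$ on input $u = rc$. By induction on $k$, I would establish the invariant that, after reading any prefix of length $k$ lying in $R$, $M(b_k)$ implies $M(u[..k])$. The base $k=0$ is immediate. For the inductive step, the definition of $\delta$ yields $M(b_k)$ implies $M(b_{k-1} i_k)$; combining this with the IH via the table-consistency condition (entries $(r,c)$ and $(r',c')$ agree when $rc = r'c'$) propagates the implication through the $i_k$-extension. Evaluating at $k = |r|$ gives $M(b_{|r|}, c') \subseteq M(r, c')$ for every $c' \in C$. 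An auxiliary induction along the suffix $c = i_{|r|+1}\cdots i_n$, whose prefixes all lie in $C$ by prefix-closedness, then shows $M(b_{|r|+j}, \epsilon) \subseteq M(r, c[..j])$; setting $j = |c|$ yields $o_n \in \eta(b_n) = M(b_n, \epsilon) \subseteq M(r, c)$ as required by the agreement condition.

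For the minimality claim, I would argue that any transducer $\aut{A}$ agreeing with $\sytb{H}$ requires at least $|B|$ states: were two distinct basis elements $b \neq b'$ to land in a common state $q$ of $\aut{A}$, the behavior from $q$ on every $c \in C$ would be simultaneously bounded by $M(b,c)$ and $M(b',c)$, and the distinguishing column guaranteed by the antichain (minimality) condition of $\sytb{H}$ would contradict consistent agreement. Since $\aut{A}_{\sytb{H}}$ itself has exactly $|B|$ states it is minimal. For maximality, given any other minimal agreer $\aut{A} = (\ialph, \oalph, Q, q_\iota, \delta_{\aut{A}}, \eta_{\aut{A}})$, the pigeon-matching argument produces a bijection $\phi : B \to Q$ with $\phi(\epsilon) = q_\iota$. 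Because $\delta(b,\sigma)$ and $\eta(b)$ of $\aut{A}_{\sytb{H}}$ are by construction the \emph{maximal} sets of basis elements and outputs consistent with the table entries, agreement of $\aut{A}$ forces $\eta_{\aut{A}}(\phi(b)) \subseteq \eta(b)$ and $\phi^{-1}(\delta_{\aut{A}}(\phi(b), \sigma)) \subseteq \delta(b, \sigma)$. Claim~\ref{clm:subtransducer}, applied across $\phi$, then yields $\sema{\aut{A}} \subseteq \sema{\aut{A}_{\sytb{H}}}$.

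The most delicate step is the inductive lifting in the agreement proof: transferring the row-implication $M(b_{k-1})$ implies $M(u[..k-1])$ across an $i_k$-extension requires relating entries $M(r, c')$ and $M(r i_k, c')$ for each $c' \in C$, which via table consistency reduces to comparing entries indexed by $i_k c'$. Pinning down why the column set $C$ maintained by the algorithm has the structure needed to make this propagation go through, rather than invoking a separate consistency axiom as in classical \lstar, is where the main care is needed.
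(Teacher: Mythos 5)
Your decomposition into three parts (agreement with the table, minimality in the number of states, maximality of the language among minimal agreers) matches the paper's, and your second and third parts are in the same spirit as the paper's: a distinguishing-column argument from the antichain condition on $B$ for minimality, and the observation that $\delta$ and $\eta$ of $\aut{A}_{\sytb{H}}$ are chosen maximally consistent with the table, so that any other minimal agreeing transducer embeds into $\aut{A}_{\sytb{H}}$ and Claim~\ref{clm:subtransducer} applies. The genuine gap is in the agreement part, and it is exactly the step you flag at the end without resolving. Your induction must pass from ``$M(b_{k-1})$ implies $M(u[..k-1])$'' to ``$M(b_{k-1}i_k)$ implies $M(u[..k-1]i_k)$'', i.e.\ propagate row implication across a one-letter extension. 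Nothing in the definition of a closed, minimal table licenses this: closedness only requires each $b\isym$ with $b\in B$ to be covered by some basis row, and there is no \lstar-style consistency condition. The well-definedness condition ($rc=r'c'$ implies $M(r,c)=M(r',c')$) would let you rewrite $M(b_{k-1}i_k,c')$ as $M(b_{k-1},i_kc')$ only when $i_kc'\in C$, and $C$ is prefix-closed, not closed under \emph{prepending} letters, so the columns you would need are generally absent. As written, the induction does not close: the implication between two rows is a statement about the finitely many columns of $C$ only, and by itself says nothing about their one-letter extensions.

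The paper closes this hole by a different route: it assumes a shortest prefix $rc$ on which the extracted transducer and the table disagree, and then appeals to how the offending entry was \emph{filled} --- by $\smq(\sytb{H}(s)\cdot\biword{\isym_\ell}{\qm})$, whose answer is the maximal output set consistent with the entire symbolic history $\sytb{H}(s)$ recorded in the table, not merely with the column entries. It is this semantic content of the filled entries (together with the minimality of the disagreeing prefix, which guarantees the run's outputs up to $s$ are bounded by the table) that makes the one-step propagation sound. To repair your argument you would either have to import this fact about how the entries of a table reached by Alg.~\ref{alg:sfour} are produced by $\smq$ calls, or introduce and maintain an explicit consistency condition on the table; the purely structural properties of closed minimal tables that you invoke are not sufficient.
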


\subsection{The Learning Algorithm}\label{subsec:alg}
\begin{algorithm}[tb]
	\caption{$\SFour$.}\label{alg:sfour}
\begin{algorithmic}[1]
    %\Require{$\smq_U$,$\ssq_U$,$\scq_U$,$\mq_U$ for an exhaustive bi-language $U$}
    %\Ensure{Automaton $\aut{A}$ implementing $U$}
    \Statex
    \Function{$\SFour$}{$\smq_U$, $\scq_U$}
        \Let{$\sytb{H}$}{$(R\!\leftarrow\!\{\epsilon\},C\!\leftarrow\!\{\epsilon\},B\!\leftarrow\!\{\epsilon\}, M(\epsilon,\epsilon)\!\leftarrow\!\specialsymbol)$}
        \Let{$\set{H}$}{$\{ \sytb{H} \}$} 
        \While{$\set{H} \neq \{\}$}
    	    \ForAll{$\sytb{H} \in \set{H}$}
                \If{$\isClosed(\sytb{H} )$}
        		     \Let{$\aut{A}$}{$\extractAut(\sytb{H} )$}
        		     \Let{$\iw$}{$\scq(\aut{A})$}
        		     \If {$v = \epsilon$}
        		     \State \Return{\aut{A}}
        		     \Else
        		     \Let {$u$}{$\findShortestCE(v)$}
        		     \Let {$C$}{$C\cup \suffixes{u}\cup\prefixes{u}$}
        		     \EndIf
                \Else 
                    \State {$\fillTb(\set{H},\sytb{H} )$}\label{line:fill}
                    \State {$\rebase(\sytb{H})$}\label{line:rebase}
                \EndIf
            \EndFor  
        \EndWhile
        \State \Return{unrealizable}
    \EndFunction
\end{algorithmic}
\end{algorithm}

The learning algorithm \SFour\ described in Alg.~\ref{alg:sfour} has access to oracles $\smq$ and $\scq$ (and $\smq$ which can be derived from $\smq$) that provide answers with respect to an unknown prefix-closed exhaustive bi-language $U$. 
It maintains a list of tables $\set{H}$ starting with a single table $\sytb{H}_0=(R,C,M,B)$  initialized 
with $R=\{\epsilon\}$, $C=\{\epsilon\}$, $B=\{\epsilon\}$, and $M(\epsilon,\epsilon)=\specialsymbol$ (where $\specialsymbol$ is the mentioned special symbol). %\footnote{We follow the convention that inputs occur before outputs, thus it is meaningless to provide an output to the empty word.}%\df{I guess we can work with $M(\epsilon,\epsilon)=2^\Gamma$ }

It then processes all tables in the list simultaneously (in BFS), making a small step (e.g. a procedure call) in one, and moving to the next one. 
It proceeds so until an $\scq$ query is answered ``true''.
If the table $\sytb{H}$ is closed, 
\SFour\ extracts a symbolic transducer $\aut{A}$ from it and calls $\scq(\aut{A})$. If the result is $\true$, \SFour\ returns $\aut{A}$.
(Note that $\aut{A}$ is symbolic. If one is interested in a concrete transducer,
any concretization of $\aut{A}$ can be taken instead, since by Claim~\ref{clm:subtransducer} any concretization of $\aut{A}$ is subsumed by $\aut{A}$ and thus is subsumed by $U$ as well.) 
Otherwise, it receives a counterexample $v\in({\ialph\times\oalph})^*$ which is 
a word in $\semafin{\aut{A}}$ but not in $\prefixes{U}$.
It then finds a shortest prefix $u$ of the given counterexample $v$, and adds all suffixes and all prefixes of $\project{u}{\Sigma}$ to $C$. 
	 
%\noindent\makebox[.48\textwidth]{
%	\includegraphics[scale=0.285]{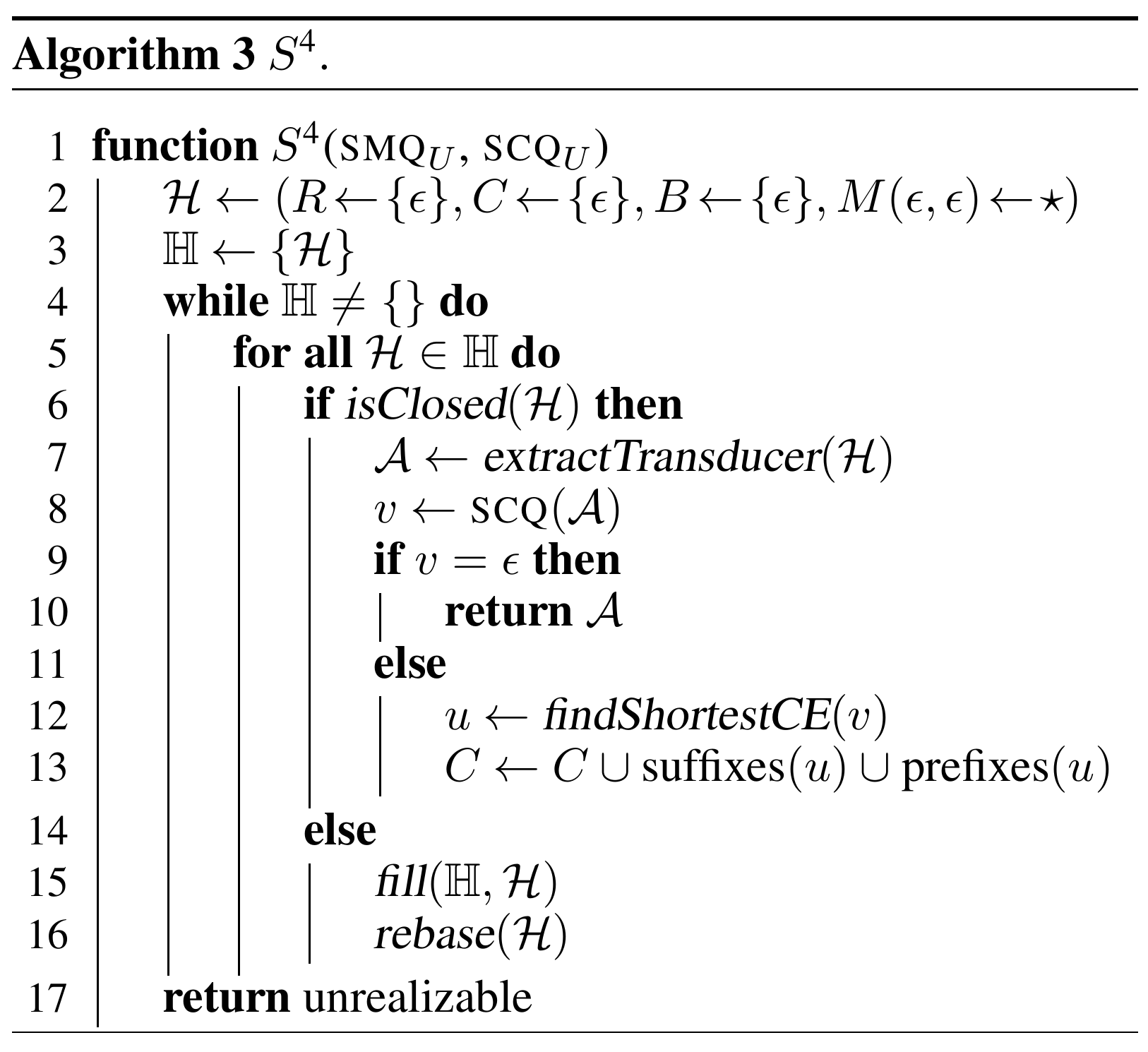}
%}

If $\sytb{H}$ is not closed, \SFour\ fills in the missing entries (entries with $\emptysymb$) in $M$
gradually (see Alg.~\ref{alg:filltb}), using $\smq$ calls as needed. 
That is, an entry $M(r,c)$ for $r\in R$ and $c\in C$
is filled only after $M(r',c')$ was filled for every $r',c'\in \ialph^*$ for which $r'c'\prec rc$. 
To fill in the missing entries it performs $\smq$ queries. 
The entries $M(\isym,\epsilon)$ and $M(\epsilon,\isym)$ are filled with $\smq(\biword{\isym}{\qm})$. 
To fill in entries of the form $M(r,c\isym)$ for $|rc\isym|>1$ it performs the query
$\smq(\sytb{H}(rc)\cdot\biword{\isym}{\qm})$. (Recall that $\sytb{H}(rc)$ returns a symbolic word $\iw\oplus\ow$ s.t. 
$\iw=rc$ and $\ow[i]$ is the set of outputs in the entry $M(\iw[i])$.)
If the answer is not $\emptyset$, it fills the answer in. 
Otherwise, the answer $\emptyset$ is accompanied with a counterexample $\w$ which is passed in a call to procedure $\Split$. 
%$$ providing it the input word $rc\isym$.
 Once the table is filled, it calls the procedure $\rebase$.

\begin{algorithm}
	\caption{$\fillTb$.}\label{alg:filltb}
\begin{algorithmic}[1]
    %\Require{Bi-language $L$, List of Symbolic Tables $\set{T}$, Symbolic Table $\mathrm{T}=(R, C, M, B)$}
    %\Ensure{Automaton $\aut{A}$ implementing $L$}
    \Statex
    \Procedure{$\fillTb$}{$\set{H},\sytb{H}=(R, C, M, B)$}
    \Let{$E_1$}{$\{rc~|~r\in R, c\in C, M(r,c)=\emptysymb\}$}
    \Let {$E_2$}{$\{b\isym c~|~c\in C,\isym\in\ialph,b\in B, b\isym\notin R\}$}
    \Let {$R$}{$R\cup \{b\isym~|~ b\in B, b\isym\notin R\}$}
    \Let {$E$}{$E_1\cup E_2$} \label{line:E:alg:fill}
    \While{$E\neq\emptyset$}
	    \Let {$e\isym$}{a shortest prefix in $E$}
	    \Let {$E$}{$E\setminus \{e\isym \}$}
	    \Let {$(\theta,\w)$}{$\smq(\sytb{H}(e)\biword{\isym}{\qm})$}
	    \If {$\theta\neq\emptyset$}
		    \ForAll {$r\in R$, $c\in C$ s.t. $rc=e\isym$}
		    \Let{$M(r,c)$}{$\theta$} 
		    \EndFor
	   \Else
			 \State {$\Split(\set{H},\sytb{H},\w)$}   
			 \State \Break
	   \EndIf	 
    \EndWhile
    \State \Return
    \EndProcedure
\end{algorithmic}
\end{algorithm}

%\noindent\makebox[.48\textwidth]{
%	\includegraphics[scale=0.285]{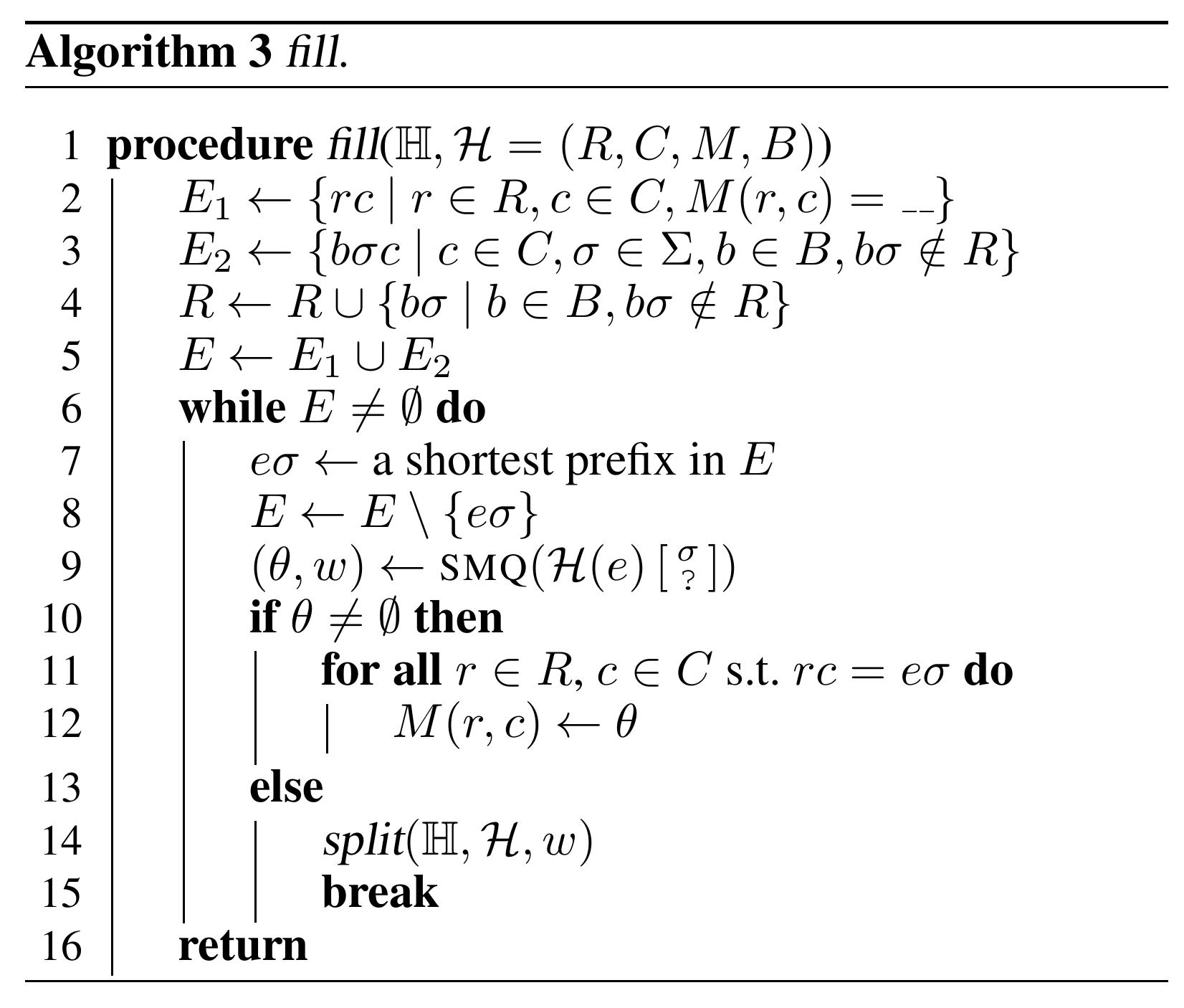}
%}
\begin{algorithm}[tbh]
	\caption{$\Split$.}\label{alg:split}
\begin{algorithmic}[1]
    %\Require{$\smq_U$,$\ssq_U$,$\scq_U$,$\mq_U$ for an exhaustive bi-language $U$}
    %\Ensure{Automaton $\aut{A}$ implementing $U$}
    \Statex
    \Procedure{$\Split$}{${\set{H}},{i_1 i_2 \ldots i_{m+1}\oplus o_1 o_2 \ldots o_{m+1}}$} 
    \Let{$\ell$}{$|\findShortestCE(\w)|$}   
    %\Let{$\w''$}{$\biword{i_1}{o_1}\biword{i_2}{o_2}\ldots \biword{i_\ell}{o_\ell}$} 
    %\Let{$\w''$}{$i_1 i_2 \ldots i_{\ell}\oplus o_1 o_2 \ldots o_{\ell}$}
    \Let{$\set{H}$}{$\set{H} \setminus \{ \sytb{H}\}$} \label{line:H:alg:split}
    \For{$1\leq k \leq \ell$}
        \Let{$\sytb{H}_k$}{$\sytb{H}$}
        \State{\text{Assume }{$\sytb{H}_k=(C_k,R_k,M_k,B_k)$}}
        \If{$k\neq m+1$}
            \Let{$\theta_k$}{$M_k(r,c)\setminus o_k$}
        \Else
            \Let{$\theta_k$}{$\Gamma\setminus o_k$}
        \EndIf
        \If{$\theta_k \neq \emptyset$}
        \ForAll{$r\in R,\ c\in C$ s.t. $rc=i_1\ldots i_k$}
            \Let{$M_k(r,c)$}{$\theta_k \setminus o_k$ }%{$\theta_k\setminus o_k$}
        \ForAll{$r'\in R,\ c'\in C$ s.t. $rc \prec r'c'$}
            \Let{$M_k(r,c)$}{$\emptysymb$}
        \EndFor  
        \EndFor  
        \Let{$\set{H}$}{$\set{H} \cup \{ \sytb{H}_k \}$} 
        \EndIf \label{line:endsplit:alg:split}
    \EndFor
        
    \State \Return
    \EndProcedure
 \end{algorithmic}
\end{algorithm}

%\noindent\makebox[.48\textwidth]{
%	\includegraphics[scale=0.285]{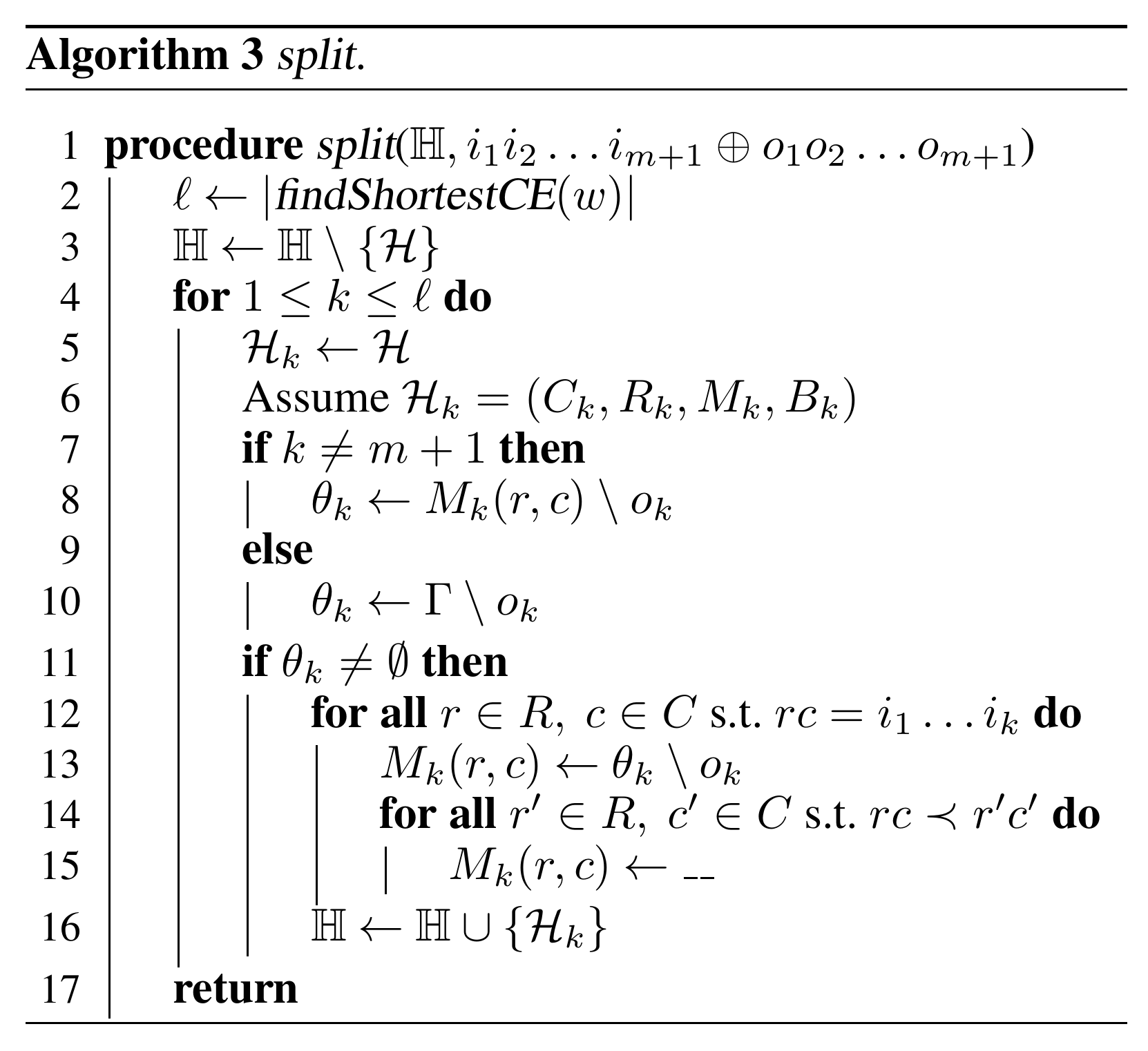}
%}

The procedure $\Split$, on input $\w=i_1 i_2 \ldots i_{m+1}\oplus o_1 o_2 \ldots o_{m+1}$ %\biword{i_1 }{o_1}\biword{i_2}{ o_2} \ldots \biword{i_m}{ o_m}\biword{i_{m+1}}{o_{m+1}}$ for $i_{m+1}=\isym$
 works as follows (see Alg.~\ref{alg:split}).
%It starts by invoking $\ssq(\sytb{H}(rc)\cdot \biword{\isym}{\true})$.\df{change to use \smq\ that return a cex.}
%Let $\tree{T}_\sytb{H}(rc)=\w$ and assume $\w=\biword{i_1}{\theta_1}\biword{i_2}{\theta_2}\ldots \biword{i_m}{\theta_m}$. 
%Since there was no possible completion to $\w\cdot\biword{\sigma}{\qm}$ in $\prefixes{L}$, the result of the $\ssq$ must be a counterexample $\w'$.
%Assume $\w'=\biword{i_1 }{o_1}\biword{i_2}{ o_2} \ldots \biword{i_m}{ o_m}\biword{i_{m+1}}{o_{m+1}}$ for $i_{m+1}=\isym$ and $o_1,o_2,\ldots,o_{m+1}\in\Gamma$.
It first finds the shortest prefix $\w'$ of $\w$ that is also a counterexample. Let
$\ell$ be the length of $\w'$. The algorithm removes $\sytb{H}$ from $\set{H}$,
creates $\ell$ copies $\sytb{H}_1, \ldots, \sytb{H}_\ell$ of the current table $\sytb{H}$
and makes the following changes in them. 
In a table $\sytb{H}_k$, for $1 \leq k \leq \ell$,
given $M_k(i_1i_2\ldots i_k)=\theta_k$, it checks whether  $\theta_k \setminus  o_k$ is non-empty.
If so,  it updates
all entries corresponding to $\iw = i_1i_2\ldots i_k$ to $\theta_k \setminus  o_k$.
In addition, for every entry which is a suffix of  $\iw$, it deletes the content of the entry, i.e., sets it to $\emptysymb$. (Thus, the entry will be refilled using an \smq\ that takes into account the 
revised value for the prefix $i_1i_2\ldots i_k$). Finally, it adds $\sytb{H}_k$ to $\set{H}$.
If however, $\theta_k \setminus  o_k$ is empty, then $\sytb{H}_k$ is considered \emph{infeasible} and is not added to $\set{H}$.
The case where $k={m+1}$ (which occurs if $\w = \w'$ and hence $\ell = m+1$) is somewhat different,
since the entry $M(i_1i_2\ldots i_k)$ has not been filled yet. Thus $\theta_k$ is taken to be the most general, namely $\Gamma$, the entire set of outputs. 
Note that if an entry's value in $\sytb{H}$ was $\theta$ and its new value in $\sytb{H}_k$ is $\theta'$, then $\theta'\subseteq \theta$.

The procedure $\rebase$ finds a minimal set of rows $B$ that covers all rows $R$ and for each $r\in R$ 
keeps the information of which base rows cover it
(see Alg.~\ref{alg:rebase} in the supplementary material).
%(see Alg.~\ref{alg:rebase} in App.~\ref{app:algs}).
It does so by going over all pairs of rows, checking if one covers the others and recording the information.
The basis is set to the subset of rows which are not covered by any row. The cover set is restricted
 to the rows in the basis.
%Whenever a row is added to the basis, it is searched in the basis for the existence of rows that are implied by $b$,
%and removes any such rows from the basis. This process is called \emph{rebasing}.

\commentout{
Consider two words $r\in R$ and $c\in C$.  Assume $r=\sigma_1\sigma_2\ldots\sigma_k$ and $c=\sigma_{k+1}\sigma_{k+2}\ldots\sigma_\ell$.
The symbolic word $\sytb{H}$ associated with $rc$, denoted $\sytb{H}(rc)$ is defined as follows:
\[\sytb{H}(rc)=\sigma_1\sigma_2\ldots\sigma_\ell \oplus M(\sigma_1)M(\sigma_1\sigma_2)\ldots M(\sigma_1\sigma_2\ldots\sigma_\ell).\]
Recall that $R$ and $C$ are prefix closed, and for any two entries $M(r_1,c_1)$ and $M(r_2,c_2)$ such that $r_1c_1=r_2c_2$ we have that $M(r_1,c_1)=M(r_2,c_2)$. Thus, $\sytb{H}(rc)$  is well defined.
}

%A concrete finite tree $\tree{T}=\tuple{T,\tau}$  is said to be \emph{covered} by $\sytb{H}$ if $T\subseteq RC$ 
%and for every $\iw\in T$ we have that $\tau(\iw)\in \sytb{H}(\iw)$.

%% RBNOTE: This is trivial, can be shortened radically to the following
%% +
\subsubsection{Finding the shortest counterexample}
We can replace a counterexample $\iw\oplus \ow$ by the shortest prefix $\iw' \oplus \ow' \prec \iw\oplus \ow$ for which $\mq(\iw' 
\oplus \ow') = \text{``no''}$.

\commentout{
\subsubsection{Finding the shortest counterexample.}
Suppose the returned counterexample is $\iw\oplus \ow$. 
Assume $\iw=\isym_1\isym_2\ldots \isym_\ell$ and $\ow=\osym_1\osym_2\ldots\osym_\ell$.
%Then there exists a sequence of states  $s_0,s_1,s_2,\ldots,s_{\ell+1}$ in the extracted transducer where $s_0=\epsilon$, $s_{i+1}\in\delta(s_i,\sigma_{i+1})$ and $\osym_i\in\eta(s_{i})$. 
The procedure below is invoked on $\iw\oplus \ow$ and finds the shortest prefix of it that is a counterexample.
Consider the following sequence of  queries 
\[\begin{array}{l}
\mq({\epsilon}),    \\ 
\mq(\biword{\isym_1}{\osym_1}    ),\\ 
\mq(\biword{\isym_1}{\osym_1}\biword{\isym_2}{\osym_2}  ), \\ 
\qquad \ldots\\
\mq( \biword{\isym_1}{\osym_1}\biword{\isym_2}{\osym_2}\ldots\biword{\isym_{\ell}}{\osym_{\ell}}). \\
\end{array}\]
The answer to the first query is ``yes'' since $U$ is non-empty and prefix closed, and the answer to the last query is ``no'' since this is exactly the counterexample. 
Let $k$ be the first index for which the answer is ``no". Then $(\iw\oplus\ow)[1..k]$ is the shortest prefix of $\iw\oplus\ow$, which is
a counterexample.}

\subsubsection{Counterexample processing optimization}
For a shortest counterexample  $\iw\oplus \ow$, the algorithm adds all suffixes and all prefixes of $\iw$ to the columns of the table.

Below we argue that this process will eventually lead to termination.
More precisely, we can point to one of the suffixes that will either reveal a new state in the table,
or remove a transition or make the table infeasible. Thus, it suffices to add this suffix alone to the table.\footnote{This generalizes \cite{RivestS1993}'s optimization of \lstar.}
Recall that ${\iw[1..k]=\isym_1\isym_2\ldots \isym_k}$ and $\ow[1..k]={\osym_1\osym_2\ldots\osym_k}$.
Then there exists a sequence of states  $s_0,s_1,s_2,\ldots,s_{k+1}$ of the extracted transducer for which  $s_0=\epsilon$, $s_{i+1}\in\delta(s_i,\sigma_{i+1})$ and $\osym_i\in\eta(s_{i})$.

Consider then the following sequence of $\smq$  queries and their answers:\footnote{Recall that states are elements of $\Sigma^*$.} 
\[\begin{array}{l@{~\cdot~}ll}
\smq( \epsilon & \biword{\isym_1}{\osym_1} \biword{\isym_2}{\osym_2}\biword{\isym_3}{\osym_3}\ldots \biword{\isym_{k-1}}{\osym_{k-1}} \biword{\isym_k}{\qm})&=\theta_0\\ 
\smq( \sytb{H}(s_1) & \biword{\isym_2}{\osym_2}\biword{\isym_3}{\osym_3}\ldots \biword{\isym_{k-1}}{\osym_{k-1}}\biword{\isym_k}{\qm}))  &=\theta_1\\ 
\smq( \sytb{H}(s_2) & \biword{\isym_3}{\osym_3} \ldots \biword{\isym_{k-1}}{\osym_{k-1}}\biword{\isym_k}{\qm}))  &=\theta_2\\ 
\qquad \ldots\\
\smq( \sytb{H}(s_{k-1}) & \biword{\isym_k}{\qm})) &=\theta_{k-1}\\
\end{array}\]
That is, $\theta_i$ is the result of the query regarding the state we reach upon reading the inputs on the prefix of length $i$ and the respective
answers of our transducer, concatenated to the suffix starting at $i+1$, where the last output is omitted and queried about.
Let $\theta_k=\{\osym_k\}$.

%Suppose the result to the $i$-th such query is $\theta_i$. 
Then $\osym_k\notin\theta_0$, as otherwise $\biword{\isym_1}{\osym_1}\biword{\isym_2}{\osym_2}\ldots\biword{\isym_{k}}{\osym_{k}}$ would not be a counterexample.
%On the other hand, $\osym_k\models\theta_k$.
%To see why this is true, note that (i) $M(s_k)$ implies $M(s_{k-1}\isym_k)$ and  (ii) $\osym_k\models M(s_k,\epsilon)$ since $s_k$ allows emitting $\osym_k$. It follows that  $\osym_k\models M(s_{k-1}\isym_k)=\theta_k$. 
On the other hand, clearly ${\gamma_k\in\theta_k}$.
Hence, for the first index we have ${\osym_k\notin \theta_0}$ and for the last index we have that ${\osym_k\in \theta_k}$. 
Let ${1\leq i \leq k}$ be the first index for which ${\osym_k\notin \theta_i}$ and ${\osym_k\in \theta_{i+1}}$. We then add the column $c=\isym_{i+1}\ldots \isym_k$ to the current table.
Consider the entries $M(s_{i-1}\isym_{i},c)=\theta$ and $M(s_{i},c)=\theta'$.
Then we have $\theta_i \subseteq \theta$ and $\theta_{i+1}\subseteq \theta'$.
From $\osym_k\not\subseteq \theta_i$ it follows that $\osym_k\not\subseteq\theta$.
From $\osym_k\in\theta_{i+1}$ it follows that $\osym_k\in\theta'$. Therefore $\theta' \not\subseteq \theta$.
Before adding column $c$ to the table we had that $M(s_i)$ implies $M(s_{i-1}\isym_i)$ (as otherwise $s_0,s_1,\ldots s_{k}$ won't be a valid run on $\isym_1\ldots\isym_k$). 
Now row $c$ breaks this implication (since $M(s_i,c)\nsubseteq M(s_{i-1}\isym_i,c)$). 

%% RBNOTE: This is far too vague for my taste. ``Revelation'' is not defined, neither is ``infeasibility''
%% + clarifications added
\begin{claim}\label{clm:conterexample-converg}
	We claim that after the counterexample processing (in line~\ref{line:rebase} of Alg.~\ref{alg:sfour}) one of the following happens to the current table:
	\begin{enumerate}
		\item %At least one new state is revealed. That is, at least one row is no longer covered by other rows.
		      At least one row $r\in R \setminus B$ is no longer covered by $B$.
		\item For at least one row $b\in B$ and one letter $\isym\in\ialph$, the set of rows covering $b\isym$ is smaller.
		\item The table becomes infeasible and is removed from the set of tables $\set{H}$.		
	\end{enumerate}
\end{claim}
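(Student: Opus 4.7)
The plan is to pick up where the derivation preceding the claim leaves off. After adding the column $c=\sigma_{i+1}\cdots\sigma_k$ and refilling the relevant entries, the table satisfies $M(s_i,c)\not\subseteq M(s_{i-1}\sigma_i,c)$; equivalently, $M(s_i)$ no longer covers $M(s_{i-1}\sigma_i)$, although it did before processing (since $s_i\in\delta(s_{i-1},\sigma_i)$ in the previously extracted transducer). I would split the argument by whether $\fillTb$ (Alg.~\ref{alg:filltb}) invokes $\Split$.

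First, consider the case where $\fillTb$ triggers $\Split$. Inspecting Alg.~\ref{alg:split}, $\sytb{H}$ is removed from $\set{H}$ at line~\ref{line:H:alg:split}, and each of the $\ell$ candidate refinements $\sytb{H}_k$ is reinserted only when its label $\theta_k$ is non-empty. If every $\theta_k$ is empty, $\sytb{H}$ disappears from $\set{H}$ outright, establishing condition~3. Otherwise, every reinserted $\sytb{H}_k$ is a pointwise tightening of $\sytb{H}$ in which the separation $M(s_i,c)\not\subseteq M(s_{i-1}\sigma_i,c)$ is preserved (since tightening only shrinks entries), so the argument below applies to each surviving refinement.

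Next, suppose $\fillTb$ completes and $\rebase$ runs on a filled $\sytb{H}$. The row $s_{i-1}\sigma_i$ sits in $R$, since $s_{i-1}$ was in the old basis $B$ and $\fillTb$ guarantees $b\isym\in R$ for every old-basis $b$ and every letter $\isym$. Let $B'$ denote the new basis returned by $\rebase$ and let $\mathrm{Cov}'(r)=\{b'\in B'\mid M(b')\text{ covers }M(r)\}$. Because $M(s_i)$ no longer covers $M(s_{i-1}\sigma_i)$, we have $s_i\notin\mathrm{Cov}'(s_{i-1}\sigma_i)$, whereas the analogous cover set in the old table contained $s_i$. If $\mathrm{Cov}'(s_{i-1}\sigma_i)=\emptyset$, condition~1 is established through the uncovered row $s_{i-1}\sigma_i\in R\setminus B'$. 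Otherwise, provided $s_{i-1}\in B'$, condition~2 holds with $b=s_{i-1}$, $\isym=\sigma_i$, because the cover set has strictly lost $s_i$.

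The main obstacle is the residual sub-case in which $\rebase$ drops $s_{i-1}$ itself from the basis. The plan is to promote the local separation to $B'$: since $\rebase$ returns a minimal cover of $R$, there exists $b^{\ast}\in B'$ with $M(b^{\ast})$ covering $M(s_{i-1})$, and a short extension argument shows that $b^{\ast}\sigma_i\in R$ inherits the same loss of $s_i$ from its cover set, yielding condition~2 at $b=b^{\ast}$, $\isym=\sigma_i$. This bookkeeping across a basis change is the most delicate step of the proof; every other step follows directly from the preceding derivation together with the definitions of $\fillTb$, $\Split$, and $\rebase$.
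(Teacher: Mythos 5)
Your proposal follows essentially the same route as the paper's proof: both reduce the claim to the broken implication $M(s_i,c)\nsubseteq M(s_{i-1}\isym_i,c)$ established in the counterexample-processing-optimization paragraph, and then case-split between infeasibility (item 3), a row no longer covered by the basis (item 1), and a strictly smaller cover set (item 2). You are in fact somewhat more careful than the paper's short sketch, which does not explicitly treat either the surviving refinements produced by $\Split$ or the sub-case where $\rebase$ evicts $s_{i-1}$ from the basis; the ``short extension argument'' you defer in that last sub-case is precisely the step the paper's own proof silently elides.
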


\section{Correctness and complexity}\label{sec:correctness}% of the algorithm}\label{sec:correctness}

Before we provide the correctness and complexity results we introduce
the measures we use to state them.

\subsubsection{The Rank Measure}
The complexity of the \lstar\ algorithm is defined with respect to the \emph{rank} of the target
language, which is the number of states of the minimal DFA for the language, or equivalently
the number of states in the right congruence relation $\sim_L$.\footnote{For two finite words $u,v$ the relation $u\sim_L v$ holds iff
$uw\in L \Leftrightarrow vw\in L$ for every $w\in\Sigma^*$. }

We define a similar right congruence relation for exhaustive prefix-closed bi-languages.
Let $U$ be an exhaustive bi-language over $\Sigma\times\Gamma$.
For $v_1,v_2\in\Sigma^*$ we say that $v_1 \sim_U v_2$
if for every $w\in(\Sigma\times\Gamma)^*$, $u_1\in U_{|v_1}$,
$u_2\in U_{|v_2}$, we have that $(v_1\oplus u_1)\cdot w\in U$
\emph{iff} $(v_2\oplus u_2)\cdot w\in U$.
We use $\rank{U}$ for the rank of $U$.

There is an additional complexity measure that we need to define on
our target language in order to analyze termination and complexity
of our algorithm. This is the \emph{incompatibility measure} defined as follows.

\subsubsection{The Incompatibility Measure} %Compatible vs. Conflicting Trasducers}
Let $U$ be a target language and $\aut{S}_1,\aut{S}_2$ be symbolic transducers embedded in $U$.
A word $\iw=\isym_1\isym_2\ldots\isym_m\isym_{m+1}$ is said to \emph{witness the incompatibility} of $\aut{S}_1,\aut{S}_2$  wrt. $U$ if 
given ${\given{\sema{\aut{S}_i}}{\iw}=\theta^i_1\theta^i_2\ldots\theta^i_m}$ for $i\in\{1,2\}$
and letting $\theta_j=\theta^1_j \cup \theta^2_j$ for $1\leq j\leq m$,
the result of $\smq(\isym_1\isym_2\ldots\isym_m\isym_{m+1}\oplus \theta_1\theta_2\ldots\theta_m\qm )$ is $\emptyset$, and the result for any respective prefix is not $\emptyset$.
The transducers $\aut{S}_1,\aut{S}_2$ are said to be \emph{incompatible} (or \emph{conflicting}) wrt. $U$  if there exists a word witnessing their incompatibility, otherwise
 they are said to be \emph{compatible}.
 
 \begin{claim}\label{clm:distin-words}
	If $\{\aut{C}_1,\ldots,\aut{C}_m\}$  are pairwise incompatible
	then there exists a word $\iw$ witnessing their incompatibility of size at most $|Q_1|\times|Q_2| \ldots \times |Q_m|$,
	where $Q_i$ is the set of states of $\aut{C}_i$.
\end{claim}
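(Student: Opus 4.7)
The plan is to work in the product automaton $P=\aut{C}_1\times\cdots\times\aut{C}_m$, whose state space $Q=Q_1\times\cdots\times Q_m$ has size at most $\prod_i |Q_i|$. For each product state $\bar{q}=(q_1,\ldots,q_m)$ set the combined output $\theta(\bar{q})=\{\eta_i(q_i):1\le i\le m\}$; on input $\iw=\sigma_1\cdots\sigma_n$, the deterministic run $\bar{q}_0,\ldots,\bar{q}_n$ in $P$ yields the collection's symbolic word $\iw\oplus\theta(\bar{q}_1)\cdots\theta(\bar{q}_n)$, and $\iw$ witnesses the (combined) incompatibility of the collection exactly when $\smq$ returns $\emptyset$ on this word with $\qm$ in its last position while returning nonempty on every strict prefix.

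I would first show a combined witness exists. Pick any pair $\aut{C}_i,\aut{C}_j$, incompatible by hypothesis, with pairwise witness $\iw_{ij}$: the collection's symbolic word on $\iw_{ij}$ is also not contained in $\prefixes{U}$, because enlarging each position's output set from $\{\eta_i(q_i^k),\eta_j(q_j^k)\}$ up to the full $\theta(\bar{q}_k)$ only adds concrete instantiations and hence preserves any existing violation. Let $\iw$ be a shortest combined witness. Suppose for contradiction $|\iw|>\prod_i|Q_i|$; by pigeonhole on $Q$ two positions $p<q$ on the run of $P$ on $\iw$ satisfy $\bar{q}_p=\bar{q}_q$, and the spliced input $\iw'=\iw[..p]\cdot \iw[q{+}1..]$ is strictly shorter. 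By determinism of $P$ this spliced input traverses $\bar{q}_0,\ldots,\bar{q}_p,\bar{q}_{q+1},\ldots,\bar{q}_n$, reaching the same final state with the same final combined output.

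The main obstacle is to confirm that $\iw'$ is itself a witness, which would contradict the minimality of $\iw$. Cutting an internal loop shortens the input but also removes intermediate output positions, and $\prefixes{U}$, while prefix-closed, is not closed under such infix-deletions, so the ``exit from $\prefixes{U}$'' property does not immediately transfer. To close this gap, I would fix a concrete instantiation $o_1\cdots o_n$ with $o_k\in\theta(\bar{q}_k)$ certifying $\iw\oplus o_1\cdots o_n\notin\prefixes{U}$, and track the minimal DFA of $\prefixes{U}$ alongside $P$; because each $\aut{C}_i$ is embedded in $U$, every individual output trajectory keeps that DFA in its live states, so by selecting a repeated product state at a position where the DFA-state also agrees (equivalently, by choosing a representative concrete path on which both components synchronize on the loop), the splicing can be performed in both components at once, keeping $\iw'\oplus o_1\cdots o_p o_{q+1}\cdots o_n$ outside $\prefixes{U}$ and yielding the desired shorter witness.
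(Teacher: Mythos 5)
Your proof starts from the same product construction as the paper, but the counting argument you use --- pump down a shortest witness by splicing out a loop on a repeated product state --- runs into exactly the obstacle you yourself flag, and your proposed repair does not recover the stated bound. Whether $\iw$ is a witness is not a function of the product state of $\aut{C}_1,\ldots,\aut{C}_m$ alone, since the answer to \smq\ depends on $\prefixes{U}$; to make splicing legitimate you therefore propose to additionally synchronize on the state of a DFA for $\prefixes{U}$. But then the pigeonhole only fires once the witness is longer than $|Q_1|\times\cdots\times|Q_m|$ \emph{times} the number of states of that DFA, which yields a bound of order $|Q_1|\times\cdots\times|Q_m|\cdot\rank{U}$ rather than the claimed $|Q_1|\times\cdots\times|Q_m|$. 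There is a second problem: \smq\ returning $\emptyset$ at the last position means that for \emph{every} candidate output $\osym\in\oalph$ some concretization of the prefix escapes $\prefixes{U}$; these certificates may follow different trajectories in the DFA of $\prefixes{U}$, so fixing a single concrete instantiation $o_1\cdots o_n$ and tracking one DFA path does not let you splice all certificates at once --- you would need to synchronize up to $|\oalph|$ trajectories simultaneously, inflating the bound further.

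The paper avoids pumping entirely. It recasts incompatibility as the impossibility of assigning to each \emph{state} of the product a consistent output label; under that reformulation ``bad'' is by construction a property of the product state, incompatibility is equivalent to the existence of a reachable bad state, and the \emph{shortest access word} to such a state --- whose length is at most the number of product states --- serves as the witness. To salvage your route you would either have to prove that being a witness depends only on the reached product state (which is essentially the paper's reformulation, after which the splicing step becomes unnecessary), or settle for the weaker bound involving $\rank{U}$.
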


Let $U$ be a target language. 
We use $\incompatible{U}$ for the maximal number of transducers that are embedded in $U$ and are pairwise incompatible wrt. $U$.

	We note that while a deterministic symbolic transducer may output one of a set of outputs in a given state,
	it cannot model two conflicting transducers simultaneously, while a non-deterministic transducer (which may move to different states on the same input) can. 
	The following claim states that the non-deterministic transducers that the algorithm
	returns are consistent and conform only to compatible transducers.

\begin{claim}\label{clm:returns-consistent-trans}
	The procedure $\extractAut(\sytb{H})$  returns a consistent transducer. %=(R,C,M,B)
\end{claim}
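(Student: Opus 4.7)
The plan is to exhibit a concrete $2^\Gamma$-labeled $\Sigma$-tree $\tree{T}$ whose induced bi-language coincides with the set of bi-words generated by $\aut{A}_\sytb{H}$. The natural candidate is to let, for every $v\in\ialph^*$,
\[
\tree{T}(v) \;=\; \bigcup_{b \in \delta^*(\epsilon, v)} \eta(b),
\]
where $\delta^*$ is the standard extension of $\delta$ to $\ialph^*$; thus $\tree{T}(v)$ collects every output that \emph{some} reachable state of $\aut{A}_\sytb{H}$ after reading $v$ may emit. The task is then to prove the two inclusions $\sema{\aut{A}_\sytb{H}} = \sema{\tree{T}}$ (viewing the tree as defining a bi-language as in Sec.~\ref{sec:defs}).

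The inclusion $\sema{\aut{A}_\sytb{H}} \subseteq \sema{\tree{T}}$ is straightforward: from any run $b_0, b_1, b_2, \dots$ of $\aut{A}_\sytb{H}$ producing $w$ on input $v$, each $b_i$ lies in $\delta^*(\epsilon, v[..i])$ and $w[i]\in\eta(b_i)$, hence $w[i]\in\tree{T}(v[..i])$. I would dispatch this in one short inductive paragraph.

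The reverse inclusion is the heart of the argument. Given $v\oplus w$ with $w[i]\in\tree{T}(v[..i])$ for every $i$, I would construct, inductively in $i$, a run $b_0,b_1,\ldots$ with $b_{i+1}\in\delta(b_i,v[i+1])$ and $w[i+1]\in\eta(b_{i+1})$. The main obstacle is the usual pitfall of non-deterministic simulation: a locally valid choice of $b_i$ may preclude matching the prescribed $w[i+1]$ from $b_i$'s successors, even though $w[i+1]\in\tree{T}(v[..i+1])$ is witnessed by some other reachable state. To overcome this, I would prove a \emph{local uniformity} lemma: for any $b\in B$ reached by reading $v[..i]$ and any $\gamma\in\tree{T}(v[..i+1])$, there exists $b'\in \delta(b,v[i+1])$ with $\gamma\in\eta(b')$. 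The proof of this lemma goes through the covering structure of the table: by closedness, $b\cdot v[i+1]\in R$ and $\delta(b,v[i+1])=\{b'\in B:M(b')\text{ covers }M(b\cdot v[i+1])\}$; I would then argue that the set $\{\eta(b'):b'\in\delta(b,v[i+1])\}$ jointly covers $M(b\cdot v[i+1],\epsilon)$, which in turn equals the $\epsilon$-column contribution inherited by \emph{every} other basis row reachable after $v[..i+1]$, because otherwise the underlying $\smq$-answers responsible for those rows would witness a conflict that the algorithm would have resolved by invoking $\Split$.

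The principal obstacle I foresee is establishing this last point rigorously: namely, that whenever two basis rows are simultaneously reachable, their local next-step behavior is compatible in the table's $M$-values. Unpacking the invariants maintained by $\fillTb$ (which propagates $\smq$-answers based on the accumulated row symbols $\sytb{H}(rc)$) and by $\Split$ (which removes any table exhibiting a witnessed incompatibility) gives precisely the invariant needed, after which the inductive construction of the run closes without further difficulty, yielding $v\oplus w\in\sema{\aut{A}_\sytb{H}}$ and the desired consistency.
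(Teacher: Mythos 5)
Your candidate tree is exactly the one the paper constructs: the paper's proof sketch labels the node $v\isym$ by $\bigcup_{q\in\delta(q_\iota,v\isym)}\eta(q)$, i.e., by the union of the outputs of all states reachable on $v\isym$, and argues by induction on word length. Your treatment of the inclusion $\sema{\aut{A}_{\sytb{H}}}\subseteq\sema{\tree{T}}$ coincides with what that induction actually delivers. Where you go beyond the paper is in isolating the reverse inclusion as the substantive content of consistency: the paper's sketch simply asserts that ``the tree agrees with the transducer on every input word'' without addressing the non-deterministic-simulation issue you raise. You have located the right difficulty.

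However, your proposed route to closing it has a genuine gap. Your local uniformity lemma is justified by the claim that a failure of next-step compatibility between two simultaneously reachable basis rows ``would witness a conflict that the algorithm would have resolved by invoking $\Split$.'' That is not how $\Split$ is triggered: inspecting $\fillTb$, the procedure $\Split$ is called only when an $\smq$ returns $\emptyset$, i.e., when a single table entry would become empty. Two basis rows $b_1,b_2$ both reachable after $v$ may perfectly well have $M(b_1\isym,\epsilon)\neq M(b_2\isym,\epsilon)$ with both sets non-empty; the table stays feasible, no query returns $\emptyset$, and no split occurs. Moreover, the covering relation defining $\delta$ only yields $\eta(b')\subseteq M(b\isym,\epsilon)$ for each $b'\in\delta(b,\isym)$ --- an inclusion pointing the wrong way for your purposes: it does not show that $\bigcup_{b'\in\delta(b,\isym)}\eta(b')$ exhausts $\tree{T}(v\isym)$, nor that an output $\gamma$ witnessed through some other reachable row $p$ (via a basis row covering $M(p\isym)$) is attainable from $b$. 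So the inductive construction of the run does not close as described; you would need an invariant relating $M(p\isym)$ and $M(b\isym)$ for simultaneously reachable $p$ and $b$, which neither your argument nor, it should be said, the paper's own one-direction sketch actually establishes.
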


\begin{proof}[Proof Sketch]
    Let $\aut{A}_{\aut{H}}=(\ialph,\oalph,Q,q_\iota,\delta,\eta)$ be the extracted transducer.
	%To be consistent it needs to corresponds to a $2^\Gamma$-labeled $\Sigma$-tree.
	We show that there exists a $2^\Gamma$-labeled $\Sigma$-tree $\tree{T}$ 
	that agrees with $\aut{A}_{\aut{H}}$ on every word by induction on the depth of the tree.
	For the root, we label the node $\specialsymbol$.
	Assume the labels of tree agree with the output of the transducer for every word $v\in\ialph^*$
	of length $\ell$. Consider such a word $v$ and its one letter extension $v\isym$. 
	Assume $\delta(q_\iota,v\sigma)=Q$.
	Recall that states in $\aut{A}_{\aut{H}}$ are words in $\Sigma^*$ that correspond
	to rows in $\aut{H}$. 
	We set $\tree{T}$ to label the node  $v\isym$ by the union $\cup_{r\in Q}M(r,\epsilon)$. 
	Again, by induction on the length of a word, the tree agrees with the transducer on every input word.
\end{proof}

\subsubsection{Termination and Complexity Results}
We prove the termination and complexity results gradually.

First, we observe that if $U$ exactly corresponds to a language of a concrete transducer, then the
algorithm performs exactly as \lstar\ for Moore machines.
\begin{lemma}\label{lem:com-one-lang}
	If the unknown bi-language $U$ contains words corresponding to a single concrete transducer then the algorithm never maintains more than one table, and it terminates in time polynomial in the number of states of the transducer.
\end{lemma}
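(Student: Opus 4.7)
The plan is to reduce to the standard \lstar\ analysis for Moore machines by proving (a) the algorithm never splits and (b) the single table grows in a bounded, monotone way. Let $C$ be the unique concrete transducer whose traces form the concrete trees contained in $U$, and let $n=|Q_C|$.

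First I would establish, by induction on $|rc|$, the invariant that every cell $M(r,c)$ filled by \fillTb\ contains the output $C$ emits on reading $rc$. The base case $M(\epsilon,\epsilon)=\specialsymbol$ is immediate; for the inductive step, when \fillTb\ invokes $\smq(\sytb{H}(e)\cdot\biword{\sigma}{\qm})$, the induction hypothesis says the partial-row constraints in $\sytb{H}(e)$ all contain $C$'s outputs, so $C$'s unique concrete extension of this input still lies in $\prefixes{U}$. Therefore the returned maximal set is non-empty and contains $C$'s output at the queried position. Consequently \Split\ is never invoked (its sole caller is \fillTb\ on an empty \smq\ answer), so $\set{H}$ remains a singleton throughout the entire run.

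Given that a single table is maintained, the algorithm reduces to an \lstar-style loop: close, extract a hypothesis, and test it. Since every row is compatible with $C$ on every column, the equivalence classes induced by rows refine $\sim_U$ restricted to $C$'s implementation, so the minimal basis has at most $n$ elements. Using Claim~\ref{clm:conterexample-converg} (where the third, ``infeasible'' alternative is vacuous because no splitting occurs), each counterexample either creates a new basis row or strictly shrinks a transition, both monotone and together bounded by $O(n\cdot|\Sigma|)$. Each round adds at most $O(\ell)$ new columns and triggers $O(n\ell)$ fresh \smq\ calls, where $\ell$ is the maximal counterexample length. The overall time and query complexity is therefore polynomial in $n$, $|\Sigma|$, $|\Gamma|$ and $\ell$, matching the standard \lstar\ bound.

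The main obstacle I anticipate is maintaining the Step-1 invariant across counterexample processing: one must verify that the addition of columns after a $\scq$ and the call to \rebase\ never corrupt previously filled cells in a way that pushes a later \smq\ prefix out of $\prefixes{U}$. Since adding columns only triggers fresh \smq\ queries (answered by $C$'s trace and hence non-empty), and \rebase\ merely reorganizes the basis without modifying entries, the invariant is preserved and the reduction to \lstar\ for Moore machines goes through cleanly.
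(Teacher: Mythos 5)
Your proposal is correct and follows essentially the same route as the paper's (much terser) proof sketch: both argue that under this hypothesis the symbolic membership queries never return $\emptyset$ (the paper notes they are in fact singletons), hence \Split{} is never invoked and the run degenerates to the standard \lstar{} loop for Moore machines with its usual polynomial bound. Your version merely makes explicit the induction on filled cells and the counterexample-progress argument via Claim~\ref{clm:conterexample-converg} that the paper leaves implicit.
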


Next, we discuss the case where $U$ does not contain conflicting implementations,
namely $\incompatible{U}=1$, but there might be several (compatible) transducers $\aut{S}$
for which $\sema{\aut{S}}\subseteq U$.

\begin{lemma}\label{lem:no-conflicts}
	Assume $\incompatible{U}=1$, $\rank{U}=n$ and $\ell$ is the size of the maximal counterexample received by the algorithm.
	Then the algorithm never maintains more than one table; it terminates in time polynomial in $n$ and $\ell$; the number of $\scq$s asked is bounded by $O(n|\Sigma|)^2$;
	and the number of $\smq$s is bounded by $O(\ell(n|\Sigma|)^3)$.
\end{lemma}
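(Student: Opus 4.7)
The plan is to first establish that the set of tables $\set{H}$ remains a singleton throughout the run, then to bound the growth of this single table $\sytb{H}=(R,C,M,B)$ in terms of $n=\rank{U}$ and $\ell$, and finally to convert these structural bounds into the claimed query and time complexities.

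\emph{Singleton $\set{H}$ and the main obstacle.} I would maintain as an invariant that, whenever $\fillTb$ issues a query $\smq(\sytb{H}(e)\biword{\isym}{\qm})$, each cell entry $\theta_j$ composing $\sytb{H}(e)$ is the maximal set of outputs at position $j$ such that every tuple in $\theta_1\times\cdots\times\theta_j$ yields a prefix of $U$. The main obstacle is showing that under $\incompatible{U}=1$ the response is never $\emptyset$. I would argue contrapositively: if the query returned $\emptyset$, then for each $o_{m+1}\in\Gamma$ there is a concrete obstructing witness $(o_1^{o_{m+1}},\ldots,o_m^{o_{m+1}})\in\theta_1\times\cdots\times\theta_m$ with $i_1\cdots i_{m+1}\oplus o_1^{o_{m+1}}\cdots o_m^{o_{m+1}}o_{m+1}\notin\prefixes{U}$; from two such witnesses one can assemble a pair of embedded transducers whose union violates compatibility at $i_1\cdots i_{m+1}$, contradicting $\incompatible{U}=1$. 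Consequently $\Split$ is never invoked, so $\set{H}$ never exceeds a single table.

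\emph{Bounding the table.} Any two basis rows $b_1,b_2\in B$ are pairwise non-covering, so their $M$-rows are incomparable. But $b_1\sim_U b_2$ would force $M(b_1,c)=M(b_2,c)$ for every $c\in C$ (since $\smq$ returns the same maximal valid set for both contexts), hence mutual covering---a contradiction. Thus distinct basis rows sit in distinct $\sim_U$-classes, so $|B|\le n$ and $|R|\le |B|(1+|\Sigma|)=O(n|\Sigma|)$. The column set $C$ grows only during counterexample processing, by at most $2\ell$ entries each time.

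\emph{Query and time bounds.} By Claim~\ref{clm:conterexample-converg}, each processed counterexample either enlarges $|B|$ (at most $n$ times), shrinks the cover-set of one of the at most $n|\Sigma|$ basis transitions (each of which can shrink at most $n$ times), or kills the table (impossible since $\Split$ is never invoked). The total number of counterexamples is therefore $O(n^2|\Sigma|)$, giving at most $O((n|\Sigma|)^2)$ $\scq$s. Since no entry is ever re-emptied, each cell incurs at most one $\smq$, so the total number of $\smq$s is at most $|R|\cdot|C|=O(n|\Sigma|)\cdot O(\ell(n|\Sigma|)^2)=O(\ell(n|\Sigma|)^3)$. All auxiliary work per iteration ($\isClosed$, $\rebase$, $\extractAut$, $\findShortestCE$) is polynomial in $|R|,|C|,|\Sigma|,|\Gamma|$ and $\ell$, so the total running time is polynomial in $n$ and $\ell$.
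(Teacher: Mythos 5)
Your proposal is correct and follows essentially the same route as the paper: $\incompatible{U}=1$ forces every $\smq$ to return a non-empty set (the paper delegates this to Proposition~\ref{prop:splitting}), so no split occurs and a single table is maintained; the basis is bounded by $\rank{U}$, counterexamples are counted via Claim~\ref{clm:conterexample-converg}, and the column and $\smq$ bounds follow from the table dimensions. Your write-up merely fills in a few steps the paper leaves implicit (the contrapositive construction of two incompatible embedded transducers, and the $\sim_U$-class argument for $|B|\le n$).
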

\begin{proof}[Proof sketch]
	In this case the symbolic membership queries always produce non-empty set of output expression, therefore no splitting of tables will occur. 
	The size of the basis is bounded by the rank $n$. Therefore the number of rows is bounded by $n+n|\Sigma|$.
	The number of columns is determined by the number of counterexamples received for an $\scq$. %This number is bounded by the number of $\scq$s.
	Since a counterexample leads to adding a new state to the basis or eliminating at least one implication (see Claim~\ref{clm:conterexample-converg}), and the number of implications 
	is bounded by a square of the number of rows, the number of \scq s is at most $(n+n|\Sigma|)^2$.
	Given that the maximal length of a counterexample is $\ell$, and for each counterexample we add all prefixes and suffixes,
	 the number of columns is bounded by $2\ell(n+n|\Sigma|)^2$.
\end{proof}

We are now ready to discuss the most general case, where $\incompatible{U}=m$ and $\rank{U}=n$.

\begin{theorem}\label{thm:complexity}
If $\incompatible{U}=m$ and $\rank{U}=n$, and $\ell$ is the size of the maximal counterexample received by the algorithm, then the algorithm
(Alg.~\ref{alg:sfour}) terminates in time polynomial in $m$, $n$ and $\ell$. The number of $\scq$s asked is bounded by $O(mn|\Sigma|)^2$
	and the number of $\smq$s is bounded by $O(m\ell(n|\Sigma|)^3)$.
\end{theorem}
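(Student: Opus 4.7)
The plan is to bootstrap off of Lemma~\ref{lem:no-conflicts}, which already handles the per-table analysis in the conflict-free setting, and then to control the number of tables that can live in $\set{H}$ simultaneously and the number of splits that can ever occur. The structure is: (i) show that each table $\sytb{H}\in\set{H}$ is consistent with at least one transducer embedded in $U$, so that the single-table bounds transfer; (ii) bound the number of distinct tables produced over the whole run by $O(m)$ using $\incompatible{U}=m$; (iii) combine with the counterexample-processing progress measure of Claim~\ref{clm:conterexample-converg}.

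For (i), I would argue inductively on splits. The initial table $\sytb{H}_0$ trivially agrees with $U$. Whenever $\Split$ is invoked on $\sytb{H}$ with a counterexample $\w$, the procedure restricts one prefix's output set by removing the offending letter $o_k$ and discards as infeasible any copy whose restricted entry is empty. By the definition of $\smq$, every surviving copy still admits at least one transducer $\aut{C}$ embedded in $U$, namely one whose outputs agree with the (now-restricted) entries along the affected prefix. Hence every table in $\set{H}$ always agrees with $U$ in the sense of Section~\ref{sec:learn}, and the row/column bookkeeping of Lemma~\ref{lem:no-conflicts} (rank$\leq n$, basis $\leq n$, $\scq$-induced column growth) applies to each table individually. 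The key subtlety here is that the rows of a single table might be populated using answers that are consistent with several compatible transducers simultaneously, but compatibility (in the sense of our incompatibility measure) is exactly what guarantees that $\smq$ answers do not collide inside a surviving table.

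For (ii), which I expect to be the main obstacle, I would associate with each table $\sytb{H}\in\set{H}$ the set $\mathcal{T}(\sytb{H})$ of transducers embedded in $U$ that agree with $\sytb{H}$, i.e.\ that could still be refinements of $\extractAut(\sytb{H})$. The invariant I want is that the family $\{\mathcal{T}(\sytb{H})\mid\sytb{H}\in\set{H}\}$ is, up to the equivalence relation of mutual compatibility, an antichain of pairwise incompatible ``witness sets''. A split arises precisely when the current table is forced to track two transducers whose joint $\smq$ answer is empty, i.e.\ that are incompatible on the queried prefix; the resulting surviving copies correspond to disjoint compatibility classes of $\mathcal{T}(\sytb{H})$. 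Since at most $m=\incompatible{U}$ pairwise incompatible transducers can be embedded in $U$, the number of such compatibility classes, and thus $|\set{H}|$, is bounded by $m$ at every step, and so is the total number of tables ever instantiated.

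For (iii), I would apply Lemma~\ref{lem:no-conflicts} to each of the at most $m$ tables: each contributes $O((n|\Sigma|)^2)$ conjecture queries and $O(\ell(n|\Sigma|)^3)$ symbolic membership queries before either terminating with an accepted conjecture or being split/killed. The bookkeeping per split (invoking $\Split$ and $\rebase$) is polynomial in the current table size by inspection of Alg.~\ref{alg:filltb}--Alg.~\ref{alg:split}. Multiplying by the $O(m)$ bound on the number of tables yields the stated $O((mn|\Sigma|)^2)$ bound on $\scq$s and $O(m\ell(n|\Sigma|)^3)$ bound on $\smq$s, and overall time polynomial in $m$, $n$, and $\ell$. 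Termination then follows because at each iteration the algorithm either makes progress in one of the three ways listed in Claim~\ref{clm:conterexample-converg} or removes a table from $\set{H}$, and both progress measures are bounded.
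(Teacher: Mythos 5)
Your proposal is correct and follows essentially the same route as the paper's proof: invoke Lemma~\ref{lem:no-conflicts} for the per-table bounds, argue via the incompatibility measure that the number of tables (leaves of the split tree) is $O(m)$, and multiply. You supply more detail than the paper does on why each surviving table still tracks some transducer embedded in $U$ and why splits correspond to genuine incompatibilities (the paper relegates the latter to Proposition~\ref{prop:splitting} in the appendix), but the decomposition and the key counting argument are the same.
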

\begin{proof} 
In this case since there are $m$ incompatibilities, there will be at most $m$ splits of tables. Note that the algorithm essentially performs a BFS
on these tables. The number of leaves in the spanned tree is bounded by $m$, therefore its size is $O(m)$.
For each leaf, the processing time is bounded by a polynomial in $n$ and $\ell$ as per Lemma~\ref{lem:no-conflicts}.
Therefore the overall number of steps is bounded by a polynomial in $m$, $n$ and $\ell$,
and the number of queries is at most $m$ times the number of queries as per Lemma~\ref{lem:no-conflicts}.
\end{proof}

As in \lstar, the algorithm may not converge if the target is non-regular, namely,
if its rank is infinite. We show that given $\rank{U}$ is finite, our algorithm will converge.
First we state that if $\rank{U}$ is finite, so is $\incompatible{U}$.

\begin{lemma}\label{lem:finite-incompatability}
    If $\rank{U}$ is finite, then $\incompatible{U}$ is finite as well.
\end{lemma}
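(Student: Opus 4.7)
}
The plan is to give an explicit finite upper bound on $\incompatible{U}$ as a function of $n := \rank{U}$ and the alphabets, by associating with every embedded transducer a canonical form determined by $\sim_U$ and counting the possible canonical forms.

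First I would exploit the definition of $\sim_U$ to note that the right congruence passes from inputs to bi-words: if $v_1 \sim_U v_2$, then every valid output prefix $u_1$ for $v_1$ and every valid $u_2$ for $v_2$ yield bi-words $v_1\oplus u_1$ and $v_2\oplus u_2$ with identical futures in $U$. Consequently the set $\Theta([v])\subseteq\Gamma$ of outputs that keep a word in $\prefixes{U}$ depends only on the $\sim_U$-class $[v]$, and similarly the set of legal next classes on each input letter is well defined. This gives a canonical (nondeterministic) symbolic transducer $\aut{U}^\star$ with at most $n$ states, one per $\sim_U$-class, such that $\sema{\aut{U}^\star}\subseteq\prefixes{U}$.

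Second, given a concrete transducer $\aut{C}=(\Sigma,\Gamma,Q,q_\iota,\delta,\eta)$ with $\semafin{\aut{C}}\subseteq\prefixes{U}$, I would quotient $Q$ by reaching $\sim_U$-class, producing a symbolic transducer $\aut{C}/{\sim_U}$: a state $[v]$ carries the output set $\bigcup_{q}\eta(q)$ ranging over states $q$ of $\aut{C}$ reachable from some $v'\sim_U v$, and transitions union accordingly. By the first step, each such union is contained in the corresponding component of $\aut{U}^\star$, so $\aut{C}/{\sim_U}$ is a sub-transducer of $\aut{U}^\star$ and hence, by Claim~\ref{clm:subtransducer}, is also embedded in $U$. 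In particular $\aut{C}/{\sim_U}$ has at most $n$ states, at most $2^{|\Gamma|}$ possible output labels per state, and at most $2^n$ possible transition sets per (state, input) pair, so the total number of distinct quotients is bounded by $N(n):=(2^{|\Gamma|})^n \cdot (2^n)^{n|\Sigma|}$, which is finite.

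Third, I would show that two embedded concrete transducers $\aut{C}_1,\aut{C}_2$ sharing the same quotient are compatible. For any input word $\iw$, the symbolic output each produces is pointwise contained in the symbolic output of the common quotient, and thus so is their union; since the quotient is embedded in $U$, the union is in $\prefixes{U}$, and no incompatibility witness exists. Combining with the counting step yields $\incompatible{U}\le N(n)<\infty$, as required.

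The main obstacle I anticipate is Step~1: turning the input-side equivalence $\sim_U$ into a well-defined bi-word equivalence requires care, because the definition quantifies over arbitrary compatible output prefixes $u_i$ and arbitrary continuations $w$, and one must check that this is strong enough to conclude that $\Theta([v])$ and the legal successor classes depend only on $[v]$ and not on the particular outputs chosen so far. Once that invariance is established, Steps~2--3 follow by routine quotienting and a counting argument.
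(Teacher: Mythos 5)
Your route is recognizably in the same spirit as the paper's (both arguments reduce finiteness of $\incompatible{U}$ to the finiteness of a canonical finite-state structure derived from $\rank{U}$ and then count), but the decomposition is genuinely different: the paper fixes one nondeterministic symbolic transducer $\aut{S}$ for $U$ and argues that every incompatibility witness must be reflected in a pair of distinct states of $\aut{S}$ reached by the same word together with their output subsets, bounding the number of such signatures; you instead map each embedded transducer to a canonical quotient, show the codomain of that map is finite, and show that two transducers with the same image are compatible, so pigeonhole bounds any pairwise-incompatible family. Your Step~3 and the counting are sound, and the ``same fiber implies compatible'' structure is arguably cleaner than the paper's count over pairs.

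There is, however, a concrete gap exactly where you flagged uncertainty: the claim in Step~1 that the legal-output set $\Theta([v])$ at position $|v|$ depends only on the $\sim_U$-class of $v$ is false for $\sim_U$ as defined. Take $\Sigma=\{0,1\}$, $\Gamma=\{a,b\}$, and let $U$ consist of all bi-words whose first letter is $\biword{0}{a}$ or $\biword{1}{b}$, with no further constraints. Then $0\sim_U 1$, since for every legal $u_1,u_2$ and every continuation $w$ both $(0\oplus u_1)w$ and $(1\oplus u_2)w$ lie in $\prefixes{U}$; yet the only legal output after input $0$ is $a$ and after input $1$ is $b$. The point is that $\sim_U$ constrains the futures of the \emph{completed} bi-words $v\oplus u$, not which $u$ are legal in the first place. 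Consequently your Step~2 also breaks as stated: the quotient would merge the states reached by $0$ and by $1$, label the merged state $\{a,b\}$, and generate $\biword{0}{b}\notin\prefixes{U}$, so the quotient need not be embedded in $U$. The repair is available and cheap: taking $w=\biword{\isym}{\osym}$ in the definition of $\sim_U$ shows that the legal-output set at position $|v\isym|$ is a function of the pair $([v],\isym)$, so the canonical transducer and the quotients must be built on states of the form $([v],\isym)$ (or pairs $([v],A)$ with $A\subseteq\Gamma$) rather than on classes $[v]$ alone. This multiplies your bound $N(n)$ by a factor polynomial in $|\Sigma|$ and $2^{|\Gamma|}$ but keeps it finite, after which Steps~2--3 go through and the lemma follows.
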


It follows from Theorem~\ref{thm:complexity} and Lemma~\ref{lem:finite-incompatability} that the fact that $\rank{U}$
is finite suffices to guarantee termination.

\begin{corollary}\label{cor:termination}
	If  $\rank{U}$ is finite 
	then
	the algorithm terminates and returns a contained transducer.
\end{corollary}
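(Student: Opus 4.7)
The plan is to derive Corollary~\ref{cor:termination} by combining Theorem~\ref{thm:complexity} with Lemma~\ref{lem:finite-incompatability}. First, from the hypothesis that $n := \rank{U}$ is finite, Lemma~\ref{lem:finite-incompatability} immediately yields that $m := \incompatible{U}$ is finite as well. So two of the three parameters governing the complexity bound of Theorem~\ref{thm:complexity} are already under control.

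Next I would argue that the remaining parameter, $\ell$ (the maximal length of a counterexample received by the algorithm), is also bounded in terms of $n$ and $|\Sigma|$. Any hypothesis $\aut{A}$ that the algorithm submits to \scq\ is extracted from a minimal closed table, so its state set is bounded by the basis size, which is in turn bounded by $n$. A shortest counterexample to $\semafin{\aut{A}}\subseteq\prefixes{U}$ — which is precisely what the procedure $\findShortestCE$ returns — therefore has length at most the size of the product of $\aut{A}$ with the right-congruence of $U$, by a standard pumping argument: any longer witness would have to revisit a pair of states and could be pumped down. Thus $\ell = O(n^2)$ is finite, and substituting $m$, $n$, $\ell$ into Theorem~\ref{thm:complexity} produces a finite, polynomial bound on the running time, hence termination.

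It remains to show that the algorithm terminates by returning a transducer, not by the line \texttt{return unrealizable}. Here I would use the realizability of $U$: by definition $\concTrees(U)\neq\emptyset$, hence by Claim~\ref{clm:conc-imp-symb} there is some symbolic tree $\tree{T}^\star\in\symbTrees(U)$. I would maintain as a loop invariant that at every iteration at least one table $\sytb{H}^\star\in\set{H}$ is consistent with $\tree{T}^\star$, meaning $\tree{T}^\star(rc)\subseteq M^\star(r,c)$ for every filled entry. This invariant is trivially preserved by $\rebase$, and preserved by $\fillTb$ because the oracle's answer to $\smq(\sytb{H}^\star(e)\biword{\isym}{\qm})$ must contain $\tree{T}^\star(e\isym)$ (otherwise $\tree{T}^\star\notin\symbTrees(U)$, a contradiction). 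For $\Split$, one of the $\ell$ copies of the offending table refines an entry in a direction still compatible with $\tree{T}^\star$, and hence is not marked infeasible. Consequently $\set{H}$ is never emptied, so the while-loop exits only through an accepted \scq, returning a transducer $\aut{A}$ with $\semafin{\aut{A}}\subseteq\prefixes{U}$, i.e.\ a contained transducer.

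The main obstacle is the $\Split$ step of the invariant argument: one has to carefully identify, among the $\ell$ copies produced, the one whose refined entry is still a superset of the values dictated by $\tree{T}^\star$, and verify that the corresponding $\theta_k\setminus o_k$ is non-empty (so that this copy survives). The cleanest way to do this is to observe that the counterexample $\w$ triggering the split was produced by an \smq\ whose answer was $\emptyset$; since $\tree{T}^\star$-consistent outputs cannot collectively cause this emptiness at every prefix, there is a first index $k$ at which the $\tree{T}^\star$-prescribed set is not contained in $\{o_k\}$, and the $k$-th copy retains the invariant.
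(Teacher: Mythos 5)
Your top-level derivation is the same as the paper's: Corollary~\ref{cor:termination} is obtained by feeding Lemma~\ref{lem:finite-incompatability} into Theorem~\ref{thm:complexity}, and that part is fine. (For the parameter $\ell$ you do not actually need the pumping bound $O(n^2)$ --- which is delicate here anyway, since the hypothesis is nondeterministic and $\sim_U$ is a congruence on $\Sigma^*$ rather than on bi-words. It suffices to observe that the bound $O(mn|\Sigma|)^2$ on the number of $\scq$s in Theorem~\ref{thm:complexity} does not depend on $\ell$, so only finitely many counterexamples are ever received, each a finite word, whence $\ell$ is finite.)

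The genuine gap is in your invariant argument ruling out the \texttt{unrealizable} branch, specifically in the $\fillTb$ case. You claim the answer $\theta$ to $\smq(\sytb{H}^\star(e)\biword{\isym}{\qm})$ must contain $\tree{T}^\star(e\isym)$, ``otherwise $\tree{T}^\star\notin\symbTrees(U)$.'' This does not follow: the query is posed relative to the table's entries along $e$, which by your own invariant are only \emph{supersets} of the labels of $\tree{T}^\star$, and the \smq\ returns the maximal $\theta$ that works for \emph{every} completion drawn from those larger sets. That maximal $\theta$ can be nonempty yet fail to contain $\tree{T}^\star(e\isym)$. Concretely, take $\Sigma=\{i\}$, $\Gamma=\{a,b,c\}$, and let $U$ force all outputs after a first output $a$ to lie in $\{a,c\}$ and after a first output $b$ to lie in $\{b,c\}$. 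The table gets $M(i)=\{a,b\}$, the tree $\tree{T}^\star$ labeling $i$ by $\{a\}$ and all later nodes by $\{a,c\}$ is in $\symbTrees(U)$, but the \smq\ for $ii$ returns $\{c\}$: no split occurs, the entry is filled, and your fixed $\tree{T}^\star$ is no longer covered. The invariant must be weakened to ``some tree contained in $U$ is covered by some table in $\set{H}$'' and re-established after each fill --- this is essentially what the paper's Claim~\ref{clm:split-stil-covers} does for $\Split$, but the $\fillTb$ case needs its own argument, which your proposal does not supply. This interaction between conflicting completions is exactly the phenomenon the incompatibility measure is designed to capture, so it cannot be dismissed by contradiction with membership of a single fixed tree.
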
	%\quad\\

\commentout{
We are now ready to state the termination theorem.

\begin{theorem}\label{thm:termination}%\label{lem:exists-transducer}
	If  $\rank{U}$ is finite 
	then
	the algorithm terminates and returns a contained transducer.
\end{theorem}
\begin{proof}[Proof Sketch]
    First we note that procedure $\fill$ (Alg.~\ref{alg:filltb}) always terminates since the number of times its while loop is executed is bounded by $|E|$
    computed in line~\ref{line:E:alg:fill}. Procedure $\Split$ (Alg.~\ref{alg:split}) clearly always terminates, so 
    are any of the other procedures invoked by  \Sfour\ (Alg.~\ref{alg:sfour})
    
    As for \Sfour\ itself,
    by lemma~\ref{lem:finite-incompatability} we have that  $\incompatible{U}$ is finite. Therefore there is a finite
    number of calls to the $\Split$ procedure, and hence a finite number of tables are explored, i.e. $\set{H}$ is bounded. 
    In
    each table the size of the basis is bounded by $\rank{U}$. 
	%The algorithm works by trying to close the table, and issuing an $\scq$ on the extracted transducer when the table is closed.
	%The procedure $\fillTb$ (Alg.~\ref{alg:filltb}) processes all one-letter extensions of 
	%rows in the basis. Then procedure $\rebase$ (Alg.~\ref{alg:rebase}) finds a basis for the current rows.
	%If this process does not converge it means that there exists an infinite number of equivalence classes, i.e. $\rank{U}$ is infinite.
	%Other than that, 
	Since by Claim~\ref{clm:conterexample-converg},  the counterexample processing either reveals a new equivalence class in 
	or narrows down the number of rows covered by another row (or makes the table infeasible). Therefore the number of calls
	to 
\end{proof}}

%\begin{remark}

\par\label{par:irregular}
While Corollary~\ref{cor:termination} proves that Alg.~\ref{alg:sfour} terminates conditioned $U$ has a finite rank, namely it
is a regular language, our algorithm may terminate also for some target languages $U$ that 
contain a non-regular language. In particular, it terminates and returns a valid transducer
for the target language $U = L' \cup L'' \cup L'''$ over
$\Sigma=\{a,b\}$, $\Gamma=\{0,1,2\}$, 
defined as follows:
\[
\begin{array}{l@{\,=\,}ll}
L_n & \{a^n b^n w ~|~  w\in\Sigma^\omega \} & \text{for } n\in\mathbb{N}\\
L'  & \{ w_n \oplus 0^{2n-1}10^\omega ~|~ n > 0, w_n \in L_n \}, \\
L'' & \{ w\oplus 0^\omega ~|~ \forall n > 0, w \notin L_n \}, \\
L''' & \Sigma^\omega \oplus 2^\omega.
\end{array}
\]
The learned symbolic transducer
generates the $\omega$-regular language $L'''$.
This is because Alg.~\ref{alg:sfour} traverses the tree of possible
implementation using BFS, and terminates once one of the branches converged. 
%\end{remark}

%\subsection{Complexity}

%\newpage
\section{Experimental Results}\label{sec:experminetal}
 Table~\ref{tab:all-results}
presents the results on non-trivial arbiters (systems granting requests), showing $|\Sigma|$, $|\Gamma|$, the number of performed queries (\mq, \smq, \scq), the total number of generated tables, the number of tables analyzed in \SFour{}'s main loop ($|\mathbb{H}_{\textrm{gen}}|$, $|\mathbb{H}_{\textrm{exp}}|$), the number of  splits (\#split), the number of states of the extracted transducer $|\mathcal{A}|$ ($-$ for unrealizable), the learning time of \SFour\ and the oracle time in seconds.\footnote{We implemented Alg.~\ref{alg:sfour} in C++$17$ and used Spot 2.8.4 (\url{https://spot.lrde.epita.fr/}) for representing LTL formulas and $\omega$-automata. More details on the experiments are provided in appendix~\ref{app:imp-details}.
The implementation is available on GitHub and will be made public after notification.  The experiments were executed on an Intel\textregistered~Core\texttrademark{} i7-7567U CPU @ $3.50$GHz CPU with $16$GB RAM compiled with Clang $11$ on MacOS Catalina $10.15.6$.} More details on the implementation and experiments can be found in App.~\ref{app:imp-details}.

%Note that the learning time of \SFour\ indicates the
%performance of our algorithm (as oracles are external).

\begin{table}[!ht]
	\centering
	\scriptsize
	\setlength{\tabcolsep}{2pt} % reduce column width
	\renewcommand{\arraystretch}{0.9} % reduce row height
	\resizebox{\columnwidth}{!}
	{
		\begin{tabular}{ccc cccc cccc cc}
			\toprule
			\multicolumn{3}{c}{\textbf{Experiment}} &
			\multicolumn{3}{c}{\textbf{Query stat.}} &
			\multicolumn{4}{c}{\textbf{Synth. stat.}} &
			\multicolumn{2}{c}{\textbf{Time}} \\
			\cmidrule(lr){1-3}
			\cmidrule(lr){4-6}
			\cmidrule(lr){7-10}
			\cmidrule(lr){11-12}
			No. & $|\Sigma|$ & $|\Gamma|$ &
			MQ & SMQ & SCQ &
			$|\mathbb{H}_{\textrm{exp}}|$ & $|\mathbb{H}_{\textrm{gen}}|$ & \#\textit{split} & $|\mathcal{A}|$ &
			\SFour & Oracle \\
			\midrule
			\csvreader[head to column names, late after line=\\]{all-results.csv}{}%
			{\experiment & \Sigma & \Gamma & \MQ & \SMQ & \SCQ & \Hexp & \Hgen & \splits & \A & \SFour & \Oracle}%
			\bottomrule
		\end{tabular}
	}
	\caption{All experimental results. The target languages of experiments 7 to 10 are parameterized. 
		For these experiments the number after dash represents the argument value of the corresponding parameter.
		%\textcolor{red}{This table is subject to change.}
	}
	\label{tab:all-results}
\end{table}

%The results in Table~\ref{tab:all-results} are produced by linking our
%implementation with Spot 2.9.4 library and executing it in a Docker container with Ubuntu Linux 20.04, 
%to closely simulate the conditions in the submitted code. This setup differs slightly from the
%execution environment used in the main body of the paper.\footnote{We observed some variation in
%	performance for a few examples; this is due to external factors and is to be expected.}

The set of examples 9-1, $\ldots$, 9-7
demonstrates that our algorithm can generate transducers with hundreds of states. 
The set of examples 10-1, $\ldots$, 10-4 demonstrates that the number of splits can be high even if
the resulting transducer is relatively small. 

\section{Discussion}\label{sec:discuss}
We introduced a new problem, of constructing an implementation of a reactive system
without being given a formal specification as in reactive synthesis. Instead we assume that
we have knowledge about good and bad behaviors, specifically we use 
 symbolic membership queries and symbolic conjectures queries. The problem is motivated
 by real scenarios of inferring environment for systems that need to work in 
 heterogeneous third-party environments.\footnote{In a practical setting, membership queries may be implemented via executing a set of black boxes, conjecture queries may be replaced by massive random membership queries, and the result we obtain does not fall under \emph{exact learning}, but rather under \emph{PAC learning}~\cite{Valiant13,Angluin87}.}

 We have shown that given the target language has
 a finite rank, our algorithm terminates, and its time and query
 complexity is polynomial with respect to the target language's rank, its incompatibility measures 
 and the size of the longest counterexample. We note that in cases where the target language's
 incompatibility measure is high, the outputted transducer may still be small (that is, its number of states may be much smaller than the target language's rank).
 For future research we would like to investigate the problem of finding the incompatibility measure
 of a language defined by a given temporal logic specification or a given $\omega$-automaton.

\section*{\large Acknowledgments}
\vspace{-3mm}
Authors are grateful to Martin Tappler for his part in this work's feasibility studies and in implementing the first proof of concept. This work was supported by the Austrian Research Promotion Agency (FFG) through project TRUSTED (867558).

\bibliographystyle{plain}
\bibliography{bibfile}
%\end{document}
\clearpage
\newpage
\appendix
\section{Omitted Algorithms}\label{app:algs}
\begin{algorithm}
	\caption{$\rebase$.}\label{alg:rebase}
\begin{algorithmic}[1]
\Function{\rm rebase}{\tab{T}} 
	\Let{$B'$}{$\emptyset$}
    \ForAll{$r \in R$}% \setminus B$}
       \Let{\newbase{r}}{$\{\}$}
       \ForAll{$b \in R$}
        %\ForAll{$b \in B'$}
            \If{$\text{covers}(M(b),M(r))$}
                \Let{\newbase{r}}{$\newbase{r}\cup\{ b \}$}
            \EndIf
        \EndFor
        \If{$\newbase{r} = \{r\}$} 
            \Let{$B'$}{${B'}\cup\{r\}$}
        \EndIf
    \EndFor    
    \ForAll{$r \in R$}% \setminus B$}
    \Let{\newbase{r}}{$\newbase{r}\cap B'$}
    \EndFor    
    \State\Return $(R,C,M,B',\nabla')$
\EndFunction
\end{algorithmic}
\end{algorithm}

\section{Omitted Proofs}
\subsection{Omitted proofs of Section~\ref{sec:defs}}

\noindent
\claimref{Claim~\ref{clm:exhaustive-not-realizable}}
states the following
\begin{itemize}
    \item []
	\emph{It may be that $L$ is \ialphit-exhaustive yet $\concTrees(L)= \emptyset$.}
\end{itemize}
\begin{proof}
	Take $\ialph=\{0,1\}$ and $\oalph=\{a,b\}$ and consider $L= \{0^\omega\oplus a^\omega \} \cup  \{\iw \oplus b^\omega ~|~  \iw\in\ialph^\omega\setminus 0^\omega\}$. 
	It is easy to see that for every $\iw\in\ialph^\omega$ there exists $\ow\in\oalph^\omega$ such that $\iw\oplus\ow\in L$ thus $L$ is \ialphit-exhaustive. To see why $\concTrees(L)=\emptyset$ note that the only possible label for the path $0^\omega$ is $a^\omega$, whereas the only possible label for the path $01^\omega$ is $b^\omega$, thus no matter how the node $0$ is labeled, we won't be able to satisfy the requirement for tree containment in $L$. 
\end{proof}

\noindent
\claimref{Claim~\ref{clm:conc-imp-symb}} states the following
\begin{itemize}
    \item [] 
    	\emph{If $\concTrees(L)\neq \emptyset$ then $\symbTrees(L)\neq \emptyset$.
    }
\end{itemize}

\begin{proof}%[Proof of Claim~\ref{clm:conc-imp-symb}]
	A concrete-tree is a special type of a symbolic-tree.
\end{proof}

\noindent
\claimref{Claim~\ref{clm:conc-in-symb}} states the following
\begin{itemize}
    \item [] 
    \emph{Let $\tree{T}_S$ be a symbolic-tree in $\symbTrees(L)$. Let $\tree{T}_C$ be a concrete-tree such that $\tree{T}_C(\iw)\in \tree{T}_S(\iw)$ for every $\iw\in\ialph^*$.
	Then $\tree{T}_C\in\concTrees(L)$.
	}
\end{itemize}

\begin{proof}%[Proof of Claim~\ref{clm:conc-in-symb}]
    Suppose not. Then there exists a word $\iw\in\ialph^\omega$ such that for the word $$\ow=
	\tree{T}_C(\iw[0])\cdot \tree{T}_C(\iw[1])\cdot \tree{T}_C(\iw[2])\cdots$$ we have that $\iw\oplus\ow\notin L$. Let $$\alpha=\tree{T}_S(\iw[0])\cdot \tree{T}_S(\iw[1])\cdot \tree{T}_S(\iw[2])\cdots$$ Then by the claim's premise $\ow[i]\in\alpha[i]$ for every $i\in\mathbb{N}$. Contradicting that $L_{|v}\supseteq \{w\in\oalph^\omega~|~\forall i \in \mathbb{N}.\ w[i]\in \alpha[i]\}$.
\end{proof}

\noindent
\claimref{Claim~\ref{clm:trans-fin-inf-safety}}
states the following
\begin{itemize}
    \item [] 
	\emph{$\prefixes{\semainf{\aut{A}}}=\semafin{\aut{A}}$ and $\semainf{\aut{A}}$ is a safety language.
	}
\end{itemize}
\begin{proof}%[Proof sketch]
	The first statement holds since if an $\omega$-word is generated by $\aut{A}$ then so are all its prefixes.
	It follows that  
	$\semainf{\aut{A}}$ is the set of all $\omega$-words all of whose prefixes are in $\semafin{\aut{A}}$.
	By~\cite{MannaP89}, a language $L\subseteq \Sigma^\omega$ is safety iff there exists $S\subseteq \Sigma^*$ such that $L = \{w\in\Sigma^\omega~|~\forall i.~ w[..i]\in S\}$. 
	Take ${S=\semafin{\aut{A}}}$. Thus, ${\semainf{\aut{A}}}$ is a safety language.
\end{proof}

\noindent
\claimref{Claim~\ref{clm:subtransducer}} states the following
\begin{itemize}
    \item [] 
    \emph{Let $\aut{S}=(\ialph,\oalph,Q,q_\iota,\delta,\eta)$ and $\aut{S'}=(\ialph,\oalph,Q,q_\iota,\delta',\eta')$ be symbolic transducers s.t.
    $\delta'(q,\isym)\subseteq \delta(q,\isym)$ and $\eta'(q)\subseteq \eta(q)$ for every $\isym\in\ialph$ and $q\in Q$. Then $\sema{\aut{S}}\subseteq\sema{\aut{S}'}$.
    }
\end{itemize}

\begin{proof}
    Let $\iw\oplus\ow\in\sema{\aut{S}}$ where $\iw=\isym_1\isym_2\ldots\isym_m$ and $\ow=\osym_1\osym_2\ldots,\osym_m$. 
    Then there exists a sequence of states $q_\iota,q_1,q_2,\ldots,q_m$ that is a run of $\aut{S}$ on $\iw$ such that $\osym_i\in\eta(q_i)$ for $1\leq i\leq m$.
    Since $\delta'(q,\isym)\subseteq \delta(q,\isym)$ for every $\isym\in\ialph$ and $q\in Q$ it follows that
    $q_\iota,q_1,q_2,\ldots,q_m$ is a run of $\aut{S}'$ on $\iw$. Since  $\eta'(q)\subseteq \eta(q)$ for every $q\in Q$
    it follows that $\osym_i\in\eta'(q_i)$ for every $1\leq i\leq k$. Therefore $\iw\oplus\ow\in\sema{\aut{S}'}$ as well.
\end{proof}

\subsection{Omitted proofs of Section~\ref{sec:learn}}
\commentout{
The following claim states the followingfor every closed table there exists a unique minimum symbolic transducer that agrees with the table on every entry in the table
\begin{lemma}
	For every closed table 
	$\sytb{H}$ 
	there exists a unique minimum symbolic transducer 
	$\sytb{A}$ 
	such that $\sema{\sytb{H}} \subseteq \sema{\sytb{A}}_*$. %%RBNOTE please check! %% dana: No, its containment.
\end{lemma}

\begin{proof}[Proof Sketch]
	Since implication is a partial order, the set of minimal elements is well defined.
	This set comprises the states of the minimal transducer. The transition relation is defined by
	following the one-letter extension of the rows corresponding to states. Specifically, if row $r$ is a minimal element, and
	the row corresponding to $r\sigma$ is implied by the set of rows $\{r_1,\ldots,r_k\}$, then there are transitions from
	$r$ on $\sigma$ to any state $r_i$ for $1\leq i\leq k$. This shows that the transducer is unique and well defined.
\end{proof}}

\begin{claim}\label{clm:minimal-basis}
    Procedure $\rebase$ (Alg.~\ref{alg:rebase}) returns a minimum basis.
\end{claim}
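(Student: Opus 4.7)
The plan is to verify three properties of the output $B'$: (i) it is a basis in the sense of condition (iii) of the closed-table definition, (ii) no two distinct elements of $B'$ cover each other (so it is minimal in the paper's sense), and (iii) every basis must contain $B'$, hence $B'$ is the unique minimum such set. Throughout, abbreviate ``$M(b)$ covers $M(r)$'' by $b \preceq r$; this is a preorder on $R$. Inspection of Alg.~\ref{alg:rebase} shows that after the first loop, $\nabla'(r) = \{b \in R : b \preceq r\}$, so $r$ is added to $B'$ exactly when the only row covering it is itself. Thus $B' = \{r \in R : \forall b \in R,\ b \preceq r \Rightarrow b = r\}$ is exactly the set of $\preceq$-minimal elements of $R$.

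The first step is to verify that $B'$ is a basis. Condition (i) (prefix-closedness) is inherited from the input table, and condition (ii) (that $b\isym \in R$ for every $b \in B'$, $\isym \in \ialph$) was already enforced by $\fillTb$ before $\rebase$ is invoked. For condition (iii), it suffices to show that every $r \in R$ (and in particular every $b\isym$ for $b \in B'$) is covered by some element of $B'$. I would argue by descending induction on $\preceq$: because $R$ is finite, if $r \notin B'$ then some $b \neq r$ satisfies $b \preceq r$; iterating and choosing a witness each step yields a $\preceq$-descending sequence which must stabilize at a $\preceq$-minimal row $b^* \in B'$ with $b^* \preceq r$ by transitivity of $\preceq$.

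Next, I would establish minimality: if $b, b' \in B'$ are distinct and $b \preceq b'$, then $\nabla'(b') \supseteq \{b, b'\}$ by construction, contradicting the defining property $\nabla'(b') = \{b'\}$ that caused $b'$ to be added to $B'$. Finally, for the ``minimum'' part, suppose $B \subseteq R$ is any set satisfying conditions (i)–(iii). Then every element of $B'$ must already lie in $B$: given $r \in B'$, the basis property applied to $r$ (which is in $R$) yields some $b \in B$ with $b \preceq r$, and $\preceq$-minimality of $r$ forces $b = r$, so $r \in B$. Hence $|B'| \leq |B|$ for any basis $B$, which gives minimum cardinality.

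The main subtlety I foresee is that the covering relation is only a preorder, not a partial order: two distinct rows with identical $C$-entries mutually cover each other, so neither would be $\preceq$-minimal and neither would be added to $B'$. The clean way to handle this is to note that such twin rows are interchangeable — removing one yields an equivalent table — so either the paper's data structure implicitly deduplicates (via prefix-closedness of $R$ together with how rows are inserted in $\fillTb$), or one should quotient $R$ by entry-equality before running $\rebase$. With this caveat stated and discharged, the descending-chain argument in the basis step terminates at a genuine minimum, and the three claims above combine to give the result.
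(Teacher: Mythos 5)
Your proof is correct and follows the same core idea as the paper's: the set $B'$ returned by $\rebase$ is exactly the set of rows $r$ with $\nabla'(r)=\{r\}$, i.e., the minimal elements under the covering relation, and these are precisely the rows that any basis must contain. The paper's own proof is a four-line sketch that asserts this and stops; you go further by actually verifying that $B'$ satisfies the closure conditions (via the finite descending-chain argument), that no two of its elements cover each other, and that it is contained in every basis, which is what ``minimum'' requires. More importantly, the subtlety you flag is genuine and is not addressed in the paper: covering is only a preorder, and if two distinct rows $r\neq r'$ have identical contents $M(r)=M(r')$ with no strictly smaller row below them, then $\nabla'(r)\supseteq\{r,r'\}$ and $\nabla'(r')\supseteq\{r,r'\}$, so neither is added to $B'$ and the returned set fails to cover $r$ at all. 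The paper's proof sidesteps this by speaking of a ``partial order'' of implications, which presumes exactly the antisymmetry that fails in this case; your suggestion to quotient rows by entry-equality (or to argue that the tables the algorithm actually produces never contain such unresolved ties) is the right fix, and some such argument is needed for the claim to hold for the algorithm as written.
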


\begin{proof}
    The procedure computes the implication relation for the set of rows. Specifically, for
    every row $r$ it computes the set of rows $S_r$ that imply it. If $S_r$ is a singleton,
    it must be the singleton $\{r\}$ (because of reflexivity of implication). This means 
    that $r$ is a minimal element in the partial order of implications. Therefore $r$
    must be in the basis. All other rows are not in the basis.
\end{proof}

\begin{claim}\label{clm:rank-and-minimal-trans}
    Let $\aut{S}$ be a consistent transducer, and let $U=\sema{\aut{S}}$. 
    If $\rank{U}=n$ then $\aut{S}$ has at least $n$ states.
\end{claim}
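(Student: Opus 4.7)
The plan is to construct an injection from the $n$ equivalence classes of $\sim_U$ into $Q$, which immediately implies $|Q|\geq n$.

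First, characterise $\sim_U$ using the consistency of $\aut{S}$. Because $U=\sema{\aut{S}}$ equals the bi-language induced by $\tree{T}_\aut{S}$, whether a bi-word $(v\oplus u)\cdot w$ belongs to $U$ is determined by $v$ and $w$ alone, independently of the output prefix $u\in U_{\vert v}$. Unfolding the definition of $\sim_U$ therefore yields
\[
v_1 \sim_U v_2 \iff \tree{T}_\aut{S}(v_1\cdot w) = \tree{T}_\aut{S}(v_2\cdot w) \text{ for every } w \in \Sigma^+,
\]
so $n$ is exactly the number of distinct non-root signatures that sub-trees of $\tree{T}_\aut{S}$ take at reachable positions.

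Next, assign each state $q\in Q$ the signature $\tau_q:\Sigma^+\to 2^\Gamma$ defined by $\tau_q(w) = \bigcup_{q'\in\delta^*(q,w)}\eta(q')$; clearly there are at most $|Q|$ such signatures. For any $v\in\Sigma^*$ reaching $S_v = \delta^*(q_\iota,v)$ one has $\tree{T}_\aut{S}(v\cdot w) = \bigcup_{q\in S_v}\tau_q(w)$, so by the characterisation above each class $[v]$ is completely determined by $\bigcup_{q\in S_v}\tau_q$. The technical heart of the proof is the lemma that for every reachable $S_v$ there is a single state $q^\star(v)\in Q$ with $\tau_{q^\star(v)}(w) = \bigcup_{q\in S_v}\tau_q(w)$ for every $w\in\Sigma^+$. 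Its proof exploits consistency, which forces, for every reachable $S$, every $o\in\bigcup_{q\in S}\eta(q)$ and every $\sigma\in\Sigma$, the sub-trees rooted at $\delta(S^o,\sigma)$ and $\delta(S,\sigma)$ to coincide, where $S^o=\{q\in S\mid o\in\eta(q)\}$. Iterating this output commitment along suitable input/output extensions refines $S$ without altering the successor sub-tree, and because $Q$ is finite the refinement must stabilise, producing a state whose signature equals $\bigcup_{q\in S_v}\tau_q$ on all of $\Sigma^+$.

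Finally, define the map $[v_i]\mapsto q^\star(v_i)$. It is well defined on classes (inputs in the same class induce the same union signature) and it is injective: if $[v_i]\neq[v_j]$ then $\bigcup_{q\in S_{v_i}}\tau_q$ and $\bigcup_{q\in S_{v_j}}\tau_q$ differ on some $w\in\Sigma^+$, hence $\tau_{q^\star(v_i)}\neq\tau_{q^\star(v_j)}$ and so $q^\star(v_i)\neq q^\star(v_j)$. This yields the desired injection from the $n$ equivalence classes into $Q$, giving $|Q|\geq n$. The main obstacle I anticipate is the middle-step lemma: when several states of $S_v$ share every output letter, no single output commitment suffices to pin $S_v$ down to a singleton, and one must combine a sequence of output and input refinements before the finite-state assumption forces stabilisation; I expect the cleanest presentation to define $q^\star(v)$ as the state reached at the end of a sufficiently long committed run from $S_v$ and to verify by induction on the length of that run that consistency preserves the successor sub-tree at each step.
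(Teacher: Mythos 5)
Your opening reduction is correct, and it usefully makes explicit what the paper's proof leaves implicit: by consistency, membership of $(v\oplus u)\cdot w$ in $U$ does not depend on $u$, so the classes of $\sim_U$ are exactly the distinct maps $w\mapsto\tree{T}_{\aut{S}}(vw)$ on $\Sigma^+$, and indeed $\tree{T}_{\aut{S}}(vw)=\bigcup_{q\in S_v}\tau_q(w)$. The gap is the lemma you call the technical heart: it is \emph{false} that every reachable set $S_v$ of a consistent transducer admits a single state whose signature equals the union signature of $S_v$. Take $\ialph=\{a,b\}$, $\oalph=\{0,1\}$ and states $q_\iota,p,r,x,y$ with $\eta(p)=\eta(r)=\eta(x)=\{0\}$, $\eta(y)=\{0,1\}$, $\delta(q_\iota,a)=\delta(q_\iota,b)=\{p,r\}$, $\delta(p,a)=\delta(r,b)=\{x\}$, $\delta(p,b)=\delta(r,a)=\{y\}$, and $\delta(x,\cdot)=\delta(y,\cdot)=\{x\}$. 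This transducer is consistent: on every input of length $m\geq 2$ the set of generated output words is exactly the product $\{0\}\times\{0,1\}\times\{0\}^{m-2}$. Yet the union signature of the reachable set $\{p,r\}$ sends both $a$ and $b$ to $\{0,1\}$, while $\tau_p(a)=\tau_r(b)=\{0\}$ and every other state's signature is identically $\{0\}$; no state realizes the union. Your own escape route for the hard case (refine by committed outputs, then by input extensions, until stabilization) cannot be repaired here: the only admissible output at depth one is $0$, so $S^o=S$ and no output commitment refines $\{p,r\}$ at all, while an input refinement moves you to a \emph{different} node of the tree, whose subtree signature is not the one you need to represent. The obstruction is precisely that distinct states of $S_v$ may be needed to cover distinct input branches, and consistency does not forbid this.

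For comparison, the paper's proof is a short contrapositive pigeonhole: if $|Q|<n$, two inequivalent words $v_1,v_2$ reach a common state $q$; taking the distinguishing extension $w$ from $v_1\not\sim_U v_2$, either $\project{w}{\oalph}$ can be emitted from $q$ on reading $\project{w}{\ialph}$, in which case $\aut{S}$ wrongly generates $(v_2\oplus u_2)\cdot w$, or it cannot, in which case $\aut{S}$ wrongly rejects $(v_1\oplus u_1)\cdot w$. That argument is immediate when each input word reaches a single state (the deterministic case). You correctly sensed that nondeterminism is where the real difficulty lies --- a plain pigeonhole only bounds the number of distinct reachable \emph{subsets} of $Q$ --- but the representative-state lemma you propose to bridge that gap is not available, so as written your proof does not go through.
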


\begin{proof}
    Assume this is not the case. Then there exists two words $v_1,v_2\in\Sigma^*$ s.t. $v_1\not\sim_U v_2$,
    yet $v_1$ and $v_2$ reach the same state $q$ of $\aut{S}$.
    From $v_1\not\sim_U v_2$ it follows that $\exists w\in(\Sigma\times\Gamma)^*$,  
    $u_1\in U_{|v_1}$ and $u_2\in U_{|v_2}$ s.t. wlog. $(v_1\oplus u_1)\cdot w\in U$ and $(v_2\oplus u_2)\cdot w\notin U$.
     Therefore if $\project{w}{\oalph}$ can be emitted from $q$ on reading  $\project{w}{\ialph}$,
     then $\aut{S}$ wrongly accepts   $(v_2\oplus u_2)\cdot w$,
     and otherwise $\aut{S}$ wrongly rejects   $(v_1\oplus u_1)\cdot w$.
\end{proof}

\noindent
\claimref{Claim~\ref{clm:extracted-trans}} states the following
\begin{itemize}
    \item [] \emph{
    Let $\sytb{H}$ be a closed and minimal symbolic table, and $\aut{A}_{\sytb{H}}$ the transducer extracted from it.
    Then  $\aut{A}_{\sytb{H}}$ is one of the minimal transducers that agrees with $\sytb{H}$ and for any other minimal transducer $\aut{A}$ that agrees with $\sytb{H}$
    it holds that $\sema{\aut{A}}\subseteq \sema{\aut{A}_{\sytb{H}}}$. 
    }
\end{itemize}

\begin{proof}
    Assume towards contradiction that $\aut{A}_{\sytb{H}}$  does not agree with the table on some entries.
	Let row $r$ and column $c$ be such that $rc$ is a shortest prefix on which they disagree.
	That is, $M(r,c)=\theta$, yet the transducer $\aut{A}_{\sytb{H}}$, on reading $rc$ reaches a set of states $S=\{s_1,\ldots,s_m\}$ with respective outputs 
	$\theta_i$ and $\bigcup_{\{1\leq i \leq m\}} \theta_i \nsubseteq \theta$.
	Therefore, there exist a state $s_j$ for which $\theta_j\nsubseteq \theta$. 
	Let $rc=\isym_1\ldots\isym_\ell$. 
	Let $s$ be the row corresponding to $\isym_1\ldots\isym_{\ell-1}$.
	Then the row $s_j$ implies the row $s \isym_\ell$. 
	It follows from the fact that $rc$ is a shortest prefix where they disagree, that the output on $s$ agrees with the table. When we fill in the entry for $s\isym_\ell$ the $\smq$ took in account the output of $s$
	and all states reaching it. Specifically the query was $\smq(\sytb{H}(s)\cdot\biword{\isym_\ell}{\qm})$ and the output was $\theta$.
	Recall that $\sytb{H}(s)$ consists of all words $s\oplus\ow$ that agree with the table.
	If $\theta_j\nsubseteq \theta$ then the row $s_j$ does not imply the row $s$, contradicting our assumption on $s_j$.
	
	This shows $\aut{S}$ agrees with the table. The fact that $\aut{S}$ is minimal follows from the fact that 
	the basis is minimal (as per Claim~\ref{clm:minimal-basis}) and $\aut{S}$ consists one state per row in the basis,
	and by Claim~\ref{clm:rank-and-minimal-trans} no transducer with less states accepts this languge.
	
	Clearly among all transducers with the same structure as $\aut{A}_{\sytb{H}}$ the transducer $\aut{A}_{\sytb{H}}$
	has the maximal number of transitions that conform to $\sytb{H}$ and the maximal number of outputs on states that
	conform to $\sytb{H}$. Thus $\sema{\aut{A}}\subseteq \sema{\aut{A}_{\sytb{H}}}$.
\end{proof}

\noindent
\claimref{Claim~\ref{clm:conterexample-converg}}
\begin{itemize}
    \item [] 
	\emph{We claim that after the counterexample processing (in line~\ref{line:rebase} of Alg.~\ref{alg:sfour}) one of the following happens to the current table:
	\begin{enumerate}
		\item %At least one new state is revealed. That is, at least one row is no longer covered by other rows.
		      At least one row $r\in R \setminus B$ is no longer covered by $B$.
		\item For at least one row $b\in B$ and one letter $\isym\in\ialph$, the set of rows covering $b\isym$ is smaller.
		\item The table becomes infeasible and is removed from the set of tables $\set{H}$.		
	\end{enumerate}
	}
\end{itemize}

\begin{proof}
	Recall that the counterexample $\w$ processing procedure first finds a shortest prefix $u$ of the given counterexample that is still
	a counterexample, then adds all its suffixes and prefixes to the columns, and fills in the missing entries using $\smq$s. As always,
	it could be the case that the $\smq$ returns $\emptyset$, in which case the third item holds.
	
	Otherwise, by the discussion at the end of subsection\ref{subsec:alg} (paragraph titled \emph{Counterexample processing optimization}),
	the number of implications is reduced. Hence, either there is at least one less transitions,
	or a row that was implied by another row is no longer implied by any row, and therefore is added to the
	basis. 
\end{proof}

\subsection{Omitted proofs of Section~\ref{sec:correctness}}

\begin{claim}\label{clm:conc-realizes}
	If $\scq(\aut{A})$ returns $\true$, then any concretization $\aut{C}$ of $\aut{A}$ realizes $U$.%\df{Is this how we defined realizes?}
\end{claim}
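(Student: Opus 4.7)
The plan is to chain the guarantee supplied by the $\scq$ oracle, the monotonicity of transducer semantics under concretization, and the safety assumption on $U$, in order to lift a finite-word inclusion all the way to realizability of $U$ by $\aut{C}$.

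First I would unpack the hypothesis. By definition of the symbolic conjecture query, $\scq(\aut{A}) = \true$ is exactly the statement $\semafin{\aut{A}} \subseteq \prefixes{U}$. Since a concretization $\aut{C}$ of $\aut{A}$ restricts both transitions and state labels, that is $\delta_{\aut{C}}(q,\sigma) \subseteq \delta_{\aut{A}}(q,\sigma)$ and $\eta_{\aut{C}}(q) \subseteq \eta_{\aut{A}}(q)$ for every $q$ and $\sigma$, Claim~\ref{clm:subtransducer} yields $\sema{\aut{C}} \subseteq \sema{\aut{A}}$, and in particular $\semafin{\aut{C}} \subseteq \prefixes{U}$.

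Next I would promote this finite-word inclusion to the infinite level. By Claim~\ref{clm:trans-fin-inf-safety}, every $w \in \semainf{\aut{C}}$ has the property that all of its finite prefixes lie in $\semafin{\aut{C}}$, and hence in $\prefixes{U}$. Since $U$ is a safety language (the standing assumption of the algorithm), this is exactly the characterization of membership in $U$, and therefore $\semainf{\aut{C}} \subseteq U$.

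Finally I would conclude that $\aut{C}$ realizes $U$ by exhibiting the regular concrete tree $\tree{T}_{\aut{C}}$ implemented by $\aut{C}$ inside $\concTrees(U)$. Fix any $\iw \in \ialph^\omega$. Because $\aut{C}$ is concrete it produces a unique output sequence $\ow_{\iw}$ satisfying $\tree{T}_{\aut{C}}(u) = \ow_{\iw}[|u|]$ for every $u \preceq \iw$. The inclusion $\semainf{\aut{C}} \subseteq U$ then gives $\iw \oplus \ow_{\iw} \in U$, i.e.\ $\ow_{\iw} \in \given{U}{\iw}$, which is exactly the condition for $\tree{T}_{\aut{C}} \in \concTrees(U)$. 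The only step requiring care is the bridge from finite to infinite membership; that is precisely where the safety assumption on $U$ is used, and without it the finite constraints produced by $\scq$ would not pin down infinite behaviour, so the argument would not go through.
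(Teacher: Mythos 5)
Your proof is correct and takes essentially the same route as the paper's: both chain the $\scq$ guarantee $\semafin{\aut{A}}\subseteq\prefixes{U}$, the containment $\sema{\aut{C}}\subseteq\sema{\aut{A}}$ from Claim~\ref{clm:subtransducer}, and the safety characterization via Claim~\ref{clm:trans-fin-inf-safety} to lift the finite-prefix inclusion to $\semainf{\cdot}\subseteq U$. The only (harmless) differences are that you apply the concretization step before rather than after the safety lifting, and you spell out explicitly that the concrete tree $\tree{T}_{\aut{C}}$ lies in $\concTrees(U)$, which the paper leaves implicit.
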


%\noindent
%\claimref{Claim~\ref{clm:conc-realizes}} 
%states the following
%\begin{itemize}
%    \item [] \emph{
%	If $\scq(\aut{A})$ returns $\true$ then any concretization $\aut{C}$ of $\aut{A}$ realizes $U$.%\df{Is this how we defined realizes?}
%\end{itemize}
\begin{proof}[Proof sketch]
	If $\aut{C}$ is a concretization of $\aut{A}$ then $\sema{\aut{C}}\subseteq\sema{\aut{A}}$. % by Claim~\ref{clm:embeded-tree-subsumed-by-symb-tree}.
	And by the result of the $\scq$ we know that $\sema{\aut{A}}_*\subseteq\prefixes{L}$.
	From Claim~\ref{clm:trans-fin-inf-safety} we know that 	$\prefixes{\semainf{\aut{A}}}=\semafin{\aut{A}}$.
	Thus $\prefixes{\semainf{\aut{A}}}=\semafin{\aut{A}}\subseteq\prefixes{U}$.
	We know that $U$ is prefix-closed since $U$ is safety, and that $\semainf{\aut{A}}$ is prefix closed from  Claim~\ref{clm:trans-fin-inf-safety}.
	It follows that $\semainf{\aut{A}}\subseteq L$ and  $\aut{C}$ realizes $U$.
\end{proof}

\noindent
\claimref{Claim~\ref{clm:distin-words}} 
states the following
\begin{itemize}
    \item [] 
	\emph{If $\{\aut{C}_1,\ldots,\aut{C}_m\}$  are pairwise incompatible
	then there exists a word $\iw$ witnessing their incompatibility of size at most $|Q_1|\times|Q_2| \ldots \times |Q_m|$.
	}
\end{itemize}
\begin{proof}[Proof sketch]
    Consider the product construction of all transducers. If we can label each state in a manner consistent 
    with each of the given transducers then they are compatible. Otherwise, there exists a reachable state in the product
    construction which cannot be labeled in consistency with all. The access word to this state is witnessing their
    incompatibility and its size is at most $|Q_1|\times|Q_2| \ldots \times |Q_m|$.
\end{proof}

\noindent
\claimref{Claim~\ref{lem:com-one-lang}}
states the following
\begin{itemize}
    \item [] 
	\emph{If the unknown bi-language $U$ contains words corresponding to a single concrete transducer then the algorithm never maintains more than one table, and it terminates in time polynomial in the number of states of the transducer.
	}
\end{itemize}
\begin{proof}[Proof sketch]
	In this case every symbolic membership query will be answered by a set of output expressions which is a singleton, and the algorithm will work exactly as the algorithm for learning Moore machines using \mq\ and \eq, which is a trivial extension of \lstar~\cite{Fisman18}.
\end{proof}

\noindent
\claimref{Claim~\ref{lem:finite-incompatability}}
states the following
\begin{itemize}
    \item [] 
    \emph{If $\rank{U}$ is finite, then $\incompatible{U}$ is finite as well.
    }
\end{itemize}

\begin{proof}[Proof sketch]
    If $\rank{U}$ is finite then there exists a non-deterministic symbolic transducer $\aut{S}$
    such that $\sema{\aut{S}}=U$. Let $\aut{S}_1$ and $\aut{S}_2$ be two incompatible transducers wrt. $U$
    and let $\iw=\isym_1\isym_2\ldots\isym_m\isym_{m+1}$ be the word {witnessing their incompatibility}.
    Let $S$ be the set of states in $\aut{S}$ that is reached upon reading $\iw$. For
    $\iw$ to be a distinguishing word it must be that there are two distinct states $s_1,s_2\in S$
    such that $\sema{\aut{S}_1}(\iw)\in\eta(s_1)\setminus\eta(s_2)$ and  $\sema{\aut{S}_2}(\iw)\in\eta(s_2)\setminus\eta(s_1)$.
   Since the number of pairs of states accessible in $\aut{S}$ by the same word is bounded, as is the number of sunsets of $\Gamma$
   (the possible output for $\eta(\cdot)$), so is $\incompatible{U}$.    
\end{proof}

%\subsection*{Additional claims}

The following claim asserts that the algorithm does not lose information when preforming a table split.

%% RBNOTE: Cannot find a definition of table COVERS tree
% dana: see line 353
\begin{claim}\label{clm:split-stil-covers}
	%When table $\sytb{H}$ is replaced by tables $\sytb{H}_1,\sytb{H}_2,\ldots,$ $\sytb{H}_m$ (in Alg.~\ref{alg:split}) then every concrete tree $\tree{T}$ covered by $\sytb{H}$ is covered by %exactly  one of $\sytb{H}_i$. 
	If a concrete tree $\tree{T}$ is covered by $\sytb{H}$ in line~\ref{line:H:alg:split}
	of Alg.~\ref{alg:split}, then when the alg. reaches line~\ref{line:endsplit:alg:split}, $\tree{T}$ is covered by $\sytb{H}_i$ for some $1\leq i \leq \ell$. 
\end{claim}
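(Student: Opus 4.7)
The plan is to exploit the fact that any concrete tree $\tree{T}$ relevant here (implicitly, $\tree{T} \in \concTrees(U)$) cannot produce outputs $o_1 o_2 \ldots o_\ell$ on inputs $i_1 i_2 \ldots i_\ell$: if it did, then $w[..\ell]\notin \prefixes{U}$ would appear in $\tree{T}$'s trace, contradicting $\tree{T}\in\concTrees(U)$. Hence there is a smallest index $k\in\{1,\ldots,\ell\}$ with $\tree{T}(i_1 i_2\ldots i_k)\neq o_k$; call this output $o'_k$. I shall argue that $\tree{T}$ is covered by $\sytb{H}_k$ when line~\ref{line:endsplit:alg:split} is reached in this iteration.

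First I would verify that $\sytb{H}_k$ is in fact added to $\set{H}$, \ie{} that $\theta_k\setminus\{o_k\}\neq\emptyset$. If $k<m+1$ then $\theta_k=M(i_1\ldots i_k)$, and coverage of $\tree{T}$ by $\sytb{H}$ gives $o'_k\in M(i_1\ldots i_k)=\theta_k$, so $o'_k$ witnesses non-emptiness. If $k=m+1$ the algorithm takes $\theta_k=\Gamma$, which trivially contains $o'_k$.

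Then I would check coverage entry by entry. The only entries modified are the one whose composite word equals $i_1\ldots i_k$ and those whose composite word strictly extends it; all other entries are inherited from $\sytb{H}$ and so remain covered. For a strict prefix $rc=i_1\ldots i_j$ with $j<k$, the entry is unchanged in $\sytb{H}_k$, and by minimality of $k$ we have $\tree{T}(rc)=o_j\in M(rc)$. For $rc=i_1\ldots i_k$, the entry becomes $\theta_k\setminus\{o_k\}$, and $o'_k=\tree{T}(rc)$ lies in this set since $o'_k\in\theta_k$ and $o'_k\neq o_k$. For $rc$ strictly extending $i_1\ldots i_k$, the entry is reset to $\emptysymb$, which I interpret as imposing no constraint on $\tree{T}$.

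The main obstacle will be pinning down the exact convention for ``covered'' when the table is partially filled—specifically, that $\emptysymb$ entries vacuously satisfy coverage—and handling the boundary case $k=m+1$ in which the entry was never filled in $\sytb{H}$. The algorithm's choice $\theta_{m+1}=\Gamma$ is the maximally permissive value and makes the case analysis above go through uniformly. A secondary subtlety is confirming that the implicit hypothesis $\tree{T}\in\concTrees(U)$ (or at least that $\tree{T}$'s trace avoids $w[..\ell]$) is indeed what the claim intends; otherwise a tree that happens to output exactly $o_1\ldots o_\ell$ on $i_1\ldots i_\ell$ would be lost by every $\sytb{H}_k$ and the statement would fail.
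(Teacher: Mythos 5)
Your proof is correct and follows essentially the same route as the paper's own sketch: locate the first position $k$ at which $\tree{T}$ disagrees with the counterexample and show $\sytb{H}_k$ both survives (since $\tree{T}(i_1\ldots i_k)\in\theta_k\setminus\{o_k\}$) and still covers $\tree{T}$. You additionally make explicit the hypothesis the paper leaves implicit --- that $\tree{T}\in\concTrees(U)$, which is what forces the disagreement to exist --- and you correctly handle the $\emptysymb$ and $k=m+1$ conventions that the paper's sketch glosses over.
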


\begin{proof}[Proof Sketch]
	Let $W$ be the set of words obtained by concatenating a row $r\in R$ and a column $c\in C$ such that $M(r,c)$ is filled.
	Clearly a concrete tree $\tree{T}=\tuple{W,\tau}$ that was covered by $\sytb{H}$ agrees in every node of all branches but the branch $i_1i_2\ldots i_m$ with the respective entries in all $\sytb{H}_i$s.
	Suppose the concrete counterexample is $\w=\biword{\isym_1}{\osym_1} \biword{\isym_2}{\osym_2}\biword{\isym_3}{\osym_3}\ldots \biword{\isym_{k}}{\osym_{k}}$. Then the concrete tree $\tree{T}$ must disagree
	with $\w$ at some position, call the first one it disagrees with $i$. Then $\tree{T}$ is  covered by $\sytb{H}_i$.
\end{proof}

%\subsection*{Conflicting concretizations}

\commentout{
\begin{claim}\label{clm:embeded-tree-subsumed-by-symb-tree}
	If $\aut{C}$ is embedded in $\aut{S}$ then $\sema{\aut{C}}\subseteq \sema{\aut{S}}$.
\end{claim}
\begin{proof}[Proof sketch]
	Every run $q_0,q_1,\ldots, q_m$ of $\aut{C}$ on $\iw=i_1i_2\ldots i_m$ has a corresponding run $q'_0 q'_1 \ldots q'_m$ of $\aut{S}$ on $\iw$ in which 
	$q'_i\in h(q_i)$ and 
	if the corresponding sequence of outputs of $\aut{C}$ is $\ow=o_1o_2\ldots o_m$
	and the corresponding sequence of outputs of $\aut{S}$ is $\theta_1\theta_2\ldots \theta_m$ 
	then $o_i\in \theta_i$ for every $1\leq i\leq m$.
\end{proof}}

%\subsubsection{The Incompatibility Measure} %Compatible vs. Conflicting Trasducers}
In the body of the paper we defined the incompatibility measure with respect to a pair of transducers.
We provide here a definition that generalizes it for a set of transducers. Then we show that 
if a set of transducers are incompatible with respect to $U$, then there exists two transducers contained in $U$
that are incompatible.

Let $U$ be a target language and $\aut{S}_1,\ldots,\aut{S}_k$ symbolic transducers embedded in $U$.
A word $\iw=\isym_1\isym_2\ldots\isym_m\isym_{m+1}$ is said to \emph{witness the incompatibility} of $\aut{S}_1,\ldots,\aut{S}_k$  wrt. $U$ if 
given ${\given{\sema{\aut{S}_i}}{\iw}=\theta^i_1\theta^i_2\ldots\theta^i_m}$ 
and letting $\theta_j=\cup_{1\leq i \leq m}\theta^i_j$ for $1\leq j\leq m$,
the result of $\smq(\isym_1\isym_2\ldots\isym_m\isym_{m+1}\oplus \theta_1\theta_2\ldots\theta_m\qm )$ is $\emptyset$, and the result for any respective prefix is not $\emptyset$.
The transducers $\aut{S}_1,\ldots,\aut{S}_k$ are said to be \emph{incompatible} (or \emph{conflicting}) wrt. $U$  if there exists a word witnessing their incompatibility, otherwise
 they are said to be \emph{compatible}.

\begin{lemma}\label{lem:split-by-two}
    If $\aut{S}_1,\ldots,\aut{S}_k$ are incompatible wrt to $U$ as witnessed by $\iw$,
    then there exists $1\leq i \neq j \leq k$ s.t. $\aut{S}$ and $\aut{S}'$ contained in $U$ 
    that are incompatible wrt to $U$ as witnessed by $\iw$.
\end{lemma}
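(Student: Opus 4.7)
The plan is to localize the incompatibility of the whole family at $\iw$ to some pair $\aut{S}_i,\aut{S}_j$ drawn from the original list. I would start by unpacking the hypothesis: $\smq(\sigma_1\cdots\sigma_{m+1}\oplus\theta_1\cdots\theta_m\qm)=\emptyset$ means that for every candidate $o\in\Gamma$ there is a concrete ``kill sequence'' $(o_1,\ldots,o_m)\in\theta_1\times\cdots\times\theta_m$ with $\sigma_1\cdots\sigma_{m+1}\oplus o_1\cdots o_m o\notin\prefixes{U}$, while the prefix non-emptiness built into the definition of $\iw$ guarantees that the concrete prefix $\sigma_1\cdots\sigma_m\oplus o_1\cdots o_m$ itself still lies in $\prefixes{U}$. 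Because $\theta_t=\bigcup_i\theta^i_t$, every coordinate of every kill sequence is attributable to at least one index $i\in\{1,\ldots,k\}$, giving a ``blame assignment'' that I will exploit.

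For each index $i$ I would fix an output $o^i\in\Gamma$ that $\aut{S}_i$ can emit at position $m+1$ on input $\iw$; such $o^i$ exists by exhaustiveness together with $\sema{\aut{S}_i}\subseteq\prefixes{U}$. The kill sequence witnessing the failure of $o^i$ in the union SMQ cannot lie entirely inside $\aut{S}_i$'s own outputs, for then the concrete bi-word produced would belong to $\sema{\aut{S}_i}$ and hence to $\prefixes{U}$, contradicting that the sequence kills $o^i$. Consequently some coordinate of that kill sequence comes from a partner index $j(i)\neq i$, and the pair $(\aut{S}_i,\aut{S}_{j(i)})$ already supplies this kill sequence inside $\theta^i_t\cup\theta^{j(i)}_t$. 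To finish I would elevate this ``per-output'' pairing to a single pair $(i,j)$ handling every $o\in\Gamma$ simultaneously, choosing $i$ extremally---e.g., so that the set of outputs $\aut{S}_i$ can emit at position $m+1$ is maximal among the family---and invoking a pigeonhole argument on the finite alphabet $\Gamma$ together with the prefix-SMQ non-emptiness to force a uniform partner $j$. The pair $\aut{S}_i,\aut{S}_j$ then satisfies every requirement in the definition of incompatibility witnessed by $\iw$: the prefix SMQs remain non-empty (they only shrink the $\theta_t$s), and every $o\in\Gamma$ is killed inside $\theta^i_t\cup\theta^j_t$.

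The main obstacle is precisely the uniform-partner step: a priori different outputs $o\in\Gamma$ could demand different partners, so the aggregation argument depends on exploiting the safety and realizability of $U$ to constrain which coordinates of kill sequences are admissible. The idea I would push is that if distinct outputs genuinely required distinct partners, one could combine their individual safe completions into a common valid $\theta_{m+1}$ for the full-union SMQ, contradicting the empty-answer hypothesis. Making this combinatorial step rigorous---in particular the bookkeeping of which transducer contributes which coordinate of each kill sequence, and the appeal to $\prefixes{U}$'s prefix-closure---is the delicate portion I expect to consume most of the work.
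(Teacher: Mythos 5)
You are attempting a strictly stronger statement than the lemma: you try to locate the incompatible pair \emph{among the original} transducers $\aut{S}_1,\ldots,\aut{S}_k$, whereas the lemma only asserts the existence of two transducers $\aut{S},\aut{S}'$ \emph{contained in $U$} --- in general these are merges of compatible subfamilies, not members of the list. The stronger statement is false, and the ``uniform partner'' aggregation step that you yourself flag as delicate is exactly where it breaks. Concretely: take a single input letter $\sigma$, let $\aut{S}_1,\aut{S}_2,\aut{S}_3$ emit $a$, $b$, $c$ respectively at step one, and let $U$ allow as second output $\{x,y\}$ after $a$, $\{y,z\}$ after $b$, and $\{x,z\}$ after $c$ (everything unconstrained from step three on). With $\iw=\sigma\sigma$ the full union gives $\smq(\sigma\sigma\oplus\{a,b,c\}\qm)=\emptyset$ with a non-empty prefix query, so $\iw$ witnesses the joint incompatibility; but the three pairwise queries at $\iw$ answer $\{y\}$, $\{z\}$ and $\{x\}$, all non-empty, so no pair drawn from the list is incompatible \emph{as witnessed by $\iw$}, which is what the lemma (and your plan) requires. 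Pairwise-safe completions need not share a common element --- subsets of $\Gamma$ have no Helly property --- so your closing idea (``combine their individual safe completions into a common valid $\theta_{m+1}$'' to derive a contradiction) cannot be made rigorous. There is also a smaller, but genuine, slip earlier: from the fact that one coordinate of the kill sequence for $o^i$ is blamed on some $j(i)\neq i$ you conclude that the pair $(\aut{S}_i,\aut{S}_{j(i)})$ ``already supplies this kill sequence''; it does not, since the remaining coordinates may be contributed by third transducers and need not lie in $\theta^i_t\cup\theta^{j(i)}_t$.

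The paper's proof goes the opposite way: by contraposition and \emph{merging}, rather than localization. If some two members of the family are compatible, they can be represented by a single symbolic transducer $\aut{S}_{12}$ contained in $U$ with $\sema{\aut{S}_{12}}=\sema{\aut{S}_1}\cup\sema{\aut{S}_2}$; since the per-position output unions $\theta_t$ along $\iw$ are unchanged by this replacement, every query occurring in the witness definition is literally the same, so $\iw$ still witnesses the incompatibility of the resulting family of $k-1$ transducers. Iterating, one either reaches two (possibly merged) transducers contained in $U$ that are incompatible with the same witness $\iw$, or merges everything into a single contained transducer whose outputs along $\iw$ cover all the $\theta_t$, contradicting the emptiness of the full query. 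In the example above this yields, e.g., the pair $(\aut{S}_{12},\aut{S}_3)$, which no choice of indices from the original list can replace. To repair your write-up, drop the pigeonhole aggregation and instead prove the merge step (compatibility of two contained transducers implies their union language is representable by one contained transducer); your unpacking of the $\smq$ semantics and of the prefix non-emptiness condition is fine and can be retained.
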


\begin{proof}[Proof sketch]
    Assume this is not the case. Then $\aut{S}_1$ and $\aut{S}_2$ are compatible.
    Therefore we can represent them by one symbolic transducer $\aut{S}_{12}$ (i.e. $\sema{\aut{S}_{12}}=\sema{\aut{S}_{1}}\cup\sema{\aut{S}_{2}}$), and
    $\aut{S}_{12},\aut{S}_3,\ldots,\aut{S}_k$ should still be incompatible wrt to $U$ with the same witness $\iw$.
    It follows that we can continue in the same fashion and represent pairs of transducers by one transducer
    until we can represent the original set of transducers by a pair of transducers that are still be incompatible wrt to $U$ with the same witness $\iw$.
    If we could have unite these two as well, then the original set would be compatible as well.
\end{proof}

The following claim asserts that a split occurs only if incompatibility
was discovered.

\begin{proposition}\label{prop:splitting}
	If the call $\smq(\sytb{H}(\iw)\cdot \biword{\isym}{\qm} )$ returns $\false$, 
	then there exists a pair of transducers $\aut{S}$ and $\aut{S}'$ contained in $U$ that are incompatible wrt. $U$
	and the word $\iw\isym$ witnesses their incompatibility.
\end{proposition}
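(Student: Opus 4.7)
The plan is to construct a finite collection of symbolic transducers contained in $U$ that are collectively incompatible (in the $k$-transducer sense generalized earlier in this appendix) with $\iw\isym$ as their incompatibility witness, and then to invoke Lemma~\ref{lem:split-by-two} to distill the required pair.

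First, I would observe that the prefix clause of the witness definition holds automatically: Alg.~\ref{alg:filltb} processes entries in order of increasing prefix length and refuses to commit a value unless the corresponding $\smq$ is non-empty, so every proper prefix of $\sytb{H}(\iw)\cdot\biword{\isym}{\qm}$ has a non-empty $\smq$ response. By the semantics of $\smq$, the entries $\theta_1,\ldots,\theta_m$ of $\sytb{H}(\iw)$ are maximal in the sense that every concretization of $\sytb{H}(\iw[..j])$ lies in $\prefixes{U}$ for every $j\le m$.

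Next, for each position $j\in\{1,\ldots,m\}$ and each $o\in\theta_j$ I would exhibit a symbolic transducer $\aut{S}_{j,o}$ contained in $U$ whose output at position $j$ on input $\iw$ contains $o$. Existence is argued by first choosing outputs $\osym_1\in\theta_1,\ldots,\osym_{j-1}\in\theta_{j-1}$ so that $\isym_1\cdots\isym_j\oplus\osym_1\cdots\osym_{j-1}o\in\prefixes{U}$ (available by the maximality observation above) and then continuing from the reached state using a contained regular tree for the residual of $U$, which exists because realizability and safety of $U$ are inherited by each residual. By construction the union of the outputs of the $\aut{S}_{j,o}$'s at position $j$ on $\iw$ equals precisely $\theta_j$, so the $\smq$ on this union reduces to $\smq(\sytb{H}(\iw)\cdot\biword{\isym}{\qm})=\emptyset$ by hypothesis. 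Combined with the prefix step this shows that the collection $\{\aut{S}_{j,o}\}$ is collectively incompatible wrt $U$ witnessed by $\iw\isym$, and Lemma~\ref{lem:split-by-two} then picks out two of them whose pairwise incompatibility is still witnessed by $\iw\isym$.

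The main obstacle lies in the second step: ensuring that each $\aut{S}_{j,o}$ is contained in $U$ \emph{globally}, not merely along the single path $\iw$. This requires a supporting observation that residuals of realizable safety languages remain realizable, which I would verify by exhibiting the subtree of a contained regular tree at the appropriate node as a realizer of the residual. Once this gluing argument is in place the rest is bookkeeping, and the prefix-minimality of the witness is handed to us for free by the order in which Alg.~\ref{alg:filltb} fills $\sytb{H}$.
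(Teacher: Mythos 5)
Your overall route is the same as the paper's: establish that some family of transducers contained in $U$ is collectively incompatible with witness $\iw\isym$, then invoke Lemma~\ref{lem:split-by-two} to extract a pair. Where you go further is in trying to actually \emph{construct} that family, which the paper's sketch merely asserts to exist (and hedges by allowing ``finite/infinite state'' transducers). Unfortunately the construction as written has a genuine gap. The key step builds $\aut{S}_{j,o}$ by picking a concrete prefix $\isym_1\cdots\isym_j\oplus\osym_1\cdots\osym_{j-1}o\in\prefixes{U}$ and then ``continuing from the reached state using a contained regular tree for the residual of $U$.'' But membership of a finite bi-word in $\prefixes{U}$ does not guarantee that the residual after that bi-word is realizable: your proposed verification (take the subtree of a contained regular tree at the appropriate node) only works if that tree actually passes through the chosen output sequence $\osym_1\cdots\osym_{j-1}o$, and nothing forces it to. The construction behind Claim~\ref{clm:exhaustive-not-realizable} can be planted as a residual: $U$ may admit $\biword{\isym_1}{o}$ as a prefix (so $o\in\theta_1$) while no contained tree, regular or otherwise, outputs $o$ at the first step. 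So the transducers $\aut{S}_{j,o}$ need not exist for every $o\in\theta_j$, and the claim that residuals of realizable safety languages are realizable is false in general.

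A second, smaller problem: even when $\aut{S}_{j,o}$ exists, it is only controlled at position $j$; at later positions $j'>j$ its outputs along $\iw$ come from the glued residual tree and need not lie inside $\theta_{j'}$ (the maximality of $\theta_{j'}$ returned by \smq{} is relative to \emph{all} of $\theta_1,\ldots,\theta_{j'-1}$ being permitted earlier, whereas a transducer that commits to specific earlier outputs may legitimately emit more later). Hence the union of outputs at position $j'$ may strictly contain $\theta_{j'}$. Emptiness of the final \smq{} survives this enlargement by monotonicity, but the clause of the witness definition requiring every proper prefix query to be non-empty can then fail, so $\iw\isym$ need not witness the incompatibility of the family you built. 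To salvage the argument you would need the family's outputs along $\iw$ to be confined to the table entries $\theta_1,\ldots,\theta_m$ exactly, and an existence argument for contained continuations that does not presuppose realizability of arbitrary residuals --- which is precisely the part the paper's own sketch leaves unproven.
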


\begin{proof}[Proof sketch]
    Let~$\iw=\isym_1\ldots \isym_m$, and
    let~$\theta_i=\sytb{H}(\isym_1\ldots \isym_i)$.
	It follows from the fact that  $\smq(\sytb{H}(\iw)\cdot \biword{\isym}{\qm} )$ returns $\false$
	and the definition of $\sytb{H}(\iw)$ that there exists no $\theta\subseteq \Gamma$ but $\theta=\emptyset$
	for which $\biword{\isym_1}{\theta_1}\biword{\isym_2}{\theta_2}\ldots\biword{\isym_m}{\theta_m}\biword{\isym}{\theta}\subseteq U$.
	Thus there must exits a set of finite/infinite states transducers embedded in $U$ that their incompatibility is witnessed by $\iw\isym$.
	It follows from Lemma~\ref{lem:split-by-two} that there exists $\aut{S}$ and $\aut{S}'$
	contained in $U$ 
    that are incompatible wrt to $U$ as witnessed by $\iw$.
\end{proof}
\newpage

\section{Running Examples}
Below we provide the details of run of the algorithm on the examples
provided in the body of the paper. In particular, we provide
the intermediate tables, the conjectured symbolic automata, and 
the received counterexamples.

\subsection{Example~\ref{example1:hyp:0}}
Our first example is provided in the introduction under \emph{An illustrative example}.
Recall that it considers the unknown language $U$ consisting of behaviors that grant requests 
either in the step where the  request was received or in the next step. The input variable is $r$ (request), and the single output
variable is $g$ (grant). Recall also that as explained there, there exist an infinite number of concrete transducers realizing $U$.

\begin{figure}[!ht]
    \centering
    \includegraphics[scale=1,page=2]{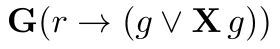}
    \caption{The initial table, where $R$ and $C$ are $\{\epsilon, r, \overline{r}\}$, and $ B=\{ \epsilon \}$. For further readability we shaded rows not in $B$.}
    \label{example1:table:0}
\end{figure}

\begin{figure}[!ht]
    \centering
    \includegraphics[page=3]{examples/example1.pdf}
    \caption{Subsequent filled table.}
    \label{example1:table:1}
\end{figure}

\begin{figure}[!ht]
    \centering
    \includegraphics[page=5]{examples/example1.pdf}
    \caption{The subsequent table after closing. Since no row in $B$ covers rows $rr$ and $r\overline{r}$, this table is not yet closed.}
    \label{example1:table:2}
\end{figure}

\begin{figure}[!ht]
    \centering
    \includegraphics[page=7]{examples/example1.pdf}
    \caption{Subsequent closed table. 
    All rows in $R$ are now covered by a row in $B$.}
    \label{example1:table:3}
\end{figure}

\begin{figure}[!ht]
    \centering
    \includegraphics[page=8]{examples/example1.pdf}
    \caption{Subsequent rebased table. 
    From this table we can extract the symbolic transducer demonstrated in Fig.\ref{example1:hyp:0}.}
    \label{example1:table:4}
\end{figure}

\newpage

\subsection{Example~\ref{example2:hyp:0}}
Our second example is also provided in the introduction under \emph{An illustrative example}.
In this example, in addition to the previous description, the language disallows two subsequent grants.
%(and hence, in particular, there are never two subsequent requests). 
In this case, there exist two
conflicting implementations realizing the language (as explained there) therefore
we will see a call to \emph{Split} in the course of running the example.

\begin{figure}[!ht]
    \centering
    \includegraphics[scale=1,page=2]{examples/example2.pdf}
    \caption{The initial table $T_0$, where $R$ and $C$ are $\{\epsilon, r, \overline{r}\}$, and $ B=\{ \epsilon \}$. We again shade rows that are not in $B$.}
    \label{example2:table:0}
\end{figure}

\begin{figure}[!ht]
    \centering
    \includegraphics[page=3]{examples/example2.pdf}
    \caption{By allowing both $g$, and $\overline{g}$ in $M(\epsilon, r)$, entry $M(r,r)$ becomes empty. 
    Thus, splitting $T_0$ is necessary.}
    \label{example2:table:1}
\end{figure}

\begin{figure}[!ht]
    \centering
    \resizebox{\columnwidth}{!}
    {
      \begin{tabular}{c|c}
        \includegraphics[page=4]{examples/example2.pdf} &
        \includegraphics[page=12]{examples/example2.pdf}
      \end{tabular}
    }
    \caption{Splitting $T_0$ results in $T_1$ and $T_2$; both of which, we must explore as of now.}
    \label{example2:table:2}
\end{figure}

\begin{figure}[!ht]
    \centering
    \resizebox{\columnwidth}{!}
    {
      \begin{tabular}{c|c}
        \includegraphics[page=5]{examples/example2.pdf} &
        \includegraphics[page=13]{examples/example2.pdf}
      \end{tabular}
    }
    \caption{Subsequent filled $T_1$, and $T_2$.}
    \label{example2:table:3}
\end{figure}

\begin{figure}[!ht]
    \centering
    \resizebox{\columnwidth}{!}
    {
      \begin{tabular}[t]{c|c}
        \imagetop{\includegraphics[page=7]{examples/example2.pdf}} &
        \imagetop{\includegraphics[page=15]{examples/example2.pdf}}
      \end{tabular}
    }
    \caption{$T_1$, and $T_2$ after closing. Since neither of them cover row $rr$, both are not yet closed.}
    \label{example2:table:5}
\end{figure}

\begin{figure}[!ht]
    \centering
    \resizebox{\columnwidth}{!}
    {
      \begin{tabular}[t]{c|c}
        \imagetop{\includegraphics[page=9]{examples/example2.pdf}} &
        \imagetop{\includegraphics[page=17]{examples/example2.pdf}}
      \end{tabular}
    }
    \caption{Closed $T_1$, and $T_2$.}
    \label{example2:table:7}
\end{figure}

\begin{figure}[!ht]
    \centering
    \resizebox{\columnwidth}{!}
    {
      \begin{tabular}[t]{c|c}
        \imagetop{\includegraphics[page=10]{examples/example2.pdf}} &
        \imagetop{\includegraphics[page=18]{examples/example2.pdf}}
      \end{tabular}
    }
    \caption{Final $T_1$, and $T_2$ after computing the cover-set $\nabla$ and minimizing $B$. 
    From these tables we can extract the symbolic transducers depicted in Fig.~\ref{example2:hyp:0}.}
    \label{example2:table:8}
\end{figure}

\subsection{Example 3}
Last we consider the example given bellow Corollary~\ref{cor:termination}. This is an example
showing the algorithm converges in spite of the fact that the language contains a non-regular
language, in the sense that its rank is not finite.

\begin{figure}[!ht]
    \centering
    \begin{tabular}[t]{c|c}
        \includegraphics[page=1]{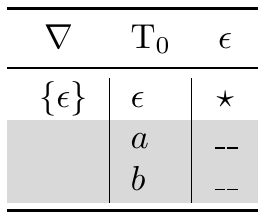} &
        \includegraphics[page=2]{examples/example3.pdf}
    \end{tabular}
    \caption{The initial table is on left, and on right it is filled.}
    \label{example3:table:0}
\end{figure}

\begin{figure}[!ht]
    \centering
    \includegraphics[page=4]{examples/example3.pdf}
    \caption{Table after closing, since rows $aa$ and $ab$ are not covered, the table is not closed yet.}
    \label{example3:table:1}
\end{figure}

\begin{figure}[!ht]
    \centering
    \includegraphics[page=6]{examples/example3.pdf}
    \caption{Subsequent closed table.}
    \label{example3:table:2}
\end{figure}

\begin{figure}[!ht]
    \centering
    \includegraphics[page=8]{examples/example3.pdf}
    \caption{Subsequent rebased table. We removed $a$ from $B$ and $\nabla$ of all rows.}
    \label{example3:table:3}
\end{figure}

\begin{figure}[!ht]
    \centering
    \includegraphics[page=9]{examples/example3.pdf}
    \caption{We extract the above conjecture transducer from the last table; the oracle provides no counterexamples for this transducer.}
    \label{example3:table:4}
\end{figure}

\newpage
\section{Omitted Implementation Details}\label{app:imp-details}

We implemented Algorithm~\ref{alg:sfour} and executed it on a number of examples, described in Section~\ref{sec:experiments-desc}. Table~\ref{tab:all-results} presents the results in terms of the
number of queries, the number of tables, the number of splits, the size of the resulting transducer, 
and the running time. In these experiments, the target bi-languages were generated from specifications
of arbiters. 
For most experiments, we implemented membership oracles through a logic-based method. 
We first transforms an expanded temporal formula into a disjunctive normal form (DNF), 
and then evaluated membership of a bi-word using Boolean falsifiability. 

As Table~\ref{tab:all-results} shows, the oracles consumes a significant amount of
time, often much more than the learning algorithm \SFour itself.
This is since Boolean falsifiability on DNF is NP-Hard.\footnote{
	To use presented method for synthesizing reactive systems from temporal specifications,
	we suggest to implement a membership oracles incorporating all known optimizations introduced in translating temporal formulas to $\omega$-automata.}

\subsection{Possible Improvements}
We suggest several improvements of the implementation of the algorithm.

First, it is easy to see that the proposed method allows parallel exploration of symbolic tables. 
Parallel calls to multiple oracle instances while filling table entries that are not related would also
lead to a performance improvement.

A further improvement can be achieved by adding more sophisticated heuristics for the
traversal of the tree of symbolic tables.

\subsection{Experiments Description}\label{sec:experiments-desc}
We provide the experiments in terms of \emph{Linear Temporal Logic} (LTL) formulas~\cite{Pnueli77}.
For the reader unfamiliar with LTL, we mention that 
the formula $\G \varphi$ (read \emph{globally} $\varphi$)
states that the formula $\varphi$ should hold on every cycle starting the current cycle,
the formula $\X \varphi$  (read \emph{next} $\varphi$) states that $\varphi$ should
hold on the next cycle, and the formula $\X^n \varphi$ abbreviates $\underbrace{\X \X \ldots \X}_{\text{$n$ times}} \varphi$.

\subsection*{Experiment 1}
Each request is granted either in the current step or the subsequent step. Please note that this is the initial illustrative example.
\[
\G(r \to (g \lor \X g))\;.
\]

\subsection*{Experiment 2} 
Each request is granted either in the current step or the subsequent step; meanwhile, grant is always lowered subsequently. Please note that this is the secondary illustrative example.
\[
\G(r \to (g \lor \X g)) \land \G(g \to \X \overline{g})\;.
\]

\subsection*{Experiment 3} 
Every request from the fourth step onward is granted in the subsequent step; that is,
\[
\X^4\G(r \to \X g)\;.
\]

\subsection*{Experiment 4} 
This experiments shows it is possible to have alphabets other than Boolean literals.
We fix the system's interface to $\Sigma=\{a\}$, $\Gamma=\{1,2\}$. The target hidden specification is
the first three inputs are immediately responded with $1$; thereupon, with $2$.

\subsection*{Experiment 5} 
Each request is granted in the current step; meanwhile, the output sequence $gg\overline{g}$ is forbidden.
\[
\G( r \to g ) \land \G\neg(g \land \X g \land \X^2 \overline{g})
\]

\subsection*{Experiment 6} 
The target hidden language is the irregular one we described bellow Corollary~\ref{cor:termination}.

\subsection*{Experiment 7} 
The target specification is an arbiter of $n$ clients.
Each client's request is granted latest at $n-1$ steps after its arrival:
\[
\bigwedge_{i\leq n } \G (r_i \to \bigvee_{t < n } \X^t g_i )\;,
\]
and grants are always mutually exclusive:
\[
\bigwedge_{i\leq n }\bigwedge_{i<j\leq n }\G( \overline{g}_i \lor \overline{g}_j )\;.
\]

\subsection*{Experiment 8} 
The target specification is an arbiter of $n$ clients.
If a client's request goes down then, grant follows in the next step:
\[
\bigwedge_{i\leq n } \G (\overline{r}_i \land g_i \to \X \overline{g}_i )\;,
\]
also if exists an open request then, any open request should be granted in the next step:
\[
\G ( \bigvee_{i\leq n } r_i  \to \bigvee_{i\leq n } (r_i \land \X g_i) )\;,
\]
and finally, grants are always mutually exclusive:
\[
\bigwedge_{i\leq n }\bigwedge_{i<j\leq n }\G( \overline{g}_i \lor \overline{g}_j )\;.
\]

\subsection*{Experiment 9} 
The target language of an arbiter whose $1$\textsuperscript{st} and $n$\textsuperscript{th} outputs are indirectly related as follows:
\[
(g \to \X^{n} g)\,;
\]
meanwhile, as of the $1$\textsuperscript{st} step up to $n$\textsuperscript{th} step (excluding them) grant immediately agrees with request, that is:
\[
 \bigwedge_{0 < i < n } \X^i(r \leftrightarrow g)\;,
\]
also, the $n$\textsuperscript{th} output is determined by the value of arbiter's previous output; that is,
\[
\X^{n-1}g \leftrightarrow \X^{n} g\,.
\]
Finally, as of $n$\textsuperscript{th} step (excluding that step), grants agree with requests with a delay of $n$ steps; that is,
\[
 \X^{n+1}\G(r \leftrightarrow \X^ng)\;.
\]

\subsection*{Experiment 10} 
The target language of a parameterized arbiter that implements a combination lock using Fibonacci series.
That is, given a parameter $n$, following is the initialization sequence of the arbiter:
\[
\bigwedge_{0<i<n} \X^{\operatorname{fib}(i+2)} r
\]
if initialized correctly, grant immediately agrees with the request, that is
\[
 (\bigwedge_{0<i<n} \X^{\operatorname{fib}(i+2)} r) \to \X^{\operatorname{fib}(n+1)} \G(r \leftrightarrow g)
\]
otherwise, grant is never raised, that is
\[
(\neg \bigwedge_{0<i<n} \X^{\operatorname{fib}(i+2)} r) \to \G(\overline{g})\;.
\]

\end{document}